\documentclass[11pt]{amsart}
	\usepackage{amsmath}
	\usepackage{amsthm}
	\usepackage{amssymb}
	\usepackage{version}
	\usepackage[small]{complexity}
	\usepackage{fullpage}
	\usepackage{stmaryrd}
	\usepackage[breaklinks]{hyperref}

	\numberwithin{equation}{section}
	\theoremstyle{plain}
	\newtheorem{thm}[equation]{Theorem}\newtheorem*{thm*}{Theorem}
	\newtheorem{cor}[equation]{Corollary}\newtheorem*{cor*}{Corollary}
	\newtheorem{lem}[equation]{Lemma}\newtheorem*{lem*}{Lemma}
	\newtheorem{prop}[equation]{Proposition}\newtheorem*{prop*}{Proposition}
	\newtheorem*{clm*}{Claim}
	\newtheorem*{conj*}{Conjecture}
	\theoremstyle{definition}
	\newtheorem{defn}[equation]{Definition}\newtheorem*{defn*}{Definition}
	\newtheorem*{ex*}{Example}
	\newtheorem*{notn*}{Notation}
	\theoremstyle{remark}
	\newtheorem{rmk}[equation]{Remark}\newtheorem*{rmk*}{Remark}

	\newcommand{\svert}{\ensuremath{|}}
	\newcommand{\lib}{\ensuremath{\llbracket}}
	\newcommand{\rib}{\ensuremath{\rrbracket}}

	\DeclareMathOperator{\rank}{rank}
	\DeclareMathOperator{\mrank}{m-rank}
	\DeclareMathOperator{\chara}{char}
	\DeclareMathOperator{\tr}{tr}

	\title{Tensor Rank: Some Lower and Upper Bounds}
	\author{Boris Alexeev}\thanks{Email: balexeev@math.princeton.edu,
	Department of Mathematics, Princeton University, Fine Hall, Washington
	Road, Princeton, NJ 08544-1000, Supported by an NSF Graduate Research
	Fellowship}
	\author{Michael Forbes}\thanks{Email: miforbes@mit.edu, 
	Department of Electrical Engineering and Computer Science, MIT CSAIL, 32
	Vassar St., Cambridge, MA 02139, Supported by NSF grant
	6919791 and by MIT CSAIL}
	\author{Jacob Tsimerman}\thanks{Email: jtsimerm@math.princeton.edu,
	Department of
	Mathematics, Princeton University, Fine Hall, Washington
	Road, Princeton, NJ 08544-1000}
	\date{2010-12-08}

	\newlang{\TRANK}{TENSOR-RANK}

\begin{document}
\begin{abstract}
	The results of Strassen~\cite{strassen-tensor} and Raz~\cite{raz} show that good enough
	\textit{tensor rank} lower bounds have implications for algebraic circuit/formula lower
	bounds.

	We explore tensor rank lower and upper bounds, focusing on explicit
	tensors. For odd $d$, we construct field-independent explicit 0/1
	tensors $T:[n]^d\to\mathbb{F}$ with rank at least $2n^{\lfloor
	d/2\rfloor}+n-\Theta(d\lg n)$.  This matches (over $\mathbb{F}_2$) or
	improves (all other fields) known lower bounds for
	$d=3$ and improves (over any field) for odd $d>3$.

	We also explore a generalization of permutation matrices, which we denote permutation
	tensors.  We show, by counting, that there exists an order-3 permutation tensor with
	super-linear rank. We also explore a natural class of permutation
	tensors, which we call group tensors. For any group $G$, we
	define the group tensor $T_G^d:G^d\to\mathbb{F}$, by $T_G^d(g_1,\ldots,g_d)=1$ iff
	$g_1\cdots g_d=1_G$.  We give two upper bounds
	for the rank of these tensors.  The first uses representation theory and works over large
	fields $\mathbb{F}$, showing (among other things) that $\rank_\mathbb{F}(T_G^d)\le |G|^{d/2}$.  We
	also show that if this upper bound is tight, then super-linear tensor
	rank lower bounds would follow.  The second upper bound uses
	interpolation and only works for abelian $G$, showing that over any field $\mathbb{F}$ that
	$\rank_\mathbb{F}(T_G^d)\le O(|G|^{1+\lg d}\lg^{d-1}|G|)$.  In either case, this shows that
	many permutation tensors have far from maximal rank, which is very
	different from the matrix case and thus eliminates many natural
	candidates for high tensor rank.

	We also explore monotone tensor rank.  We give explicit 0/1 tensors $T:[n]^d\to\mathbb{F}$
	that have tensor rank at most $dn$ but have monotone tensor rank exactly $n^{d-1}$.  This is
	a nearly optimal separation.
\end{abstract}
\maketitle
\newpage

\section{Introduction}

Most real-world computing treats data as boolean, and thus made of bits.  However, for some
computational problems this viewpoint does not align with algorithm design.  For example, the
determinant is a polynomial, and computing it typically does not require knowledge of the underlying
bit representation and rather treats the inputs as numbers in some field.  In such settings, it is
natural to consider the computation of the determinant as computing a polynomial over the underlying
field, as opposed to computing a boolean function.  

When computing polynomials, just as when computing boolean functions, there are many different
models of computation to choose. The most general is the \textit{algebraic circuit} model.
Specifically, to compute a polynomial $f$ over a field $\mathbb{F}$ in variables $x_1,\ldots,x_n$,
one defines a directed acyclic graph, with exactly $n$ source nodes (each labeled with a distinct
variable), a single sink node (which is thought of as the output), and internal nodes labeled with
either $+$, meaning addition, or $\times$, meaning  multiplication.  Further, each non-leaf is
restricted to have at most two children nodes.  Computation is defined in the natural way: each
non-source node computes the (polynomial) function of its children according to its label, and
source nodes compute the variable they are labeled with.  One can also consider the
\textit{algebraic formula} model, which requires the underlying graph to be a tree.  In both of
these models, we define the \textit{size} of the circuit/formula to be the total number of nodes in the
graph.

Neither the algebraic circuit nor the formula model are well understood, in the sense that while it
can be shown that there exist polynomials which require large circuits for their computation,
no explicit\footnote{A polynomial is said to be \textit{explicit} if the
coefficient of a monomial $\vec{X}^{\vec{\alpha}}$ is computable by algebraic circuits
of size at most $\poly(|\vec{\alpha}|)$.} examples of such polynomials are known.  Indeed, finding such lower bounds for explicit
functions is considered one of the most difficult problems in computational complexity theory.
Several lower bounds are known, such as Strassen's~\cite{strassen-lb} result (using the result of
Baur-Strassen~\cite{baur-strassen}) that the degree $n$
polynomial $\sum_{i=1}^n x_i^n$ requires $\Omega(n\lg n)$ size circuits.  However, no super-linear
size lower bounds are known for constant-degree polynomials.  In the case of formulas,
Kalorkoti~\cite{kalorkoti} proved a quadratic-size lower bound for an explicit function.

One avenue for approaching improvements for both of these models is by proving lower bounds for
\textit{tensor rank}.  A tensor is a generalization of a matrix, and an order-$d$ tensor is defined
as a function $T:[n]^d\to\mathbb{F}$, where $[n]$ denotes the set $\{1,\ldots,n\}$.  A tensor is
\textit{rank one} if it can be factorized as $T(i_1,\ldots,i_d)=\prod_{j=1}^d\vec{v}_j(i_j)$ for
$\vec{v}_j\in\mathbb{F}^n$.  The rank of a tensor is the minimum $r$ such that $T=\sum_{k=1}^r S_k$ for
rank one tensors $S_k$.  It can be seen that an order-2 tensor is a matrix, and the notions of rank
coincide.  It can also be observed that the rank of an $[n]^d$ tensor is always at most $n^{d-1}$, and a
counting-type argument shows that over any field there exist tensors of rank at
least $n^{d-1}/d$.  A tensor is called \textit{explicit} if $T(i_1,\ldots,i_d)$ can be
computed by algebraic circuits of size at most polynomial in $\poly(d\lg n)$,
that is, at most polynomial in the size of the input $(i_1,\ldots,i_d)$.  All
explicit tensors in this paper will also be uniformly explicit.

Interest in tensors arise from their natural correspondence with certain
polynomials.  Consider the sets of variables $\{X_{i,j}\}_{i\in[n],j\in[d]}$.
Given a tensor $T:[n]^d\to\mathbb{F}$, one can define the polynomial
\[f_T(\{X_{i,j}\}_{i\in[n],j\in[d]})=\sum_{i_1,\ldots,i_d\in[n]}T(i_1,\ldots,i_d)\prod_{j=1}^d
X_{i_j,j}\] This connection was used in the following two results. First,
Strassen~\cite{strassen-tensor} showed that
\begin{thm*}[Strassen~\cite{strassen-tensor}, see also~\cite{vzg}]
	For a tensor $T:[n]^3\to\mathbb{F}$, the circuit size complexity of $f_T$ is
	$\Omega(\rank(T))$.
\end{thm*}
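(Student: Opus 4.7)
The plan is to show that any algebraic circuit $C$ of size $s$ computing $f_T$ yields a decomposition of $T$ into $O(s)$ rank-one tensors, equivalently $\rank(T) \leq O(s)$. Since $f_T$ is tri-linear of tri-degree $(1,1,1)$ in the three variable groups $X_{*,1}, X_{*,2}, X_{*,3}$, I would first apply standard homogenization followed by tri-homogenization, at the cost of only a constant-factor blowup in size, producing a circuit in which each gate computes a polynomial of a single fixed tri-degree. Because the output $f_T$ is multilinear in each group, only gates of tri-degree in $\{0,1\}^3$ contribute to it, so the retained part of the circuit still has size $O(s)$.

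Next, classify each multiplication gate producing a $(1,1,1)$-output by the pair of tri-degrees of its two inputs. Up to swapping operands, there are exactly three cases: the inputs have tri-degrees (a) $(1,0,0)$ and $(0,1,1)$, (b) $(0,1,0)$ and $(1,0,1)$, or (c) $(0,0,1)$ and $(1,1,0)$. Write $f_T = f_T^{(a)} + f_T^{(b)} + f_T^{(c)}$ by grouping the top $(1,1,1)$-multiplications accordingly; it then suffices to decompose each summand into few rank-one trilinear forms.

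Consider case (a); the others are symmetric. Each contributing top gate $g$ outputs $\ell_g(X_{*,1}) \cdot q_g(X_{*,2}, X_{*,3})$, where $\ell_g$ is a linear form and $q_g$ is a bilinear form computed by a sub-circuit. The key lemma is that any $(0,1,1)$-gate in the tri-homogenized circuit is an $\mathbb{F}$-linear combination of outputs of $(0,1,1)$-multiplication gates $h$, each of which itself factors as $\alpha_h(X_{*,2}) \beta_h(X_{*,3})$; this follows by structural induction, since there are no input variables of tri-degree $(0,1,1)$ and the only ways to build such a gate are additions of $(0,1,1)$-gates or products of a $(0,1,0)$-gate with a $(0,0,1)$-gate (modulo scalar multiplication). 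Writing $q_g = \sum_h d_{g,h} \alpha_h \beta_h$ and substituting into $f_T^{(a)} = \sum_g c_g \ell_g q_g$, then exchanging the order of summation, gives $f_T^{(a)} = \sum_h L_h(X_{*,1}) \alpha_h(X_{*,2}) \beta_h(X_{*,3})$ with $L_h := \sum_g c_g d_{g,h} \ell_g$. This exhibits $f_T^{(a)}$ as a sum of at most $N_a$ rank-one tensors, where $N_a$ is the number of $(0,1,1)$-multiplication gates in the circuit.

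Summing the three cases gives $\rank(T) \leq N_a + N_b + N_c = O(s)$, as the three relevant bilinear tri-degrees are disjoint. The main technical obstacle I anticipate is rigorously establishing the ``linear combination of multiplication gates'' lemma in a circuit model that handles scalar multiplications cleanly; this is best done by working in a model in which addition gates carry field coefficients so that every non-input, non-constant gate is a genuine product of two polynomial inputs. The homogenization and tri-homogenization reductions are otherwise standard, but ensuring that they preserve the semantic structure needed for the re-grouping step is where the technical care lies.
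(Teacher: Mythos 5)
The paper does not prove this theorem; it is quoted from Strassen~\cite{strassen-tensor} with a pointer to~\cite{vzg}, so there is no internal proof to compare against. Your outline is essentially the standard textbook argument for Strassen's result and is correct: tri-homogenize with constant-factor blowup, observe that every gate of tri-degree $(0,1,1)$ (and symmetrically $(1,0,1)$, $(1,1,0)$, and $(1,1,1)$) is an $\mathbb{F}$-linear combination of outputs of multiplication gates realizing a nontrivial split of that tri-degree, and regroup to write $f_T$ as a sum of $O(s)$ products of three linear forms, i.e.\ $O(s)$ simple tensors. Two small points to tighten. First, your case analysis of the top $(1,1,1)$-multiplications omits the split $(0,0,0)\times(1,1,1)$, where a gate merely rescales an existing $(1,1,1)$-part; this is not a fourth source of rank-one terms but it does mean you should apply your ``linear combination of genuine products'' lemma at the top level too, rather than assuming $f_T$ decomposes directly over top multiplication gates of types (a), (b), (c). Second, a single multiplication gate's $(1,1,1)$-component can receive contributions from several splits and from both operand orderings, so each gate may account for a bounded constant number of rank-one terms rather than one; this only affects the hidden constant in $O(s)$. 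With those caveats the argument goes through and yields $\rank(T)\le O(s)$, hence the claimed $\Omega(\rank(T))$ circuit lower bound.
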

Thus, any super-linear lower-bound for order-3 tensor rank gives a super-linear lower bounds for
general arithmetic circuits, even for the constant degree polynomials.   More recently,
Raz~\cite{raz} proved
\begin{thm*}[Raz~\cite{raz}]
	For a family of tensors $T_n:[n]^{d(n)}\to\mathbb{F}$ with
	$\rank(T_n)\ge n^{(1-o(1))d(n)}$ and
	$\omega(1)\le d(n)\le O(\log n/\log\log n)$, the formula-size complexity of $f_{T_n}$
	is super-polynomial.
\end{thm*}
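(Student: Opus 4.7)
The plan is to establish the contrapositive: if $f_{T_n}$ admits a formula of polynomial size $s = n^{O(1)}$, then $\rank(T_n) \leq n^{o(d)}$, contradicting the hypothesis $\rank(T_n) \geq n^{(1-o(1))d}$. The centerpiece is a key lemma: \emph{for any set-multilinear $f \in \mathbb{F}[X^{(1)},\ldots,X^{(d)}]$ with $|X^{(j)}| = n$ that is computed by a formula of size $s$, the associated tensor satisfies $\rank(T) \leq (sn)^{O(\log d)}$.} Granting this, $s \leq n^{O(1)}$ yields $\rank(T) \leq n^{O(\log d)}$, and since $d = \omega(1)$ forces $d/\log d \to \infty$, this is $n^{o(d)}$, producing the desired contradiction.

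To prove the key lemma I would proceed in three stages. First, normalize the formula: via standard homogenization together with restriction of each variable set to its ``correct'' slot, convert to a set-multilinear multi-homogeneous formula of size $\mathrm{poly}(s,n,d)$ in which every gate $v$ computes a polynomial set-multilinear over some $\mathcal{S}_v \subseteq [d]$; then invoke Brent's theorem to bring the depth down to $O(\log s)$ at polynomial cost. Second, establish a balanced-decomposition step: any set-multilinear formula of size $s$ over $d$ variable sets can be written as a sum of $\mathrm{poly}(s,d)$ products $g \cdot h$, where $g$ is set-multilinear over some $\mathcal{S}$ with $d/3 \leq |\mathcal{S}| \leq 2d/3$, $h$ is set-multilinear over $[d] \setminus \mathcal{S}$, and each of $g, h$ admits a formula of size at most $s$. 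Third, iterate this decomposition for $O(\log d)$ levels until every factor is a linear form in a single variable set; the resulting expression presents $T$ as a sum of at most $\mathrm{poly}(s,d)^{O(\log d)} \leq (sn)^{O(\log d)}$ rank-one tensors.

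The main obstacle is the balanced-degree decomposition step. In each subformula one must locate a gate whose subtree contributes between $d/3$ and $2d/3$ of the variable sets to the polynomial at the root, and split around it while keeping the sizes of both resulting pieces under control. The key resource is that the underlying computation graph is a tree, so each internal node appears on a unique path from the root; one can therefore follow the path along which the ``heaviest'' subset of variable sets is carried and halt at the first gate dropping below the $2d/3$ threshold. This treeness is precisely why the argument is tailored to formulas rather than circuits, consistent with the conclusion being a formula-size (rather than circuit-size) lower bound.
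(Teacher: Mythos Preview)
The paper does not prove this theorem: it is stated in the introduction as a result of Raz~\cite{raz} and is cited purely as motivation for studying tensor rank. There is therefore no ``paper's own proof'' to compare against; your proposal is a sketch of a proof for a result the present paper simply quotes.

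That said, your outline is broadly in the spirit of Raz's actual argument (balance the formula, then recursively split at a gate of intermediate degree to obtain an expression as a short sum of products of lower-degree set-multilinear pieces, iterating $O(\log d)$ times). One point you gloss over is precisely where the hypothesis $d(n)\le O(\log n/\log\log n)$ is used. You write that the formula can be made ``set-multilinear multi-homogeneous of size $\mathrm{poly}(s,n,d)$,'' but the set-multilinearization step is exactly where a super-polynomial blowup in $d$ can occur, and the upper bound on $d$ is what keeps the total cost polynomial in $n$. If the conversion were genuinely $\mathrm{poly}(s,n,d)$ for all $d$, the theorem would hold with no upper bound on $d$, which it does not. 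So if you intend this as a self-contained proof sketch, you should make explicit how the degree constraint enters the size accounting.
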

Thus, while Strassen's result cannot be used to prove super-quadratic circuit-size lower bounds
(because of the upper bounds on order-3 tensor rank), Raz's result shows that tensor rank could be
used to prove very strong lower-bounds.  These results motivate a study of
tensor rank as a model of computation in of itself.

\section{Prior Work}

Strassen's connection between order-3 tensor rank and circuit complexity further
established a close connection between tensor rank to what is known as
\textit{bilinear complexity}.  As several important problems, such as matrix
multiplication and polynomial multiplication, are bilinear, one can study their
bilinear complexity, and thus their order-3 tensor rank.  We interpret various
prior results in the language of tensor rank. For the matrix
multiplication (which corresponds to a tensor of size
$[n^2]\times[n^2]\times[n^2]$), Shpilka~\cite{shpilka}
showed that the tensor rank
is at least $3n^2-o(n^2)$ over $\mathbb{F}_2$, and Bl\"{a}ser~\cite{blaser}
earlier showed that over any field the tensor rank is at least
$2.5n^2-\Theta(n)$. For polynomial multiplication (which corresponds to a
tensor of size $[2n-1]\times[n]\times[n]$), Kaminski~\cite{kaminski}
showed that the tensor rank over $\mathbb{F}_q$ is known to be
$(3+1/\Theta(q^3))n-o(n)$ and earlier work by Brown and
Dobkin~\cite{brown-dobkin} showed that over $\mathbb{F}_2$ the tensor rank is at
least $3.52n$.  Lower bounds for these problems seem difficult, in part because strong
upper bounds exist for both matrix multiplication and polynomial multiplication.

This work attempts to prove tensor rank lower bounds for \textit{any} explicit function,
not just problems of prior interest such as matrix or polynomial multiplication.  Previous work in
this realm include that of Ja'Ja'~\cite{jaja} (see their Theorem 3.6), who used the Kronecker theory of
pencils to show tensor rank lower bounds of $1.5n$ for $[n]\times[n]\times[2]$ tensors, for large
fields.  The work was later expanded by Sumi, Miyazaki, and Sakata~\cite{sumi} to smaller fields.
However, in these works the rank is shown to be at most $1.5n$, so seemingly cannot be pushed
further.

It is also worth noting that H{\aa}stad proved~\cite{20100105.1,20100105.2} that
determining if the tensor rank of $T:[n]^3\to\mathbb{F}$ is at most $r$ is
\NP-hard, for $\mathbb{F}$ finite or the rationals (the problem is also known to
be within $\NP$ for finite $\mathbb{F}$, but not known for the rationals).
Implicit in his work is a tensor rank lower bound (for explicit order-3 tensors)
of $4n/3$.  To the best of our knowledge, the hardness of approximating
tensor rank is an open question.  Part of its difficultly is that any
gap-preserving reduction from \NP\ to tensor rank would automatically yield
lower bounds for explicit tensors.

It is also a folklore result (eg. see Raz~\cite{raz}) that one can reshape, or embed, a $n^{\lfloor
d/2\rfloor}\times n^{\lfloor d/2\rfloor}$ size matrix into a order-$d$ tensor,
thus achieving a $n^{\lfloor d/2\rfloor}$ rank lower bound for $[n]^d$ size
tensors.

\section{Our Results}

Our work has several components, each studying different aspects of the tensor
rank problem.  We first give two new methods in proving tensor rank lower
bounds.  In Section~\ref{sect:combtensors}, we detail the first construction,
which proves the best known\footnote{When comparing this result to those listed
in the prior work, it is helpful to note the differences in size of the
tensors, such as comparing $[n^2]^3$ (for matrix multiplication) to our
$[n]^3$. Thus, over $\mathbb{F}_2$, we essentially match Shpilka's $3n^2-o(n^2)$
matrix multiplication result up to low-order terms.} tensor rank lower bound for a tensor of size $[n]\times[n]\times[n]$
(over any field). In particular, using a generalization of Gaussian elimination
we prove
\begin{thm*}[Corollary~\ref{cor:maincor}]
	Let $\mathbb{F}$ be an arbitrary field.  There are explicit $\{0,1\}$-tensors
	$T_n:[n]^3\to\mathbb{F}$ such that $\rank(T_n)=3n-\Theta(\lg n)$.
\end{thm*}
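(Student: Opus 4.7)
The plan is to prove the upper and lower bounds separately, with most work going into the lower bound via a substitution (generalized Gaussian elimination) argument.

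For the construction, I would look for a 0/1 tensor with a simple ``three-direction'' decomposition. A natural candidate is a tensor supported on the union of the three ``coordinate-diagonal'' faces of $[n]^3$, e.g., $T_n(i,j,k)=1$ whenever at least two of $i,j,k$ agree, or a closely related hierarchical construction built from the binary expansions of $i,j,k$. The upper bound $\rank(T_n)\le 3n-\Omega(\lg n)$ should follow from a direct decomposition: each of the three faces is essentially a diagonally-embedded matrix of rank $\le n$, giving $3n$ rank-one terms; the $\Omega(\lg n)$ savings would come from noticing redundancies between the three faces along the triple intersection $\{i=j=k\}$ and merging rank-one terms across $\Theta(\lg n)$ dyadic scales of the hierarchical structure.

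For the lower bound, the plan is to iteratively apply a substitution step to any putative decomposition $T_n=\sum_{\ell=1}^{r} u_\ell\otimes v_\ell\otimes w_\ell$. The core ingredient is:

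\smallskip
\noindent\textbf{Elimination lemma (sketch).} For a tensor $T:[m]^3\to\mathbb F$ with the structural property inherited from $T_n$, there exist invertible changes of basis in the three directions such that, after the change of basis, one slice in some direction can be removed to leave an $(m{-}1)$-sided tensor $T'$ whose rank is at most $r-3$.
\smallskip

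Granted the lemma, applying it $n-O(\lg n)$ times reduces $T_n$ to a tensor of side $O(\lg n)$ whose rank is at most $r-3(n-O(\lg n))$. Since tensor rank is nonnegative, this yields $r\ge 3n-O(\lg n)$. Verifying the ``explicit'' property of $T_n$ is then just a matter of checking that its indicator condition is computable by small circuits, which is immediate for any combinatorial condition of the type above.

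The main obstacle is establishing the elimination lemma with a savings of $3$ (not merely $1$ as the trivial flattening argument gives). The trick is that in a rank-one summand $u_\ell\otimes v_\ell\otimes w_\ell$, removing a slice in the first direction ``uses up'' one rank-one term per direction: one because $u_\ell$ is cut, and two more because the induced residue in the other two directions can be forced to lie in a carefully chosen codimension-one subspace when $T_n$ is structured as above. Carrying this out uniformly at every step is delicate; almost certainly, on a handful of steps one can only save $2$ (or fewer), and the $\Theta(\lg n)$ loss in the statement reflects the worst-case number of such ``bad'' steps, controlled by a potential/dimension-counting argument tied to the hierarchical construction of $T_n$.
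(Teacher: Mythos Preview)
Your proposal has a genuine gap at the heart of the lower bound. The ``elimination lemma'' you sketch---remove one slice and drop the rank by $3$---is not true in general, and you do not supply a mechanism that would make it true for your candidate tensor. Layer elimination (the generalized Gaussian step you allude to) provably buys you a savings of exactly $1$ per slice removed, not $3$; this is the content of the paper's Lemma~\ref{lem:layerreduction}. The paper does \emph{not} obtain the factor $3$ by squeezing three units of rank out of a single slice removal. Instead, it designs a very specific block-recursive tensor so that layer reduction can be applied roughly $n$ times along the first axis, then roughly $n$ times along the second axis (this is Lemma~\ref{lem:addident}, which uses that the first layer is $\bigl[\begin{smallmatrix}I_n&0\\0&I_n\end{smallmatrix}\bigr]$), leaving a half-sized tensor with the same structure; recursing gives $2n-O(\lg n)$. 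A further $n-O(\lg n)$ comes from padding the third axis with linearly independent layers and eliminating those first. Your ``three-diagonal-faces'' tensor has no such recursive block structure, and there is no reason to expect layer reduction to iterate on it beyond the trivial $n$ steps in a single direction.

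Your upper-bound plan is also off: the paper gets the matching upper bound simply by counting that the constructed tensor has exactly $3n-\Theta(\lg n)$ nonzero entries (each nonzero entry is trivially a rank-one term). There is no ``merging rank-one terms across dyadic scales''; the $\Theta(\lg n)$ savings is baked into the construction, which uses only $\lfloor\lg n\rfloor+1$ layers in the third direction before padding. In short, the specific tensor matters enormously here, and the one you propose is not the right one; the hierarchical structure you gesture at is in fact the whole point, but it must be built so that after two rounds of axis-elimination the \emph{same} structure reappears at half the scale.
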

However, our analysis of this construction is exact so no further improvements
can be made.  In Appendix~\ref{sect:algtensors}, we give
a different order-3 tensor construction with a $3n-\Theta(\lg n)$
rank lower bound over $\mathbb{F}_2$ that has no matching upper bound, and leave as
a open question what is the correct rank. In Appendix~\ref{sect:hightensors}, we show how to extend
the order-3 tensor rank lower bounds to yield a lower bound (for odd $d$) of $2n^{\lfloor
d/2\rfloor}+n-\Theta(d\lg n)$ for the tensor rank of an explicit 0/1 size
$[n]^d$ tensor, which is improves by a factor of 2 on the folklore reshaping
lower bound of $n^{\lfloor d/2\rfloor}$.

In Section~\ref{sect:permtensors}, we explore the tensor rank of permutation tensors.  For matrices,
permutation matrices are all full-rank and have a tight connection with the determinant.
Consequently, it is natural to conjecture that a generalization of permutation matrices, which we
call permutation tensors, have high rank.  In particular, using a counting lower bound for Latin
squares, we show that indeed there is a order-3 permutation tensor with super-linear tensor rank
(over finite fields).  

A natural class of permutation tensors are those constructed from groups.  That is, for a finite
group $G$ we can define the \textit{group tensor} $T_G^d:G^d\to\mathbb{F}$, which is a 0/1 tensor defined by
$T_G^d(g_1,\ldots,g_d)=1$ iff $g_1\cdots g_d=1_G$.  It seems natural to conjecture that these
tensors might also have high-rank.  However, using representation theory we can
give a strong upper bound on the rank of any group
tensor (over large fields such as $\mathbb{C}$).  To prove results over any field, we use interpolation methods and
field-transfer results to bound the rank of any group tensor arising from an
abelian group.  In particular, we have the following theorem.
\begin{thm*}[Theorem~\ref{thm:repthyrank}, and
Corollary~\ref{cor:interpolate-abel-group-any-field}]
	Let $G$ be a finite group.  For ``large'' fields $\mathbb{F}$,
	$\rank_\mathbb{F}(T_G^d)\le |G|^{d/2}$. Further, for any field $\mathbb{F}$, if $G$ is abelian then $\rank_\mathbb{F}(T_G^d)\le
	O(|G|^{1+\lg d}\lg^{d-1}|G|)$.
\end{thm*}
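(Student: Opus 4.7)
The theorem combines two very different bounds, one for large fields using representation theory and one for abelian $G$ over arbitrary fields via interpolation, so I would prove them in two separate arguments.

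For the large-field bound, the plan is to write the tensor through Fourier/Plancherel decomposition of the indicator of the identity. Over an $\mathbb{F}$ with $\operatorname{char}(\mathbb{F})\nmid |G|$ and enough roots of unity so that $\mathbb{F}[G]$ is split semisimple,
\[
[g=1_G] \;=\; \frac{1}{|G|}\sum_{\rho} \dim(\rho)\,\chi_\rho(g),
\]
where the sum is over irreducible representations. Applied to $g=g_1\cdots g_d$ and using $\chi_\rho(g_1\cdots g_d)=\operatorname{tr}(\rho(g_1)\cdots\rho(g_d))$, I expand the matrix product as
\[
\operatorname{tr}\bigl(\rho(g_1)\cdots\rho(g_d)\bigr) \;=\; \sum_{i_1,\ldots,i_d}\rho(g_1)_{i_1,i_2}\,\rho(g_2)_{i_2,i_3}\cdots\rho(g_d)_{i_d,i_1}.
\]
Each of the $\dim(\rho)^d$ summands has its $j$-th factor depending only on $g_j$, so every summand is a rank-one tensor. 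Summing over $\rho$ with the scalar weights $\dim(\rho)/|G|$ absorbed gives a rank-$\sum_\rho\dim(\rho)^d$ decomposition. The key step is the counting bound $\sum_\rho\dim(\rho)^d \le \bigl(\max_\rho\dim(\rho)\bigr)^{d-2}\cdot\sum_\rho\dim(\rho)^2 \le |G|^{(d-2)/2}\cdot|G|=|G|^{d/2}$, using the standard facts $\sum_\rho\dim(\rho)^2=|G|$ and $\dim(\rho)\le\sqrt{|G|}$.

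For the abelian over-any-field bound, I would proceed in two stages. First, exploit the product structure of abelian groups: if $G\cong G_1\times G_2$, then $T_G^d$ factors coordinatewise as $T_{G_1}^d\otimes T_{G_2}^d$, so $\operatorname{rank}(T_G^d)\le\operatorname{rank}(T_{G_1}^d)\cdot\operatorname{rank}(T_{G_2}^d)$, reducing to the primary components $\mathbb{Z}/p^k$ (and ultimately the cyclic case). Second, for cyclic $\mathbb{Z}/n$ over $\mathbb{F}$, interpolation gives a decomposition independent of the characters actually living in $\mathbb{F}$: when $\mathbb{F}$ contains a primitive $n$-th root of unity $\omega$, the identity $[s\equiv 0\bmod n]=\frac{1}{n}\sum_{j=0}^{n-1}\omega^{js}$ directly gives an $n$-term rank decomposition $\sum_j \prod_i \omega^{jg_i}/n$; when it does not, one passes to the smallest cyclotomic extension $\mathbb{K}=\mathbb{F}(\omega)$, applies the same decomposition over $\mathbb{K}$, and then descends to $\mathbb{F}$ via a Galois/trace field-transfer argument that pays a factor polynomial in $\log n$ per tensor direction. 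Multiplying these per-factor bounds across the at most $\log|G|$ cyclic components and keeping careful track of the $d$-dependence (from the $\log$ factors entering in each of the $d$ directions) should yield the stated $O(|G|^{1+\lg d}\lg^{d-1}|G|)$.

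The main obstacle is in Part 2 in the bad characteristic case, where $\operatorname{char}(\mathbb{F})\mid|G|$ and Fourier inversion (which divides by $|G|$) is unavailable. Naively Lagrange-interpolating the indicator of $s\equiv 0\pmod n$ as a polynomial of degree $\le d(n-1)$ in the integer sum $s=g_1+\cdots+g_d$ and expanding $s^k$ multinomially gives roughly $\binom{dn}{d}$ rank-one terms, which is far too large. The challenge is to replace this expansion by a structured one (e.g.\ by grouping monomials along a divide-and-conquer split of the $d$ arguments, or by using a more economical interpolation basis adapted to sums) that realizes the $d$-dependence only as the $|G|^{\lg d}\lg^{d-1}|G|$ overhead rather than exponentially.
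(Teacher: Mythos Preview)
Your Part~1 is essentially the paper's argument: the same Fourier/character decomposition of $[g_1\cdots g_d=1_G]$, the same trace-expansion of $\chi_\rho(g_1\cdots g_d)$ into $(\dim\rho)^d$ rank-one terms, and then the bound $\sum_\rho(\dim\rho)^d\le|G|^{d/2}$. Your way of getting that last inequality, pulling out $(\max_\rho\dim\rho)^{d-2}\le|G|^{(d-2)/2}$ and using $\sum_\rho(\dim\rho)^2=|G|$, is a clean alternative to the paper's direct proof that $\sum n_i^{d/2}\le(\sum n_i)^{d/2}$.

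Part~2 has a genuine gap, which you yourself flag. Your plan is character-based: write $[s\equiv 0\bmod n]=\frac{1}{n}\sum_j\omega^{js}$, then descend from the cyclotomic extension $\mathbb{F}(\omega)$. This fails in two places. First, the identity requires dividing by $n$, so it is unavailable when $\chara(\mathbb{F})\mid n$; you acknowledge this but do not resolve it. Second, even when the identity is available, $[\mathbb{F}(\omega):\mathbb{F}]$ can be as large as $\phi(n)=\Theta(n)$, not $O(\log n)$, so your field-transfer overhead would be $n^{d-1}$ rather than $(\log n)^{d-1}$, and the final bound would not match the statement.

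The paper avoids both issues by \emph{not} using characters for the cyclic piece. The missing idea is a Ben-Or-style interpolation: for a formal variable $\alpha$, the polynomial
\[
P(\alpha)\;=\;\prod_{j=1}^d\Bigl(X_0^{(j)}+\alpha X_1^{(j)}+\cdots+\alpha^{n-1}X_{n-1}^{(j)}\Bigr)
\]
is a rank-one tensor for every fixed $\alpha\in\mathbb{F}$, and has degree $d(n-1)$ in $\alpha$; the coefficient of $\alpha^m$ is exactly the tensor $[i_1+\cdots+i_d=m]$. Evaluating $P$ at $d(n-1)+1$ distinct $\alpha$'s and interpolating recovers any $\mathbb{F}$-linear combination of these coefficients (in particular $T_{\mathbb{Z}_n}^d$) as a sum of $d(n-1)+1$ rank-one tensors. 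The only hypothesis is $|\mathbb{F}|\ge d(n-1)+1$; no assumption on $\chara(\mathbb{F})$ and no roots of unity are needed. Consequently the field-transfer step only has to enlarge $\mathbb{F}$ to have enough elements, which costs an extension of degree $O(\log|G|)$, giving the $\lceil\lg|G|\rceil^{d-1}$ factor. Combined with the product decomposition $T_{G_1\times G_2}^d=T_{G_1}^d\otimes T_{G_2}^d$ (which you have) and the structure theorem, this yields the stated bound.
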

In each case, we show that group tensors have rank far from the maximal
$\Theta_d(|G|^{d-1})$, and thus are not good candidates for high tensor rank for
large $d$ (which are needed for Raz's application). We are unable to place non-trivial upper bounds on the
rank of $T_G^d$ for $G$ non-abelian and small $\mathbb{F}$, but it seems natural to conjecture that
strong upper-bounds exist given the above results.  While these results do not
unconditionally imply any circuit lower
bounds, they elucidate differences between tensor rank and matrix rank
 by  proving that group tensors are not a viable candidate of
high-rank tensors.  However, conditioned on the upper bound given
in Theorem~\ref{thm:repthyrank} being tight, we are able to give super-linear
tensor rank lower bounds for explicit order-3 tensors.

Finally, in Section~\ref{sect:monotone} we explore \textit{monotone} tensor rank. Monotone
computation exploits the idea that if a polynomial only uses positive coefficients (over an ordered
field such as $\mathbb{Q}$), then one might try to compute this polynomial only using positive field
elements.  Previous researchers have tried, in various models, to show that such restricted
computation is much more inefficient than unrestricted computation.  Indeed, for general algebraic
circuits Valiant~\cite{valiant-neg} has shown that allowing negative field elements allows for an
exponential improvement in the efficiency of computing certain polynomials.  We continue in this
line of  work.  In particular, we can show the following nearly optimal separation.
\begin{thm*}[Theorem~\ref{thm:monotone}]
	Let $\mathbb{F}$ be any ordered field.  There is a explicit 0/1 tensor
	$T:[n]^d\to\mathbb{F}$ such that $\rank_\mathbb{F}(T)\le dn$, but the monotone rank of $T$ is $n^{d-1}$.
\end{thm*}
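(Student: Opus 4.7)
The plan is to exhibit a $\{0,1\}$-tensor whose support is simultaneously a Hamming-distance-$2$ code of maximal size $n^{d-1}$---forcing every combinatorial rectangle contained in the support to be a single point, and hence a huge monotone rank---and admits a short character-sum decomposition over $\mathbb{F}$, giving a small non-monotone rank.  The natural candidate is the group tensor of $\mathbb{F}_2^k$: take $n=2^k$, identify $[n]$ with $\mathbb{F}_2^k$, and set $T(v_1,\ldots,v_d)=1$ if $v_1+\cdots+v_d=0$ in $\mathbb{F}_2^k$ and $T(v_1,\ldots,v_d)=0$ otherwise.  For $n$ not a power of $2$, I would restrict to the largest power of $2$ below $n$, losing only constant factors in both bounds.

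The upper bound $\rank_\mathbb{F}(T)\le dn$ follows from Fourier inversion on $\mathbb{F}_2^k$.  Since $\mathbb{F}$ is ordered, it has characteristic $0$, so $\tfrac{1}{n}\in\mathbb{F}$, and
\[T(v_1,\ldots,v_d)=\frac{1}{n}\sum_{w\in\mathbb{F}_2^k}\prod_{j=1}^d(-1)^{w\cdot v_j}\]
exhibits $T$ as a sum of $n$ rank-one tensors with entries in $\mathbb{Q}\subseteq\mathbb{F}$, so $\rank_\mathbb{F}(T)\le n\le dn$.

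For the monotone rank, I would write any monotone decomposition $T=\sum_{k=1}^r S_k$ with $S_k=\vec u_1^{(k)}\otimes\cdots\otimes\vec u_d^{(k)}$ and each $\vec u_j^{(k)}\ge 0$.  Because $T\in\{0,1\}$ and every $S_k\ge 0$, no $S_k$ can be nonzero where $T=0$, so the support of each $S_k$---which is a combinatorial box $A_1^{(k)}\times\cdots\times A_d^{(k)}$---lies inside $\mathrm{supp}(T)$.  The key structural claim is that every combinatorial box inside $\mathrm{supp}(T)$ consists of a single point: fixing $v_i\in A_i^{(k)}$ for $i\ne 1$ and taking any $v_1,v_1'\in A_1^{(k)}$, both $v_1+\sum_{i\ge 2}v_i=0$ and $v_1'+\sum_{i\ge 2}v_i=0$, so $v_1=v_1'$; hence $|A_1^{(k)}|=1$, and symmetrically every $|A_j^{(k)}|=1$.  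Therefore $r\ge|\mathrm{supp}(T)|=n^{d-1}$, and the trivial decomposition into indicator tensors of individual support points gives the matching upper bound $r\le n^{d-1}$.

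The main creative step is picking the right group; everything else is formal.  Using $\mathbb{Z}/n$ in place of $\mathbb{F}_2^k$ would fail over an ordered field---which contains no primitive $n$-th roots of unity for $n\ge 3$---and one only recovers the much weaker character-sum bound of Corollary~\ref{cor:interpolate-abel-group-any-field}, far above the target $dn$.  The virtue of the Boolean cube is that its characters $(-1)^{w\cdot v}$ already live in $\mathbb{Q}$, so the Fourier decomposition is defined over every ordered field and delivers rank at most $n$, while the support simultaneously remains a distance-$2$ code of size $n^{d-1}$.
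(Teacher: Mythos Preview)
Your argument is correct for $n$ a power of $2$, and your monotone lower bound is exactly the paper's argument: nonnegativity forces each simple tensor's support to lie inside $\mathrm{supp}(T)$, and the permutation-tensor property forces any such box to be a single point.

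Where you differ from the paper is the choice of group and the upper-bound mechanism. The paper takes $T_{\mathbb{Z}_n}^d$ and invokes the interpolation bound of Corollary~\ref{cor:interpolate-cyclic}, obtaining $\rank_\mathbb{F}(T_{\mathbb{Z}_n}^d)\le d(n-1)+1$ for any field with at least $d(n-1)+1$ elements---automatic for ordered fields, which are infinite. You instead take $T_{(\mathbb{Z}/2)^k}^d$ and use the Fourier decomposition, which is the abelian case of Theorem~\ref{thm:repthyrank}; this works over any ordered field because the characters of $(\mathbb{Z}/2)^k$ are $\pm 1\in\mathbb{Q}$, and even gives the stronger bound $\rank_\mathbb{F}(T)\le n$.

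Your last paragraph, however, is mistaken. You assert that $\mathbb{Z}_n$ ``would fail over an ordered field'' and that one only recovers the weak bound of Corollary~\ref{cor:interpolate-abel-group-any-field}. This conflates two different results: the paper does \emph{not} use characters for $\mathbb{Z}_n$ but rather Ben-Or-style interpolation (Proposition~\ref{prop:interpolate}), which requires only that the field be large, not that it contain roots of unity. The cyclic-group-specific Corollary~\ref{cor:interpolate-cyclic} already gives rank $\le d(n-1)+1\le dn$, well within the target.

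A minor consequence of your choice: for $n$ not a power of $2$, padding out the $(\mathbb{Z}/2)^k$ construction gives monotone rank $(2^k)^{d-1}$ with $2^k\le n$, not exactly $n^{d-1}$ as the theorem states; you recover the result only up to a factor $2^{d-1}$. The paper's $\mathbb{Z}_n$ construction works for every $n$ and gives the exact value.
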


\section{Definitions and Notation}

We first define tensors, and give some basic facts about them.  Throughout this
paper, $[n]$ shall denote the set $\{1,\ldots,n\}$, $\llbracket n\rrbracket$
shall denote the set $\{0,\ldots,n-1\}$, and $\lg n$ shall denote the logarithm
of $n$ base 2.  Further, the notation $\lib E\rib$ (the Iverson bracket) will often be used as an indicator
variable for the event $E$, and can be distinguished from $\lib
n\rib$ by context.

\begin{defn}
	A \textbf{tensor} over a field $\mathbb{F}$ is a function
	$T:\prod_{j=1}^d[n_j] \rightarrow \mathbb{F}$.  It is
	said to have order $d$ and size $(n_1,\ldots,n_d)$.  If all
	of the $n_j$ are equal to $n$, then $T$ is said to have size $n^d$.
	$T$ is said to belong to the \textbf{tensor product space}
	$\otimes_{j=1}^d\mathbb{F}^{n_j}$.
\end{defn}

In later sections, the input space of a tensor will sometimes be a group or a
set $\llbracket n\rrbracket$ instead
of the set $[n]$.  Throughout this paper $\mathbb{F}$ shall denote a arbitrary field, the variable $n$ (or
$(n_1,\ldots,n_d)$) shall be reserved for the tensor size, and $d$ shall be reserved for the
order. $\mathbb{F}_q$ will denote the field on $q$ elements. We can now define
the notion of rank for tensors.

\begin{defn}
	A tensor $T:\prod_{j=1}^d [n_i]\to\mathbb{F}$ is \textbf{simple} if for
	$j\in[d]$ there are
	vectors $\vec{v}_j\in\mathbb{F}^{n_j}$
	such that $T=\otimes_{j=1}^d\vec{v}_j$.  That is,
	for all $i_j\in[n_j]$,
	$T(i_1,\ldots,i_d)=\prod_{j=1}^d\vec{v}_j(i_j)$ where
	$\vec{v}_j(i_j)$ denotes the $i_j$-th coordinate of $\vec{v}_j$.
	\label{defn:simpletensor}
\end{defn}

\begin{defn}
	The \textbf{rank} of a tensor $T:\prod_{j=1}^d[n_j]\to\mathbb{F}$, is defined as the minimum
	number of terms in a summation of simple tensors expressing $T$,
	that is,
	\begin{equation*}
		\rank_\mathbb{F}(T)=\min\left\{r:T=\sum_{k=1}^r
		\otimes_{j=1}^d \vec{v}_{j,k}\text{, }
		\vec{v}_{j,k}\in \mathbb{F}^{n_j} \right\}
	\end{equation*}
\end{defn}

Notice that by definition, a non-zero tensor is simple iff it is of rank one.

The next definition shows how identically sized order-$(d-1)$ tensors can be combined into an
order-$d$ tensor.

\begin{defn}
	For $T_1,\ldots,T_{n_d}\in \otimes_{j=1}^{d-1}\mathbb{F}^{n_j}$
	define $T=[T_1\svert \cdots\svert T_{n_d}]$ by the equation
	$T(i_1,\ldots,i_{d-1},i_d)=T_{i_d}(i_1,\ldots,i_{d-1})$. The $T_{i_j}$ are said to be
	the \textbf{layers} of $T$ (along the $d$-th axis). Layers along other axes are
	defined analogously.

	Conversely, given $T\in\otimes_{j=1}^d\mathbb{F}^{n_j}$, define
	\textbf{the $l$-th layer of $T$ (along the $d$-axis)}, sometimes denoted
	$T_l\in\otimes_{j=1}^{d-1}\mathbb{F}^{n_j}$, to be the tensor
	defined by $T_l(i_1,\ldots,i_{d-1})=T(i_1,\ldots,i_{d-1},l)$. 
	\label{defn:layers}
\end{defn}

\section{Combinatorially-defined Tensors}\label{sect:combtensors}

In this section, we construct combinatorially-defined tensors and prove linear lower bounds for
their rank.  To do so, we use the follow fact about tensors, which is proved in
Appendix~\ref{sect:layerreduction}.  For matrices, this can be seen as a
statement about Gaussian elimination.

\begin{cor*}[Iterative Layer Reduction, Corollary~\ref{cor:layerreduction}]
	For layers $S_1,\ldots,S_{n_d}\in
	\mathbb{F}^{n_1}\otimes\cdots\otimes\mathbb{F}^{n_{d-1}}$ with $S_1,\ldots,S_m$
	linearly independent (as vectors in the space $\mathbb{F}^{n_1\cdots n_{d-1}}$),
	there exist constants $c_{i,j}\in\mathbb{F}$, $i\in\{1,\ldots,m\}$,
	$j\in\{m+1,\ldots, n_d\}$, such that
	\begin{equation}
		\rank([S_1|\ldots | S_{n_d}])\ge
		\rank\left(\left[S_{m+1}+\sum_{i=1}^mc_{i,m+1}S_i\left|\ldots \left|
		S_{n_d}+\sum_{i=1}^mc_{i,n_d}S_i\right.\right.\right]\right)+m
	\end{equation}
\end{cor*}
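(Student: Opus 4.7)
The plan is to take any optimal rank decomposition of $T := [S_1|\cdots|S_{n_d}]$ and use it to \emph{explicitly} construct the constants $c_{i,j}$ via a Gaussian-elimination step on the coefficient matrix extracted from that decomposition. So let $r = \rank(T)$ and write
\[ T = \sum_{k=1}^r A_k \otimes u_k, \qquad A_k \in \otimes_{j=1}^{d-1}\mathbb{F}^{n_j}\text{ simple},\ u_k \in \mathbb{F}^{n_d}. \]
Reading off the layer along the $d$-axis gives $S_l = \sum_{k=1}^r u_k(l)\, A_k$. Collect the coefficients into the $n_d \times r$ matrix $U$ with $U_{l,k} := u_k(l)$, so that each row of $U$ encodes the expansion of one layer in the $A_k$.

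Next I would argue that the top $m$ rows of $U$ are linearly independent: any linear relation $\sum_{i=1}^m \lambda_i U_{i,\cdot} = 0$ would, by multilinearity of $\otimes$, produce the same relation $\sum_i \lambda_i S_i = 0$ among the first $m$ layers viewed as flattened vectors in $\mathbb{F}^{n_1\cdots n_{d-1}}$, contradicting hypothesis. Hence $U$ has an $m\times m$ invertible submatrix sitting in rows $1,\dots,m$. After a harmless relabeling of the simple terms $A_k$ (which does not affect $T$), I may assume the top-left $m\times m$ block $U_{[m],[m]}$ is invertible.

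Now for each $j \in \{m+1,\dots,n_d\}$ I solve the $m\times m$ linear system
\[ \sum_{i=1}^m c_{i,j}\, U_{i,k} \;=\; -U_{j,k}, \qquad k = 1,\dots,m, \]
whose unique solution defines the desired constants. With these $c_{i,j}$, the modified layer
\[ S_j' \;:=\; S_j + \sum_{i=1}^m c_{i,j} S_i \;=\; \sum_{k=1}^r \Bigl(U_{j,k} + \sum_i c_{i,j} U_{i,k}\Bigr) A_k \]
has coefficients vanishing on $k = 1,\dots,m$, so $S_j'$ lies in the span of only $A_{m+1},\dots,A_r$. Consequently $[S_{m+1}'|\cdots|S_{n_d}']$ admits a decomposition into at most $r-m$ simple tensors, which gives the claimed inequality $\rank([S_1|\cdots|S_{n_d}]) \ge \rank([S_{m+1}'|\cdots|S_{n_d}']) + m$.

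The only conceptual step is the reduction to linear independence of rows of $U$; everything else is Gaussian elimination. I expect the mild obstacle, if any, is bookkeeping: one must (i) verify that permuting the labels of the $A_k$ is legitimate, and (ii) carefully distinguish linear independence of the $S_l$ as elements of the tensor product space from linear independence of the coefficient rows of an \emph{arbitrary} decomposition. An alternative, more pedagogically natural route is to first prove the $m=1$ case (one nonzero layer lets us drop a single simple term) and then iterate, using the observation that if $S_1,\dots,S_m$ are linearly independent then so are $S_2 + c_2 S_1, \dots, S_m + c_m S_1$ for any choice of $c_i$; but the direct argument above is cleaner and avoids re-invoking the hypothesis at every step.
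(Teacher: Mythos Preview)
Your proof is correct. The paper takes exactly the alternative route you sketch in your final paragraph: it first isolates the $m=1$ case as a separate Layer Reduction Lemma (in a minimal decomposition, pick one summand whose last coordinate is nonzero, apply a shear to send that summand entirely into the last layer, then drop that layer), and then derives the corollary by induction on $m$, using linear independence at each step to guarantee that the surviving ``first'' layer $S_m+\sum_{i<m}c_{i,m}S_i$ is nonzero so the lemma can be reapplied.

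Your direct argument---extracting the full $n_d\times r$ coefficient matrix $U$ from a single optimal decomposition and performing one block Gaussian-elimination step---is cleaner for this statement: it avoids the inductive bookkeeping (the paper must track how constants compose across steps, defining $c'_{i,j}=c_{i,j}+c_{m,j}c_{i,m}$ and so on) and makes the role of the linear-independence hypothesis transparent in one shot. The paper's modular approach has the mild advantage that the $m=1$ lemma is stated and proved as a standalone tool, which it reuses elsewhere; but for the corollary itself your route is at least as good.
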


The idea of this section is to construct tensors such that we can apply
Corollary~\ref{cor:layerreduction} as many times as possible.  As mentioned in
Remark~\ref{rmk:layerreductionbarrier}, for a $[n]^d$ tensor, the lemma can be applied at
most $dn$ times, and thus the lower bounds can at best be $dn$.  In general, the lemma may
not be able to be applied this much because the elimination of layers zeroes out too much of
the tensor. However, in this section we construct tensors (for $d=3$) such that we can
almost apply the lemma $dn$ times.  The result is that we give explicit (order 3) 0/1-tensors with
tensor rank exactly $3n-\Theta(\lg n)$ over any field.  To begin, we apply the
above corollary twice, along two different axes, to get the following lemma.  A
full proof of this lemma, along with other claims in this section, can be found
in Appendix~\ref{sect:combtensorsproofs}.

\begin{lem}
	\label{lem:addident}
	Let $A_1,\ldots, A_k$ be $n\times n$ sized $\mathbb{F}$-matrices.
	Let $I_m$ denote the $m\times m$ identity matrix, and $0_m$ denote the $m\times m$
	zero matrix.  Then,
	\begin{equation}
		\label{eq:addidentrankeven}
		\rank\left(\left[
			\begin{matrix}
				I_n & 0_n\\
				0_n & I_n
			\end{matrix}
			\left|
			\begin{matrix}
				0_n & 0_n\\
				A_1 & 0_n
			\end{matrix}
			\right|
			\cdots
			\left|
			\begin{matrix}
				0_n & 0_n\\
				A_k & 0_n
			\end{matrix}
			\right.
			\right]
			\right)
		\ge
		\rank([A_1|\cdots|A_k])+2n
	\end{equation}
	and
	\begin{equation}
		\label{eq:addidentrankodd}
		\rank\left(\left[
			\begin{matrix}
				0   & 0   & 0\\
				I_n & 0_n & 0\\
				0_n & I_n & 0
			\end{matrix}
			\left|
			\begin{matrix}
				0   & 0   & 0\\
				0_n & 0_n & 0\\
				A_1 & 0_n & 0
			\end{matrix}
			\right|
			\cdots
			\left|
			\begin{matrix}
				0   & 0   & 0\\
				0_n & 0_n & 0\\
				A_k & 0_n & 0
			\end{matrix}
			\right.
			\right]
			\right)
		\ge
		\rank([A_1|\cdots|A_k])+2n
	\end{equation}
	where the left-hand side of Equation~\ref{eq:addidentrankodd} expresses the tensor rank of
	a $[2n+1]\times[2n+1]\times [k]$-sized tensor.
\end{lem}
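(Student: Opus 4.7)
The plan is to prove both inequalities by a double application of Corollary~\ref{cor:layerreduction}: first eliminate $n$ row-layers (axis 1) that come from one of the identity blocks in the first depth slice, then eliminate $n$ column-layers (axis 2) that come from the other identity block. In each step the layers being eliminated are supported entirely within the first depth slice ($l = 1$), so the coefficients $c_{i,j}$ supplied by the Corollary can only perturb first-depth-slice entries of the surviving layers, leaving the $A_l$ data at depths $l \ge 2$ untouched. This ``depth-isolation'' is the one point to verify carefully, and is what makes the bookkeeping go through cleanly.

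For Equation~\ref{eq:addidentrankeven}, call the left-hand side tensor $T$ (of size $2n \times 2n \times (k+1)$). The top-left $I_n$ block in the first depth slice yields $n$ linearly independent row-layers, each supported at the single entry $(j, l) = (i, 1)$; applying Corollary~\ref{cor:layerreduction} along axis 1 removes them at a cost of $n$ in the rank bound, producing an $n \times 2n \times (k+1)$ tensor $T'$ whose depths $l \ge 2$ still encode $T'(i, j, l+1) = A_l(i, j)$ for $j \le n$ and zero for $j > n$, and whose first depth slice still contains the bottom-right $I_n$ block at columns $n+1, \ldots, 2n$ (the left-half columns may have been arbitrarily modified). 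Those $n$ columns of the surviving $I_n$ form the next linearly independent family, and a second application of the Corollary along axis 2 yields $\rank(T) \ge \rank(T'') + 2n$, where $T''$ is an $n \times n \times (k+1)$ tensor whose depths $l \ge 2$ exactly reproduce $[A_1 | \cdots | A_k]$. Discarding the (possibly modified) first depth slice of $T''$ cannot increase rank, so $\rank(T'') \ge \rank([A_1 | \cdots | A_k])$, completing the even case.

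Equation~\ref{eq:addidentrankodd} is proved by the same two-step reduction, merely with the identity blocks shifted to accommodate the extra zero row on top and extra zero column on the right. The first family of $n$ linearly independent row-layers is indexed by $i = 2, \ldots, n+1$ (coming from the $I_n$ at rows $2,\ldots,n+1$ and columns $1,\ldots,n$ of the first slice), and after that reduction the second family of $n$ independent column-layers is indexed by $j = n+1, \ldots, 2n$ (coming from the $I_n$ at rows $n+2,\ldots,2n+1$ and columns $n+1,\ldots,2n$); in both cases the layers remain supported only at $l = 1$, so the same depth-isolation argument applies. After both reductions we are left with an $(n+1) \times (n+1) \times (k+1)$ tensor whose depth-$(\ge 2)$ restriction is $[A_1 | \cdots | A_k]$ bordered by an extra zero row and zero column, which does not affect rank. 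The only real obstacle in either case is verifying that the $c_{i,j}$'s supplied by the Corollary never contaminate the $A_l$ entries, and this is handled uniformly by the observation that every eliminated layer is supported in the single depth slice $l = 1$.
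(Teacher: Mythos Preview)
Your proposal is correct and follows essentially the same approach as the paper: two applications of Corollary~\ref{cor:layerreduction} along different axes, using the identity blocks as the linearly independent families, then dropping the contaminated first depth slice via Corollary~\ref{cor:droplayers}. The only cosmetic differences are that the paper eliminates columns first and rows second (you do the reverse), and for Equation~\eqref{eq:addidentrankodd} the paper simply observes that padding by a zero row and zero column leaves the rank unchanged (so the odd case reduces immediately to the even case), whereas you rerun the argument and strip the zero border at the end.
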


Applying this lemma recursively yields the following construction.

\begin{defn}
	Let $H:\mathbb{N}\rightarrow\mathbb{N}$ denote the Hamming weight function.  That
	is, $H(n)$ is the number of $1$'s in the binary expansion of $n$.
\end{defn}
\begin{thm}
	\label{thm:mainthm}
	For $i\in\{0,\ldots,\lfloor \lg n\rfloor\}$, let $S_{n,i}$ be an $n\times n$ matrix defined
	in the following recursive manner.
	\begin{itemize}
		\item $S_{1,0}=[1]$
		\item For $2n>1$,
			\begin{equation*}
				S_{2n,i}=
				\begin{cases}
					\begin{bmatrix}
						0_n & 0_n\\
						S_{n,i} & 0_n
					\end{bmatrix}
					&
					\text{if } i<\lfloor \lg n\rfloor
					\\
					\begin{bmatrix}
						I_n & 0_n\\
						0_n & I_n
					\end{bmatrix}
					&
					\text{if } i=\lfloor \lg n\rfloor
					\\
				\end{cases}
			\end{equation*}
		\item For $2n+1>1$,
			\begin{equation*}
				S_{2n+1,i}=
				\begin{cases}
					\begin{bmatrix}
						0 & 0 & 0\\
						0_n & 0_n & 0\\
						S_{n,i} & 0_n & 0
					\end{bmatrix}
					&
					\text{if } i<\lfloor \lg n\rfloor
					\\
					\begin{bmatrix}
						0 & 0 & 0\\
						I_n & 0_n & 0\\
						0_n & I_n & 0
					\end{bmatrix}
					&
					\text{if } i=\lfloor \lg n\rfloor
					\\
				\end{cases}
			\end{equation*}
	\end{itemize}
	
	Then, denoting $T_n=[S_{n,0}|\cdots|S_{n,\lfloor \lg n\rfloor}]$,

	\begin{enumerate}
		\item $T_n$ has size $[n]\times [n]\times [\lfloor \lg
		n\rfloor+1]$\label{mainthm:size}.
		\item $\rank(T_n)=2n-2H(n)+1$ \label{mainthm:rank}.
		\item On inputs $n$ and $(i,j,k)\in[n]\times[n]\times[\lfloor\lg n\rfloor+1]$, $T_n(i,j,k)$
		can be computed in polynomial time. That is, in time $O(\polylog(n))$.\label{mainthm:explicit}
	\end{enumerate}
\end{thm}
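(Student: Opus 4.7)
The plan is to prove each of the three claims in turn, with the bulk of the work going to the rank equality in claim~(2). Claim~(1) is immediate from the construction: $T_n$ is obtained by stacking the $\lfloor\lg n\rfloor + 1$ matrices $S_{n,0},\ldots,S_{n,\lfloor\lg n\rfloor}$, each of size $n\times n$, along the third axis.

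For claim~(2), I would induct strongly on $n$, with base case $T_1 = [\,[1]\,]$ of rank $1 = 2(1) - 2H(1) + 1$. For the inductive step, set $m = \lfloor n/2\rfloor$ and observe that by construction exactly one layer of $T_n$ has the block-identity form appearing on the left-hand side of Lemma~\ref{lem:addident}, while every other layer has the smaller matrix $S_{m,i}$ sitting in the lower-left corner with zeros elsewhere. This is precisely the hypothesis of Lemma~\ref{lem:addident}, with Equation~\ref{eq:addidentrankeven} applying when $n = 2m$ and Equation~\ref{eq:addidentrankodd} applying when $n = 2m+1$. In either case, stripping off the identity layer produces a tensor whose layers are exactly the layers of $T_m$, yielding
\begin{equation*}
  \rank(T_n) \;\ge\; \rank(T_m) + 2m.
\end{equation*}
Combined with the induction hypothesis $\rank(T_m) = 2m - 2H(m) + 1$ and the Hamming weight identities $H(2m) = H(m)$ and $H(2m+1) = H(m)+1$, this gives the lower bound $\rank(T_n) \ge 2n - 2H(n) + 1$ in both parities.

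For the matching upper bound I would build a decomposition recursively. Given $T_m = \sum_k \vec{u}_k \otimes \vec{v}_k \otimes \vec{w}_k$ with $2m - 2H(m) + 1$ simple tensors, I would embed each summand into $T_n$ by placing $\vec{u}_k$ in the bottom $m$ coordinates (and zero above), $\vec{v}_k$ in the top $m$ coordinates (and zero to the right), and extending $\vec{w}_k$ by a $0$ at the identity-layer index; this accounts for every non-identity layer of $T_n$. The identity layer itself contributes $2m$ further simple tensors of the form $\vec{e}_i \otimes \vec{e}_i \otimes \vec{e}_L$, where $L$ is the identity-layer index. The total count is $2m - 2H(m) + 1 + 2m = 2n - 2H(n) + 1$, matching the lower bound. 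Claim~(3) then follows because the recursive definition of $T_n$ unfolds to depth $O(\log n)$, and routing to the correct subcase at each level requires only a constant number of $O(\log n)$-bit comparisons between $i$, $j$, $k$, and the current block boundaries.

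The main obstacle will be the index bookkeeping: in particular, verifying that the layers remaining after a single application of Lemma~\ref{lem:addident} really do reassemble (up to the block shift) into $T_m$ with matching third-axis indices, and checking that the extra row and column of zeros appearing in the odd case of the recursive definition does not interfere with either the lower-bound reduction or the upper-bound construction.
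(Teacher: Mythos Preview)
Your proposal is correct and matches the paper's argument for claims~(1), (3), and the lower bound in~(2); the only difference is in the upper bound for~(2). The paper observes directly that $T_n$ has exactly $2n-2H(n)+1$ nonzero entries (proved by the same recurrence $r_{2n}=r_n+2n$, $r_{2n+1}=r_n+2n$ you implicitly use) and then invokes the trivial bound $\rank(T)\le |\{\text{nonzero entries of }T\}|$. Your recursive decomposition is correct and in fact unfolds to precisely this ``one simple tensor per nonzero entry'' decomposition, so the two arguments are equivalent; the paper's phrasing is just shorter and sidesteps the embedding bookkeeping you flag as an obstacle.
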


Another application of Corollary~\ref{cor:layerreduction} (along the one axis it
has not yet been applied) yields the following
claim.

\begin{cor}
	\label{cor:maincor}
	Define $S_{n,i}$ as in Theorem~\ref{thm:mainthm}.  Let $n\in\mathbb{N}$ and restrict
	to $n\ge 2$.
	Then, for $i\in[n]$ define $n\times n$ matrices $S'_{n,i}$ by
	\begin{equation*}
		S'_{n,i}=
			\begin{cases}
				\begin{bmatrix}
					S_{n-1,i-1} & 0 \\
					0 & 0
				\end{bmatrix}
				&\text{if } i\in[\lfloor\lg(n-1)\rfloor +1]
				\\
				\begin{bmatrix}
					0_{n-1} & \vec{e}_{i-(\lfloor\lg(n-1)\rfloor +1)}\\
					0 & 0
				\end{bmatrix}
				&\text{else}
			\end{cases}
	\end{equation*}
	where $\vec{e}_j\in\mathbb{F}^{n-1}$ is the indicator column vector where
	$\vec{e}_j(k)=\lib j=k\rib $. (Notice that $\lfloor \lg(n-1)\rfloor +1\le n-1$ for all $n\ge
	2$.)  Then, denoting $T'_n=[S'_{n,1}|\cdots|S'_{n,n}]$,
	\begin{enumerate}
		\item $T'_n$ has size $[n]^3$.\label{maincor:size}
		\item $\rank(T'_n)=3n-2H(n-1)-\lfloor \lg (n-1)\rfloor-2\ge 3n-\Theta(\lg n)$.
			\label{maincor:rank}
		\item On inputs $n$ and $(i,j,k)\in[n]\times[n]\times[n]$,
		$T'_n(i,j,k)$ can be computed in polynomial time, that is, 
		$O(\polylog(n))$.\label{maincor:explicit}
	\end{enumerate}
\end{cor}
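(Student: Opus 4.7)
The plan is to prove the rank equality by matching an explicit upper bound to a lower bound from one final application of Corollary~\ref{cor:layerreduction}, this time along the third axis---the stacking axis that was not reduced during the proof of Theorem~\ref{thm:mainthm}.

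For the upper bound, I decompose $T'_n$ along its third axis. The first $\lfloor \lg(n-1)\rfloor + 1$ layers are precisely the layers of $T_{n-1}$ zero-padded to size $n\times n$, which jointly contribute at most $\rank(T_{n-1}) = 2(n-1) - 2H(n-1) + 1$ simple tensors because zero-padding preserves tensor rank. Each of the remaining $n - \lfloor \lg(n-1)\rfloor - 1$ layers is a rank-one matrix of the form $\vec{e}_{j}\otimes \vec{e}_n^T$ and contributes a single simple tensor apiece. Summing produces the claimed bound $3n - 2H(n-1) - \lfloor \lg(n-1)\rfloor - 2$.

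For the lower bound, I take the $m = n - \lfloor \lg(n-1)\rfloor - 1$ rank-one ``extra'' layers (after reordering them to appear first along the third axis) as the linearly independent collection required by Corollary~\ref{cor:layerreduction}; linear independence is immediate since each extra is the indicator of a distinct entry in column $n$. The corollary furnishes constants $c_{i,j}$ with $\rank(T'_n) \ge m + \rank(M)$, where $M$ is the $n\times n\times (\lfloor\lg(n-1)\rfloor+1)$ tensor obtained by adding $\mathbb{F}$-linear combinations of extras to each non-extra layer. Because every extra layer is supported entirely inside column $n$, these modifications can only alter entries in column $n$ of the non-extra layers, so restricting $M$ to the index set $[n-1]\times[n-1]\times[\lfloor \lg(n-1)\rfloor + 1]$ recovers $T_{n-1}$ exactly, independent of the particular constants produced by the corollary. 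Since tensor rank is monotone under sub-tensor restriction, $\rank(M) \ge \rank(T_{n-1}) = 2n - 2H(n-1) - 1$, and adding $m$ matches the upper bound.

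The size claim is immediate from the construction, and explicitness is a routine dispatch on the third coordinate $k$: if $k \le \lfloor \lg(n-1)\rfloor+1$ we defer to the explicit algorithm for $T_{n-1}$ promised by Theorem~\ref{thm:mainthm}, and otherwise we return $\lib j = n\rib\cdot\lib i = k - \lfloor \lg(n-1)\rfloor - 1\rib$, both computable in $O(\operatorname{polylog}(n))$ time. The only delicate point in the whole argument is confirming that the sub-tensor restriction lower bound on $\rank(M)$ is robust against any constants returned by the existential statement of Corollary~\ref{cor:layerreduction}; this is exactly why the construction is designed so that every extra layer is supported in column $n$, disjoint from the columns $[n-1]$ carrying the embedded $T_{n-1}$.
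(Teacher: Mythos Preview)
Your argument is correct and follows essentially the same route as the paper: eliminate the $n-\lfloor\lg(n-1)\rfloor-1$ ``extra'' rank-one layers via Corollary~\ref{cor:layerreduction}, observe that the resulting layers agree with the padded $T_{n-1}$ outside column~$n$, then drop that column (and the unused bottom row) to recover $T_{n-1}$ and invoke Theorem~\ref{thm:mainthm}. Your write-up is in fact slightly more complete than the paper's, which only spells out the lower bound and leaves the matching upper bound implicit; your explicit upper-bound decomposition (rank of the padded $T_{n-1}$ plus one per extra layer) fills that in cleanly.
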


Note that this analysis is exact.  Appendix~\ref{sect:algtensors} contains
similar lower bounds over $\mathbb{F}_2$ using different methods, where no
non-trivial upper bound is known.

\section{Permutation Tensors}\label{sect:permtensors}

One of the most natural families of full-rank matrices are permutation matrices.
This section examines a natural generalization of permutation matrices to
tensors, which we call permutation tensors.  A counting argument
(Proposition~\ref{prop:countinglb}) shows that
there exists order-3 permutation tensors of super-linear rank (over any fixed finite
field), so it is natural to conjecture that permutation tensors may all have
near-maximal rank, just as in the matrix setting.  However, we show
(Subsection~\ref{subsect:rankub})
that this is false: we give tensor rank upper bounds proving that permutation
tensors constructed from groups have rank much less than maximal.

We begin with the formal definition of permutation tensors.

\begin{defn}
	Let $\mathbb{F}$ be a field, and $T$ be a tensor $T:[n]^d\to\mathbb{F}$.
	$T$ is a \textbf{permutation tensor} if $T$ assumes only 0/1 values,
	and $T$ has exactly one 1 in each generalized row. (A
	\textbf{generalized row},
	sometimes just ``row'', is
	the set of $n$ inputs to $T$ resulting from fixing $d-1$ of the
	coordinates, and varying the remaining coordinates).
\end{defn}

It is not hard to see that order-2 permutation tensors and permutation matrices;
as permutation matrices are those 0/1-matrices such that each row and column have exactly one 1.  

\subsection{Permutation Tensors: Rank Lower Bounds}

We now show that there exist permutation tensors of super-linear rank (over
finite fields).

\begin{prop}
	Let $\mathbb{F}$ be a finite field.  Then there exists a permutation tensor
	$T:[n]^3\to\mathbb{F}$ of rank at least
	$\Omega(n\log_{|\mathbb{F}|}n)$.
	\label{prop:countinglb}
\end{prop}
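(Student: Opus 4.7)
The plan is to combine a bijection between order-3 permutation tensors and Latin squares with a pigeonhole counting argument: Latin squares are numerous, low-rank tensors are few, so some permutation tensor must have high rank.

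First I would identify permutation tensors $T:[n]^3\to\mathbb{F}$ with Latin squares $L:[n]\times[n]\to[n]$ via $T(i,j,k)=\lib L(i,j)=k\rib$. The three ``exactly one $1$ per generalized row'' conditions translate directly into the three defining conditions of a Latin square: fixing $(j,k)$ and varying $i$ amounts to $L(\cdot,j)$ being a permutation; fixing $(i,k)$ and varying $j$ amounts to $L(i,\cdot)$ being a permutation; fixing $(i,j)$ and varying $k$ amounts to $L$ being a well-defined function $[n]\times[n]\to[n]$.

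Next I would invoke the standard lower bound on the number $L_n$ of Latin squares of order $n$, namely $L_n\ge (n!)^{2n}/n^{n^2}$, and hence $\lg L_n\ge n^2\lg n - \Theta(n^2)$ by Stirling. This bound follows by constructing the square row by row: once $k$ rows have been placed, the number of valid choices for the next row is the permanent of a $0/1$ matrix with all row and column sums equal to $n-k$, and van der Waerden's permanent inequality lower-bounds this by $(n-k)^n\cdot n!/n^n$; telescoping the product yields the claim.

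Finally, I would count tensors of bounded rank: any $T:[n]^3\to\mathbb{F}$ of rank at most $r$ can be written as $\sum_{k=1}^r\vec{u}_k\otimes\vec{v}_k\otimes\vec{w}_k$ and is thus determined by $3nr$ field elements, so there are at most $|\mathbb{F}|^{3nr}$ such tensors (enormous overcounting, but only helpful; finiteness of $\mathbb{F}$ is essential here). If every permutation tensor had rank at most $r$, pigeonhole would give
\[
n^{(1-o(1))n^2} \le L_n \le |\mathbb{F}|^{3nr},
\]
forcing $r = \Omega(n\log_{|\mathbb{F}|}n)$, so some permutation tensor achieves this rank bound. The only step requiring real care is the Latin-square lower bound; the rest is a one-line counting comparison.
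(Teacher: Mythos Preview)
Your proposal is correct and follows essentially the same approach as the paper: the bijection with Latin squares, the same $(n!)^{2n}/n^{n^2}$ lower bound (the paper likewise attributes it to the permanent inequality), and the standard counting of rank-$\le r$ tensors over a finite field. You have simply spelled out the ``standard counting argument'' that the paper leaves to the reader.
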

\begin{proof}
	A \textit{Latin square} is an $n\times n$ matrix, with each entry
	labeled with a symbol from $[n]$, such that no symbol is duplicated in
	any row or column.  Observe that order-3 permutation tensors exactly
	correspond to Latin squares.  
	We now use the following fact about Latin squares, whose proof uses
	lower bounds for the permanent of doubly-stochastic matrices.  
	\begin{thm*}[\cite{vl-w}]
		The number of $n\times n$ Latin squares is at least
		$(n!)^{2n}/n^{n^2}$.
	\end{thm*}
	A standard counting argument completes the claim.
\end{proof}

It remains unclear if this result generalizes to higher orders.  That is, 
can one show that for any $k>3$ there exist permutation tensors of rank at least
$\omega(n^{\lfloor d/2\rfloor})$?

\subsection{Permutation Tensors: Rank Upper Bounds}\label{subsect:rankub}

In this section we define a class of permutation tensors constructed from
finite groups, and show that these tensors have rank far from maximal.  We will give
two rank upper-bound methods.  The first method uses representation theory and
accordingly only works where the group has a complete set of irreducible
representations (which usually means ``large'' fields).  The second method is
based on polynomial interpolation, and while it gives worse upper bounds and
only works for finite abelian groups, it gives results over any field.  Neither of
these methods applies to all finite non-abelian groups over small fields, and the rank
of the corresponding tensors is unclear.

\begin{defn}
	Let $G$ be a finite group (written multiplicatively, with identity $1_G$) , and
	$\mathbb{F}$ a field.  Define the order-$k$ \textbf{group tensor}
	$T_G^d:G^d\to\mathbb{F}$ by \[T_G^d(g_1,\ldots,g_d)=\lib g_1\cdots
	g_d=1_G\rib\]
\end{defn}

We first explore the representation-theory based upper bound.  To do so, we
first cite relevant facts from representation theory.

\begin{thm}[\cite{serre}]
	Let $G$ be a finite group and $\mathbb{F}$ a field.  A
	\textbf{representation} of $G$ is a homomorphism
	$\rho:G\to\mathbb{F}^{d\times d}$, where $d$ is the \textbf{dimension}
	of the representation and is denoted $\dim\rho$. The \textbf{character
	of a representation} $\rho$ is a map $\chi_\rho:G\to\mathbb{F}$ defined by
	$\tr\circ\rho$, that is, taking the trace of the resulting matrix of the
	representation.  

	If $\chara(\mathbb{F})$ is coprime to $|G|$, and $\mathbb{F}$ contains
	$N$-th roots of unity, for $N$ equal to the least common multiple of
	all of the orders of elements of $G$, then there exists a
	\textbf{complete set of irreducible representations}.  In particular,
	for $c$ denoting the number of conjugacy classes of $G$, there is
	a set of representations $\rho_1,\ldots,\rho_c$ and associated
	characters such that (among other properties) we have
	\begin{enumerate}
		\item $\frac{1}{|G|}\sum_{i=1}^c (\dim\rho_i)\cdot\chi_i(g) = \lib g=e\rib$
		\item $\sum_{i=1}^{c} (\dim\rho_i)^2=n$
		\item $\dim\rho_i$ divides $|G|$
	\end{enumerate}

	In particular, for finite abelian groups, $c=n$ and $\dim\rho_i=1$ for
	all $i$.
	\label{thm:repthy}
\end{thm}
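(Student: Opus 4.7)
The statement is a standard compilation of results from representation theory (as cited in Serre), so my plan is to sketch how one would recover each item from a uniform starting point rather than prove each independently. The engine is \emph{Maschke's theorem} together with \emph{Artin--Wedderburn}: since $\chara(\mathbb{F})$ is coprime to $|G|$, the averaging operator $\frac{1}{|G|}\sum_{g\in G} g$ is well-defined, and a standard argument shows every short exact sequence of $\mathbb{F}[G]$-modules splits. Hence $\mathbb{F}[G]$ is semisimple, so Artin--Wedderburn yields a decomposition $\mathbb{F}[G]\cong\bigoplus_{i=1}^{c'} M_{d_i}(D_i)$ where each $D_i$ is a finite-dimensional division algebra over $\mathbb{F}$. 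The hypothesis that $\mathbb{F}$ contains enough roots of unity guarantees that $\mathbb{F}$ is a splitting field for $G$ (one verifies this by checking that every irreducible $\rho$ can be diagonalized on each cyclic subgroup, which only needs roots of unity of the prescribed orders), so in fact each $D_i=\mathbb{F}$ and the $\rho_i$ are absolutely irreducible.

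From this decomposition, item (2) is immediate by comparing dimensions: $|G|=\dim_\mathbb{F}\mathbb{F}[G]=\sum_i d_i^2$ where $d_i=\dim\rho_i$. For item (1), the trick is to identify $c'$ with the number $c$ of conjugacy classes by computing the center of $\mathbb{F}[G]$ in two ways (it has dimension $c$ by the conjugacy-class basis, and dimension $c'$ from Wedderburn), and then to apply the character of the regular representation $\chi_{\mathrm{reg}}$, which on one hand satisfies $\chi_{\mathrm{reg}}(g)=|G|\cdot\lib g=e\rib$ (by counting fixed basis vectors) and on the other decomposes as $\sum_i (\dim\rho_i)\chi_i$ since $\rho_i$ appears in the regular representation with multiplicity $\dim\rho_i$. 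Dividing by $|G|$ gives exactly the identity in (1).

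Item (3), the divisibility $\dim\rho_i \mid |G|$, is the subtlest point and will be the main obstacle. The classical route (in characteristic zero or after lifting) proceeds by showing that the elements $\frac{|G|}{\dim\rho_i}\chi_i(g)$ are algebraic integers: one writes this quantity as an eigenvalue of a central element of $\mathbb{Z}[G]$ acting on $\rho_i$, uses that algebraic integers form a ring, and concludes that $|G|/\dim\rho_i$ is a rational algebraic integer, hence an integer. In positive characteristic the proof instead reduces modulo a prime from a lift to characteristic zero (using that the splitting-field hypothesis lets one work over a $p$-adic ring), so the characteristic-$p$ case is not more difficult once one has set up the lifting framework.

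Finally, the abelian case is essentially a corollary: when $G$ is abelian every conjugacy class is a singleton, so $c=|G|$, and combining with $\sum d_i^2=|G|$ forces all $d_i=1$. Equivalently, Schur's lemma applied to an irreducible representation of an abelian group over a splitting field shows the representation must be one-dimensional. I would present the proof in the order (Maschke $\Rightarrow$ Wedderburn) $\to$ (items 1,2 via regular representation) $\to$ (item 3 via algebraic integers) $\to$ (abelian corollary), and expect the algebraic-integer step to be the only place where a genuinely nontrivial number-theoretic input is needed.
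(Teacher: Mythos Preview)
The paper does not give its own proof of this theorem: it is stated as a citation from Serre's textbook and used as background for the subsequent Theorem~\ref{thm:repthyrank}. Your sketch is a correct outline of the standard arguments one finds in that reference (Maschke $\Rightarrow$ Artin--Wedderburn for the structure, the regular representation for items (1) and (2), algebraic integers for item (3), and the counting argument for the abelian case), so in that sense you are in complete agreement with what the paper relies on.

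One small point worth flagging in your sketch: the claim that $\mathbb{F}$ is a splitting field once it contains the $N$-th roots of unity is true, but your proposed verification (``every irreducible $\rho$ can be diagonalized on each cyclic subgroup'') only shows that character values land in $\mathbb{F}$, which is strictly weaker than showing each Wedderburn factor $D_i$ equals $\mathbb{F}$. The honest route here is Brauer's theorem on splitting fields, which genuinely requires Brauer induction rather than an elementary diagonalization argument. This does not affect the validity of the statement or of the rest of your outline, but if you were to write this up as a proof rather than a citation, that step would need more than what you have indicated.
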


Notice that property (1) in the above theorem is an instance of the
column orthonormality relations of character tables, which follow from
the more commonly mentioned row orthonormality relations. 
We now use these facts to derive upper bounds on the rank of $T_G^k$ when the
conditions to the above theorem hold.

\begin{thm}
	Let $G$ be a finite group, $d\ge 2$ and $\mathbb{F}$ a field, such that
	$\chara(\mathbb{F})$ is coprime to $|G|$, and $\mathbb{F}$ contains
	$N$-th roots of unity, for $N$ equal to the least common multiple of the
	orders of elements of $G$.  Then given the irreducible representations
	$\rho_1,\ldots,\rho_c$ for $G$ over $\mathbb{F}$, the order-$d$ group tensor has
	$|G|\le \rank_\mathbb{F}(T_G^d)\le\sum_{i=1}^c(\dim\rho_i)^d\le |G|^{d/2}$.
	
	In particular, for finite abelian groups, $\rank_\mathbb{F}(T_G^k)=|G|$.
	\label{thm:repthyrank}
\end{thm}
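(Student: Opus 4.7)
The plan is to combine the character-theoretic identity from Theorem~\ref{thm:repthy}(1) with the cyclic index expansion of the trace of a matrix product. First I would rewrite the indicator defining $T_G^d$ as a character sum; then I would expand each $\chi_i(g_1\cdots g_d)$ into a sum of scalar monomials, one per tuple of matrix indices, and observe that each monomial is automatically a simple tensor in $(g_1,\ldots,g_d)$.

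For the upper bound, property~(1) of Theorem~\ref{thm:repthy} gives
\begin{equation*}
T_G^d(g_1,\ldots,g_d) = \lib g_1\cdots g_d = 1_G\rib = \frac{1}{|G|}\sum_{i=1}^c(\dim\rho_i)\,\chi_i(g_1\cdots g_d),
\end{equation*}
and, since $\rho_i$ is a homomorphism,
\begin{equation*}
\chi_i(g_1\cdots g_d) = \tr\bigl(\rho_i(g_1)\cdots\rho_i(g_d)\bigr) = \sum_{j_1,\ldots,j_d\in[\dim\rho_i]}\rho_i(g_1)_{j_1,j_2}\rho_i(g_2)_{j_2,j_3}\cdots\rho_i(g_d)_{j_d,j_1}.
\end{equation*}
Each summand factors as a product whose $k$-th factor depends only on $g_k$, hence is a simple tensor on $G^d$. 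Summing over $i$ and absorbing the scalar $(\dim\rho_i)/|G|$ (well-defined by the coprimality hypothesis on $\chara(\mathbb{F})$) into one of the factors, this exhibits $T_G^d$ as a sum of $\sum_{i=1}^c(\dim\rho_i)^d$ simple tensors. The further bound $\sum_i(\dim\rho_i)^d\le|G|^{d/2}$ comes from property~(2): each $(\dim\rho_i)^2\le\sum_j(\dim\rho_j)^2=|G|$, so $(\dim\rho_i)^{d-2}\le|G|^{(d-2)/2}$ for $d\ge 2$, yielding $\sum_i(\dim\rho_i)^d\le|G|^{(d-2)/2}\sum_i(\dim\rho_i)^2=|G|^{d/2}$.

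For the lower bound $\rank_\mathbb{F}(T_G^d)\ge|G|$, I would use the standard principle that fixing coordinates in a rank-$r$ tensor decomposition yields a lower-order tensor of rank at most $r$. Specifically, fixing $g_3=\cdots=g_d=1_G$ reduces $T_G^d$ to the matrix $M(g_1,g_2)=\lib g_1g_2=1_G\rib$, the permutation matrix supported on the pairs $(g,g^{-1})$, which has rank $|G|$. The abelian case is then immediate from the identities $c=|G|$ and $\dim\rho_i=1$ stated in Theorem~\ref{thm:repthy}, which collapse the upper bound to $|G|$ and match the lower bound.

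The only genuinely substantive step is recognizing that the trace of a product of matrices decomposes into a sum of rank-one terms across the tensor factors; everything else is bookkeeping. I do not anticipate a real obstacle beyond checking that the stated field hypotheses make the character identity valid and $|G|$ invertible, both of which are given directly.
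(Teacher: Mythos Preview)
Your proposal is correct and follows essentially the same approach as the paper: the character identity from Theorem~\ref{thm:repthy}(1), the trace-of-matrix-product expansion into $(\dim\rho_i)^d$ simple tensors, and the lower bound via restriction to a permutation matrix all match the paper's argument.

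The one place you differ is the estimate $\sum_i(\dim\rho_i)^d\le|G|^{d/2}$. The paper proves this by first establishing the superadditivity inequality $(n+m)^{d/2}\ge n^{d/2}+m^{d/2}$ via a binomial expansion, and then iteratively merging the $n_i=(\dim\rho_i)^2$ into a single term. Your argument---bounding each $(\dim\rho_i)^{d-2}$ by $|G|^{(d-2)/2}$ and then summing against $(\dim\rho_i)^2$---is shorter and more direct, and avoids the auxiliary inequality entirely. Both yield the same bound; yours is the cleaner route.
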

\begin{proof}
	
	\underline{$\rank_\mathbb{F}(T_G^d)\ge |G|$:} This follows from
	observing that for fixed $g_3,\ldots,g_d$,
	$T_G^d(\cdot,\cdot,g_3,\ldots,g_d)$ is a permutation matrix, and thus
	its rank (of $|G|$) lower bounds the rank of $T_G^d$ (over any field).
	This can also be seen by induction on Corollary~\ref{cor:droplayers}.
	
	\underline{$G$ abelian $\implies$ $\rank_\mathbb{F}(T_G^k)\le|G|$:}
	Theorem~\ref{thm:repthy} further implies $\dim\rho_i=1$ for all
	irreducible representations of finite abelian groups, which implies
	$\rank_\mathbb{F}(T_G^d)\le |G|$ for abelian groups.  

	\underline{$\sum_{i=1}^c(\dim\rho_i)^d\le |G|^{d/2}$:}
	Theorem~\ref{thm:repthy}(2) shows that $\sum_{i=1}^c(\dim\rho_i)^2=n$.
	Thus the claim is equivalent to showing that for $d\in\mathbb{Z}$, $d\ge
	2$ and $n_i\in\mathbb{R}_{\ge 0}$, $\sum n_i=n\implies \sum n_i^{d/2}\le
	n^{d/2}$.  To show this, we first show that
	$(n+m)^{d/2}+0^{d/2}=(n+m)^{d/2}\ge n^{d/2}+m^{d/2}$.  To see, this,
	observe that assuming without loss of generality that $n\ge m$, we have
	that \[(n+m)^d\ge n^d+\binom{d}{\lceil d/2\rceil}n^{\lceil
	d/2\rceil}m^{\lfloor d/2\rfloor}+m^d\ge
	n^d+2n^{d/2}m^{d/2}+m^d=(n^{d/2}+m^{d/2})^2\] where we use that
	$\binom{d}{\lceil d/2\rceil}\ge d\ge 2$.  Taking square roots yields
	$(n+m)^{d/2}+0^{d/2}\ge n^{d/2}+m^{d/2}$.  Thus, given non-negative
	$n_i$ summing to $n$, one can iteratively zero out certain $n_i$ while
	increasing the sum $\sum n_i^{d/2}$, until only $n_1=n$ and thus $\sum
	n_i^{d/2}=n^{d/2}$.  Thus, this is a bound on the initial sum of $\sum
	n_i^{d/2}$.
	
	\underline{$\rank_\mathbb{F}(T_G^d)\le\sum_{i=1}^c(\dim\rho_i)^d$:}
	The result will follow by constructing, for each $i$, the order-$d$
	tensor \[ T_{\rho_i}^d(g_1,\ldots,g_d)=\chi_i(g_1\cdots g_d)\] in rank
	$(\dim\rho_i)^d$.  Theorem~\ref{thm:repthy}(1) shows that
	$T_G^d=\frac{1}{|G|}\sum_{i=1}^c(\dim\rho_i)\cdot T_{\rho_i}^d$ and so distributing the
	$\dim\rho_i/|G|$ term inside the simple tensors yields the result (where we
	crucially use the restriction on the field characteristic).

	Thus, all that remains is to show that
	$\rank_\mathbb{F}(T_{\rho_i}^d)\le (\dim\rho_i)^d$.  Using the group
	homomorphism properties of the representations and expanding the
	definition of the trace through the matrix multiplication we see
	\begin{align*}
		T_{\rho_i}^d(g_1,\ldots,g_d)
			&= \sum_{k_1=1}^{\dim\rho_i}\cdots\sum_{k_{d}=1}^{\dim\rho_i} (\rho_i(g_1))_{k_1,k_2}\cdots (\rho_i(g_{d-1}))_{k_{d-1},k_{d}}(\rho_i(g_d))_{k_{d},k_1}
	\end{align*}
	and one can observe that for fixed $k_1,\ldots,k_d$, the function
	$(\rho_i(g_1))_{k_1,k_2}\cdots (\rho_i(g_{d-1}))_{k_{d-1},k_{d}}(\rho_i(g_d))_{k_{d},k_1}$ is a simple tensor so
	the above shows $\rank_\mathbb{F}(T_{\rho_i}^d)\le (\dim\rho_i)^d$ as
	desired.
\end{proof}

The above result is possibly tight, motivating the question: is there a
group $G$ and irreducible representation $\rho$ of $G$ such that
$\rank_{\mathbb{F}}(T_{\rho}^d)<(\dim\rho)^d$?  As the above result is tight for
abelian groups, any affirmative answer to the above question would involve a
non-abelian $G$.

Even supposing the above result was tight, one can ask what implications this gives
for circuit lower bounds, especially because group tensors are explicit when the
defining group operation is efficiently computable. However, applying tensor
rank lower bounds to Raz's~\cite{raz} result requires order-$d$ tensors of rank
$n^{(1-o(1))d}$, and Theorem~\ref{thm:repthyrank} shows that no group tensor can
achieve this rank over large fields.  In particular, for the purposes of tensor
rank lower bounds, the lower bounds of Corollary~\ref{cor:maincor-highdim} are
asymptotically (in $d$) as good as the rank achievable by any group tensor (over
large fields).

However, if tight, Theorem~\ref{thm:repthyrank} would yield better lower
bounds than Corollary~\ref{cor:maincor-highdim} for odd $d$.  In particular, the
symmetric group $\mathfrak{G}_n$ has a complete set of irreducible
representations over the rationals~\cite{serre}.  Thus, the tightness of
Theorem~\ref{thm:repthyrank} would imply a lower bound for
$\rank_\mathbb{Q}(T_{\mathfrak{G}_n}^d)$, which is an explicit tensor.  To
understand this lower bound, the following fact is useful.
\begin{thm}[\cite{sym-group-max-rep}]
	The largest dimension of an irreducible representation of
	$\mathfrak{G}_n$ over $\mathbb{Q}$ is of size
	$\sqrt{n!}/e^{\Theta(\sqrt{n})}$
	\label{thm:sym-group-max-rep}
\end{thm}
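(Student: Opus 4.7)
The plan is to exploit the well-known parametrization of the irreducible $\mathbb{Q}$-representations of $\mathfrak{G}_n$ by the partitions $\lambda\vdash n$ (via Young's construction of the Specht modules, which are all realizable over $\mathbb{Q}$), and to bound $\max_\lambda\dim V_\lambda$ from above and below separately.

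For the lower bound, I will combine the standard identity $\sum_{\lambda\vdash n}(\dim V_\lambda)^2 = |\mathfrak{G}_n| = n!$ (which already appears as Theorem~\ref{thm:repthy}(2)) with the Hardy--Ramanujan asymptotic $p(n) = e^{\Theta(\sqrt{n})}$ for the partition function. Since there are only $p(n)$ summands, an averaging argument shows that some $\lambda$ has $(\dim V_\lambda)^2 \ge n!/p(n)$, so
\[
\max_\lambda \dim V_\lambda \;\ge\; \sqrt{n!/p(n)} \;=\; \sqrt{n!}\,/\,e^{\Theta(\sqrt{n})}.
\]

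For the upper bound, the trivial estimate $\max_\lambda\dim V_\lambda \le \sqrt{n!}$ is immediate from the same sum-of-squares identity. To sharpen it to $\sqrt{n!}/e^{c\sqrt{n}}$ for some positive constant $c$, I would invoke the Vershik--Kerov--Logan--Shepp limit shape theorem: under the Plancherel measure (which weights $\lambda$ by $(\dim V_\lambda)^2/n!$), the Young diagram of $\lambda$, rescaled by a factor of $\sqrt{n}$, concentrates on a specific deterministic curve. Feeding this concentration into the hook-length formula $\dim V_\lambda = n!/\prod_c h(c)$ and estimating $\sum_c \log h(c)$ as a Riemann sum against the limit shape then gives the refined upper bound.

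The main obstacle is this refined upper bound. The lower bound is a one-line pigeonhole calculation on top of the $p(n)=e^{\Theta(\sqrt{n})}$ estimate, but producing the multiplicative $e^{\Theta(\sqrt{n})}$ gap on the upper side genuinely requires asymptotic analysis of hook-length products on typical Young diagrams. The bare sum-of-squares identity yields only $\sqrt{n!}$, and any improvement past that demands controlling how the mass in $\sum_\lambda (\dim V_\lambda)^2$ is spread across the $p(n)$ summands, rather than potentially concentrating on one, which is exactly what the Plancherel limit-shape machinery accomplishes.
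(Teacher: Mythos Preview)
The paper does not prove this statement: Theorem~\ref{thm:sym-group-max-rep} is quoted as a black box from the reference~\cite{sym-group-max-rep}, with no argument given in the text. So there is no ``paper's own proof'' to compare your proposal against.

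That said, your sketch is the standard route to this result and is essentially correct. The lower bound is exactly the pigeonhole argument you give, using $\sum_{\lambda\vdash n}(\dim V_\lambda)^2=n!$ together with the Hardy--Ramanujan estimate $p(n)=e^{\Theta(\sqrt n)}$. The upper bound is, as you identify, the substantive part: the sum-of-squares identity alone only gives $\sqrt{n!}$, and shaving off the factor $e^{c\sqrt n}$ is precisely the content of the Vershik--Kerov / Logan--Shepp analysis of the hook-length product on a Plancherel-typical diagram. Your description of feeding the limit shape into $\sum_c\log h(c)$ as a Riemann sum is the right picture, though in practice one needs a large-deviation or concentration statement (not just convergence to the limit shape) to conclude that \emph{every} $\lambda$, not just the typical one, obeys the bound; this is handled in the cited literature.
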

In particular, for $d=3$, all of the above imply that
$\rank_\mathbb{Q}(T_{\mathfrak{G}_n}^d)\ge
|\mathfrak{G}_n|^{1.5}/e^{\Theta(\sqrt{\log|\mathfrak{G}_n|})}$, which is
$\Omega(n^{1.5-\epsilon})$, for any $\epsilon>0$.  Then, applying
Strassen's~\cite{strassen-tensor} result would yield $\Omega(n^{1.5-\epsilon})$
lower bounds for the (unrestricted) circuit size of explicit degree-3
polynomials (that have 0/1 coefficients).  Such a conclusion would surpass the
best known circuit size lower bound even for super-constant degree polynomials,
which is Strassen's~\cite{strassen-lb} $\Omega(n\log n)$ lower bound for degree
$n$ polynomials.  Thus, tightness of Theorem~\ref{thm:repthyrank} would have
interesting consequences.

Regardless of whether the result is tight, Theorem~\ref{thm:repthyrank} only
works over ``large fields'' in general.  In
particular, it does not (in general) give insight into the rank of group tensors over
fixed finite fields, or even over the rationals.  To take an example, the cyclic
group $\mathbb{Z}_n$ requires $n$-th roots of unity for its irreducible
representations.  While Lemma~\ref{lem:changefields} does show a relation
between $\rank_{\mathbb{Q}}(T_{\mathbb{Z}_n}^d)$ and
$\rank_{\mathbb{Q}[x]/\langle x^n-1\rangle}(T_{\mathbb{Z}_n}^d)$ (where
$\mathbb{Q}[x]/\langle x^n-1\rangle$ is the field of rationals adjoined with a $n$-th
primitive root of unity, so Theorem~\ref{thm:repthyrank} applies) this
relationship implies nothing beyond trivial rank upper bounds.  Thus, to achieve
rank upper bounds for group tensors over small fields we take a different
approach, one using polynomial interpolation.  Our result only applies to finite
abelian groups, but is able to show that have ``low'' rank in this regime.  

\begin{prop}
	Let $\mathbb{F}$ be a field with at least $d(n-1)+1$ elements.  Let
	$T:\llbracket n \rrbracket^d\to\mathbb{F}$ be a tensor such that
	\begin{equation*}
		T(i_1,\ldots,i_d)=\sum_{m=0}^{d(n-1)} c_m \lib i_1+i_2+\cdots+i_d=m\rib
	\end{equation*}
	for constants $c_m\in\mathbb{F}$.  Then, $\rank(T)\le d(n-1)+1$.
	\label{prop:interpolate}
\end{prop}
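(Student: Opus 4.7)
The idea is that $T$ depends on $(i_1,\dots,i_d)$ only through the sum $s=i_1+\cdots+i_d$, which takes values in $\{0,1,\dots,d(n-1)\}$, a set of size $d(n-1)+1$. So $T$ is secretly a univariate function of $s$, and I can express it as a linear combination of exponentials $\alpha^s$, each of which factors as a simple tensor.

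More concretely, first I would pick any $d(n-1)+1$ distinct field elements $\alpha_0,\alpha_1,\dots,\alpha_{d(n-1)}\in\mathbb{F}$, which is possible precisely because of the field-size hypothesis. For each $k$, define the rank-one tensor
\begin{equation*}
S_k = \vec{v}_k\otimes\vec{v}_k\otimes\cdots\otimes\vec{v}_k, \qquad \vec{v}_k=(1,\alpha_k,\alpha_k^2,\dots,\alpha_k^{n-1})\in\mathbb{F}^n,
\end{equation*}
so that $S_k(i_1,\dots,i_d)=\prod_{j=1}^d \alpha_k^{i_j}=\alpha_k^{i_1+\cdots+i_d}$, using the convention $0^0=1$ if some $\alpha_k$ happens to be $0$.

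Next I would look for scalars $\lambda_0,\dots,\lambda_{d(n-1)}\in\mathbb{F}$ such that $\sum_k \lambda_k S_k = T$. Since both sides depend only on the sum $s=i_1+\cdots+i_d$, this amounts to solving, for each $m\in\{0,1,\dots,d(n-1)\}$, the scalar equation $\sum_{k=0}^{d(n-1)} \lambda_k \alpha_k^m = c_m$. The coefficient matrix is the $(d(n-1)+1)\times(d(n-1)+1)$ Vandermonde matrix $(\alpha_k^m)_{m,k}$, which is invertible because the $\alpha_k$ are distinct. So suitable $\lambda_k$ exist, and then
\begin{equation*}
T = \sum_{k=0}^{d(n-1)} \lambda_k\, S_k = \sum_{k=0}^{d(n-1)} (\lambda_k \vec{v}_k)\otimes\vec{v}_k\otimes\cdots\otimes\vec{v}_k
\end{equation*}
exhibits $T$ as a sum of $d(n-1)+1$ simple tensors, proving $\rank(T)\le d(n-1)+1$.

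There is really no serious obstacle here; the argument is just polynomial interpolation in the sum variable, with the field-size hypothesis corresponding exactly to having enough interpolation nodes to make the Vandermonde matrix invertible. The only point worth a moment's care is that the representation $\alpha_k^{i_1+\cdots+i_d}=\prod_j\alpha_k^{i_j}$ is what lets a pure exponential in the sum become a simple tensor, so the construction would break if one replaced the sum $i_1+\cdots+i_d$ by a more general combinatorial condition; this is why the proposition is naturally aligned with abelian groups (under $\mathbb{Z}_n$, the addition being the group operation).
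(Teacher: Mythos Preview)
Your proof is correct and is essentially the same interpolation argument as the paper's: your rank-one tensors $S_k$ are exactly the paper's evaluations $P(\alpha_k,\cdot)$ of the auxiliary polynomial $P(\alpha,\{X^{(j)}_i\})=\prod_{j}\sum_i \alpha^i X^{(j)}_i$, and your Vandermonde inversion is precisely the polynomial interpolation step the paper invokes (citing Ben-Or). The only difference is packaging---the paper routes through set-multilinear polynomials while you work directly with tensors---but the content is identical.
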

\begin{proof}[Proof Sketch]
	We sketch the proof here, the full proof is in
	Appendix~\ref{sect:permtensorsproofs}.  The proof follows the 
	result of Ben-Or (as reported in Shpilka-Wigderson~\cite{SW01}) on
	computing the symmetric polynomials efficiently over large fields.  To
	compute a desired polynomial
	$f(\vec{x})$, one can introduce a new variable $\alpha$ and an auxiliary
	polynomial $P(\alpha,\vec{x})$ such that
	\begin{itemize}
		\item $P$ is efficiently computable, of degree at most $d'$ in $\alpha$
		\item For some $m$, $f(\vec{x})=C_{\alpha^m}(P(\alpha,\vec{x}))$. That is, $f$ equals the
		coefficient of $\alpha^m$ in $P$.
	\end{itemize}
	To compute $f$ on input $\vec{x}$, we can then evaluate $P$ on
	$(\alpha_1,\vec{x}),\ldots,(\alpha_{d'+1},\vec{x})$ and then use interpolation to recover
	$C_{\alpha^m}=f(x)$.

	To apply this idea to tensors, we observe the coefficients (in the
	variable $\alpha$) of the
	polynomial 
	\begin{equation*}
		P(\alpha,\{X^{(i)}_j\}_{i,j}):=\prod_{i=1}^d \left(X^{(i)}_0+\alpha X^{(i)}_1+\alpha^2 X^{(i)}_2+\cdots+\alpha^{n-1} X^{(i)}_{n-1}\right)
	\end{equation*}
	exactly correspond to the type of tensors we are trying to produce.  As $P$ is degree at
	most $d(n-1)$ in $\alpha$, and further $P$ is a rank one tensor in disguise, interpolation
	completes the
	result.
\end{proof}

We now turn to using Proposition~\ref{prop:interpolate} to upper bound the rank group tensors formed
from cyclic groups.

\begin{cor}
	\label{cor:interpolate-cyclic}
	Let $\mathbb{F}$ be a field with at least $d(n-1)+1$ elements. Then,
	$\rank(T_{\mathbb{Z}_n}^d)\le d(n-1)+1$.
\end{cor}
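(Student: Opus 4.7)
The plan is to realize $T_{\mathbb{Z}_n}^d$ as a tensor of the exact form handled by Proposition~\ref{prop:interpolate} and then invoke that proposition directly. First I identify the cyclic group $\mathbb{Z}_n$ with the set $\lib n\rib = \{0,1,\ldots,n-1\}$ equipped with addition modulo $n$, so that the group tensor becomes
\[T_{\mathbb{Z}_n}^d(i_1,\ldots,i_d) = \lib i_1+\cdots+i_d \equiv 0 \pmod{n}\rib,\]
where the $i_j$ are now integers in $\lib n\rib$.

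The key observation is that because each $i_j \in \{0,1,\ldots,n-1\}$, the integer sum $i_1+\cdots+i_d$ is forced to lie in the finite range $\{0,1,\ldots,d(n-1)\}$. Therefore the congruence $i_1+\cdots+i_d \equiv 0 \pmod{n}$ is equivalent to the integer equation $i_1+\cdots+i_d = m$ for some $m$ in the finite set $M = \{m \in \{0,1,\ldots,d(n-1)\} : n \mid m\}$. This allows me to rewrite the tensor as
\[T_{\mathbb{Z}_n}^d(i_1,\ldots,i_d) = \sum_{m=0}^{d(n-1)} c_m\, \lib i_1+i_2+\cdots+i_d = m \rib,\]
where I set $c_m = 1$ when $m \in M$ and $c_m = 0$ otherwise. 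This is precisely the syntactic form required by Proposition~\ref{prop:interpolate}, and the field-size hypothesis $|\mathbb{F}| \ge d(n-1)+1$ is inherited from the statement of the corollary.

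Applying Proposition~\ref{prop:interpolate} then immediately gives $\rank_\mathbb{F}(T_{\mathbb{Z}_n}^d) \le d(n-1)+1$, which is the claimed bound. There is essentially no obstacle: all the work has already been done in the proposition via the interpolation trick, and the cyclic-group tensor is the archetypal example of a tensor whose value depends only on the unreduced integer sum of the coordinates. The only genuine content of the proof is the elementary but crucial bound $\sum_j i_j \le d(n-1)$, which is what allows the modular condition to be expanded as a finite sum of integer-equality indicators without loss.
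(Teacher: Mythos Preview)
Your proof is correct and is exactly the argument the paper intends: the paper states the corollary immediately after Proposition~\ref{prop:interpolate} without a separate proof, relying on precisely the observation that the cyclic-group indicator $\lib i_1+\cdots+i_d\equiv 0\pmod n\rib$ can be rewritten as $\sum_m c_m\lib i_1+\cdots+i_d=m\rib$ with $c_m=\lib n\mid m\rib$, which fits the form of the proposition.
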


Using the Structure Theorem of Abelian Groups the following can now be shown
(for proof see Appendix~\ref{sect:permtensorsproofs}).

\begin{cor}
	\label{cor:interpolate-abel-group}
	Let $G$ be a finite abelian group, and $\mathbb{F}$ be a field with at
	least $|G|$ elements.  Then $\rank_\mathbb{F}(T_G^d)\le |G|^{1+\lg d}$.
\end{cor}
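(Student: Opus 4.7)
The strategy is to apply the Structure Theorem of Finite Abelian Groups to reduce to cyclic factors, recognize $T_G^d$ as a Kronecker tensor product of the corresponding cyclic group tensors, bound each factor using Corollary~\ref{cor:interpolate-cyclic}, and multiply.

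First, by the Structure Theorem, $G \cong \mathbb{Z}_{n_1} \times \cdots \times \mathbb{Z}_{n_k}$ where (after discarding any trivial factors) each $n_i \ge 2$; thus $|G| = \prod_{i=1}^k n_i$ and $k \le \lg |G|$. Identifying each $g \in G$ with its tuple of coordinates $(g^{(1)}, \ldots, g^{(k)})$ under the isomorphism, the condition $g_1 \cdots g_d = 1_G$ decouples into the $k$ independent conditions $\sum_{j=1}^d g_j^{(i)} \equiv 0 \pmod{n_i}$. Consequently $T_G^d$, after reshaping indices via this isomorphism, is exactly the Kronecker tensor product $\bigotimes_{i=1}^k T_{\mathbb{Z}_{n_i}}^d$.

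Next I would invoke the standard submultiplicativity of tensor rank under Kronecker products: taking outer products of rank-one decompositions of each factor exhibits a decomposition of the product with $\prod_i \rank_\mathbb{F}(T_{\mathbb{Z}_{n_i}}^d)$ terms. Applying Corollary~\ref{cor:interpolate-cyclic} to each cyclic factor gives $\rank_\mathbb{F}(T_{\mathbb{Z}_{n_i}}^d) \le d(n_i - 1) + 1 \le d n_i$, and therefore
\[ \rank_\mathbb{F}(T_G^d) \le \prod_{i=1}^k d n_i = d^k \cdot |G|. \]
Finally, the bound $n_i \ge 2$ yields $k \le \lg |G|$, so $d^k \le d^{\lg|G|} = |G|^{\lg d}$, which produces the desired estimate $\rank_\mathbb{F}(T_G^d) \le |G|^{1+\lg d}$.

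The main obstacle is bookkeeping rather than anything conceptually deep: one must carefully verify the identification of $T_G^d$ with $\bigotimes_i T_{\mathbb{Z}_{n_i}}^d$ through reshaping of the $d$ index-axes by the group isomorphism, and check that rank on the reshaped tensor equals rank on the original. A secondary subtlety is the field-size requirement, since Corollary~\ref{cor:interpolate-cyclic} applied to $\mathbb{Z}_{n_i}$ needs $|\mathbb{F}| \ge d(n_i - 1) + 1$ interpolation nodes; one must verify that the ambient hypothesis $|\mathbb{F}| \ge |G|$ supplies enough evaluation points across the decomposition (with the cyclic case itself reducing directly to Corollary~\ref{cor:interpolate-cyclic} without going through the product).
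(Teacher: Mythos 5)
Your proposal is correct and follows the paper's own route exactly: the Structure Theorem, the identification $T_{\mathbb{Z}_n}^d\otimes T_{\mathbb{Z}_m}^d = T_{\mathbb{Z}_n\times\mathbb{Z}_m}^d$ together with submultiplicativity of rank under this product (the paper's Lemma~\ref{lem:tensoring-rank} and Corollary~\ref{cor:interpolate-prod-of-cycle}), the per-factor bound $d(n_i-1)+1\le dn_i$, and $d^k\le|G|^{\lg d}$ from $k\le\lg|G|$. The field-size subtlety you flag is genuine and is in fact glossed over by the paper itself: the hypothesis $|\mathbb{F}|\ge|G|$ does not supply the $d(n_i-1)+1$ evaluation points that Corollary~\ref{cor:interpolate-cyclic} requires (already for cyclic $G=\mathbb{Z}_n$ one needs $d(n-1)+1>n$ points when $d\ge 2$), so the hypothesis should really read $|\mathbb{F}|\ge\max_i\bigl(d(n_i-1)+1\bigr)$ as in Corollary~\ref{cor:interpolate-prod-of-cycle}.
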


While all of the results based on Proposition~\ref{prop:interpolate} do not
require the field to have large roots of unity, they still require the field to
have large size.  Thus, they seemingly do not answer the question of the rank of
group tensors over small fields.  However, as the next lemma shows (with proof
in Appendix~\ref{sect:permtensorsproofs}), one can
transfer results over large-sized fields to small-sized fields with a minor
overhead.

\begin{lem}
	Let $\mathbb{K}$ a field that extends $\mathbb{F}$.  Then for any tensor
	$T:[n]^d\to\mathbb{F}$, $\rank_\mathbb{F}(T)\le
	(\dim_\mathbb{F}\mathbb{K})^{d-1}\cdot\rank_\mathbb{K}(T)$, where
	$\dim_\mathbb{F}\mathbb{K}$ is the dimension of $\mathbb{K}$ as an
	$\mathbb{F}$-vector space.
	\label{lem:changefields}
\end{lem}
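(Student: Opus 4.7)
The plan is to fix a $\mathbb{K}$-rank decomposition of $T$, expand it over an $\mathbb{F}$-basis of $\mathbb{K}$ in an asymmetric way, and then apply an $\mathbb{F}$-linear projection $\mathbb{K}\to\mathbb{F}$ to recover $T$ over the ground field. The key trick is to decompose only the first $d-1$ tensor factors along the basis, absorbing all stray scalars into the $d$-th factor; this is what produces the exponent $d-1$ rather than $d$.

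Write $r=\rank_\mathbb{K}(T)$, let $m=\dim_\mathbb{F}\mathbb{K}$, and fix an $\mathbb{F}$-basis $e_1,\ldots,e_m$ of $\mathbb{K}$ with $e_1=1$. Start from a decomposition $T=\sum_{k=1}^r \bigotimes_{j=1}^d \vec v_{j,k}$ with $\vec v_{j,k}\in\mathbb{K}^n$. For each $j<d$, write $\vec v_{j,k}=\sum_{i=1}^m e_i\,\vec u_{j,k,i}$ with $\vec u_{j,k,i}\in\mathbb{F}^n$. Expanding,
\[ \bigotimes_{j=1}^d \vec v_{j,k} = \sum_{i_1,\ldots,i_{d-1}\in[m]} \Bigl(\bigotimes_{j<d}\vec u_{j,k,i_j}\Bigr)\otimes \Bigl(e_{i_1}\cdots e_{i_{d-1}}\vec v_{d,k}\Bigr), \]
which exhibits $T$ as a sum of $r\cdot m^{d-1}$ simple tensors over $\mathbb{K}$ in which the first $d-1$ factors already lie in $\mathbb{F}^n$.

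Now decompose the $d$-th factor: $e_{i_1}\cdots e_{i_{d-1}}\vec v_{d,k} = \sum_{l=1}^m e_l\,\vec w_{k,\vec i,l}$ with $\vec w_{k,\vec i,l}\in\mathbb{F}^n$. This yields
\[ T = \sum_{k,\vec i,l} e_l\,\Bigl(\bigotimes_{j<d}\vec u_{j,k,i_j}\Bigr)\otimes \vec w_{k,\vec i,l}, \]
where each summand is a scalar $e_l\in\mathbb{K}$ times an $\mathbb{F}$-valued simple tensor. Let $\pi:\mathbb{K}\to\mathbb{F}$ be the $\mathbb{F}$-linear map sending $e_1\mapsto 1$ and $e_l\mapsto 0$ for $l>1$; since $T$ takes values in $\mathbb{F}$, applying $\pi$ entrywise preserves $T$. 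Applied to the display above, only the terms with $l=1$ survive, giving
\[ T = \sum_{k=1}^{r}\sum_{i_1,\ldots,i_{d-1}\in[m]} \Bigl(\bigotimes_{j<d}\vec u_{j,k,i_j}\Bigr)\otimes \vec w_{k,\vec i,1}, \]
a sum of $r\cdot m^{d-1}$ simple tensors over $\mathbb{F}$, which is the bound we need.

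The only subtle step is the last one: I must verify that $\pi$ applied entrywise to $e_l\cdot (\text{simple }\mathbb{F}\text{-tensor})$ really equals $\pi(e_l)$ times that same simple tensor, but this is immediate from $\mathbb{F}$-linearity since each entry has the form $e_l\cdot a$ with $a\in\mathbb{F}$. I do not expect any real obstacle; the main design choice is the asymmetric expansion (decomposing $d-1$ vectors, absorbing into the last), without which one would end up with a weaker bound of $m^d\cdot\rank_\mathbb{K}(T)$.
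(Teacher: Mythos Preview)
Your proof is correct, and the underlying idea coincides with the paper's: start from a $\mathbb{K}$-decomposition, project back to $\mathbb{F}$ via an $\mathbb{F}$-linear coordinate map $\pi$ (using $e_1=1$ so that $\pi|_{\mathbb{F}}=\mathrm{id}$), and show that each simple $\mathbb{K}$-tensor contributes at most $m^{d-1}$ simple $\mathbb{F}$-tensors. Where you differ is in how that last count is obtained. The paper embeds $\mathbb{K}$ into $m\times m$ $\mathbb{F}$-matrices via the regular representation $\mu$ (multiplication-by-$x$), uses the homomorphism property to write $\pi(a_1\cdots a_d)=(\mu(a_1)\cdots\mu(a_d))_{1,1}$, and then expands the matrix product over indices $k_1,\ldots,k_{d-1}\in[m]$. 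You instead expand the first $d-1$ factors directly in the $\mathbb{F}$-basis and absorb all $\mathbb{K}$-scalars into the $d$-th factor before projecting. The two computations are equivalent (your $\vec w_{k,\vec i,1}$ is exactly what the paper's $(1,1)$-entry extraction produces), but your route is more elementary and avoids setting up the representation. The paper's packaging has the mild advantage that the same $\mu$-machinery is reused elsewhere (in Theorem~\ref{thm:repthyrank} and Proposition~\ref{prop:indlayers}), making the $m^{d-1}$ count a direct parallel to the trace-of-a-product expansion there.
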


With this field-transfer result, we can now state rank upper bounds for group
tensors (for finite abelian groups) for any field.

\begin{cor}
	\label{cor:interpolate-abel-group-any-field}
	Let $\mathbb{F}$ be any field, and $G$ be a finite abelian group.  Then
	$\rank_\mathbb{F}(T_G^d)\le |G|^{1+\lg d}\lceil\lg|G|\rceil^{d-1}$.  

	In particular, if $G$ is cyclic, then $\rank_{\mathbb{F}}(T_G^d)\le
	d|G|\lceil\lg|G|\rceil^{d-1}$.
\end{cor}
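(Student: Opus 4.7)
The plan is to combine Corollary~\ref{cor:interpolate-abel-group} (which requires the field to be ``large enough'') with the field-transfer result Lemma~\ref{lem:changefields}. Since Lemma~\ref{lem:changefields} pays a factor of $(\dim_{\mathbb{F}}\mathbb{K})^{d-1}$ when passing from a rank bound over $\mathbb{K}$ down to one over $\mathbb{F}$, the task reduces to finding an extension $\mathbb{K}/\mathbb{F}$ of small degree that contains enough elements to apply the interpolation corollary.

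First, I would dispatch the easy case: if $\mathbb{F}$ is infinite, then $|\mathbb{F}| \ge |G|$ trivially, and Corollary~\ref{cor:interpolate-abel-group} applied directly yields $\rank_{\mathbb{F}}(T_G^d) \le |G|^{1+\lg d}$, which is even stronger than the advertised bound since $\lceil \lg |G|\rceil^{d-1} \ge 1$. Next, suppose $\mathbb{F}$ is finite with $q = |\mathbb{F}|$ elements. For every positive integer $k$, the field $\mathbb{F}$ admits an extension $\mathbb{K}$ of degree $k$ with $|\mathbb{K}| = q^k$. Take $k = \lceil \log_q |G|\rceil$, so that $|\mathbb{K}| = q^k \ge |G|$ and
\[
\dim_{\mathbb{F}}\mathbb{K} \;=\; \lceil \log_q |G|\rceil \;\le\; \lceil \lg |G|\rceil,
\]
where the inequality uses $q \ge 2$.

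Now I would apply Corollary~\ref{cor:interpolate-abel-group} over $\mathbb{K}$ to obtain $\rank_{\mathbb{K}}(T_G^d) \le |G|^{1+\lg d}$, and then Lemma~\ref{lem:changefields} to descend back to $\mathbb{F}$:
\[
\rank_{\mathbb{F}}(T_G^d) \;\le\; (\dim_{\mathbb{F}}\mathbb{K})^{d-1}\cdot \rank_{\mathbb{K}}(T_G^d) \;\le\; \lceil \lg |G|\rceil^{d-1}\cdot |G|^{1+\lg d},
\]
which is the claimed bound. For the cyclic sharpening, the exact same argument goes through but with Corollary~\ref{cor:interpolate-cyclic} in place of Corollary~\ref{cor:interpolate-abel-group}, giving $\rank_{\mathbb{K}}(T_{\mathbb{Z}_n}^d) \le d(n-1)+1 \le dn$ and hence $\rank_{\mathbb{F}}(T_{\mathbb{Z}_n}^d) \le d|G|\lceil \lg |G|\rceil^{d-1}$.

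There is essentially no substantive obstacle here; the proof is a bookkeeping combination of two previously established tools. The only point worth a moment of care is the choice of extension degree in the finite case: one must verify that for any prime power $q \ge 2$, an extension of degree $\lceil \log_q|G|\rceil$ suffices to make the ambient field as large as $|G|$, and that this degree is dominated by $\lceil \lg |G|\rceil$ uniformly in $q$. Everything else is a direct substitution into Lemma~\ref{lem:changefields}.
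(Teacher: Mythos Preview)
Your approach is exactly what the paper intends: the corollary is stated immediately after Lemma~\ref{lem:changefields} without an explicit proof, as the evident combination of that field-transfer result with Corollary~\ref{cor:interpolate-abel-group} (or Corollary~\ref{cor:interpolate-cyclic} in the cyclic case) applied over a small-degree extension of~$\mathbb{F}$. Your treatment of the general abelian bound is correct.

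There is one minor gap in the cyclic sharpening. Corollary~\ref{cor:interpolate-cyclic} requires the ambient field to have at least $d(n-1)+1$ elements, not merely $|G|=n$, so ``the exact same'' extension of degree $\lceil\log_q|G|\rceil$ may be too small to invoke it. Taking instead degree $\lceil\log_q(d(n-1)+1)\rceil\le\lceil\lg(dn)\rceil$ repairs the argument but yields $dn\,\lceil\lg(dn)\rceil^{d-1}$ rather than the stated $dn\,\lceil\lg n\rceil^{d-1}$. The paper does not spell this out either, so the slack---a factor of roughly $(1+\lg d/\lg n)^{d-1}$---lies in the statement rather than in your method.
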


This last result shows that for any finite abelian group, any field and any
large $d$, the rank of the corresponding group tensor is far from possible
$\Omega(n^{d-1})$.  These results do not settle the rank of group tensors for
non-abelian groups over small fields, and leaves the open question whether the
methods of Theorem~\ref{thm:repthyrank} or Proposition~\ref{prop:interpolate}
(or other methods) can resolve this case.

\section{Monotone Tensor Rank}\label{sect:monotone}

We now explore a restricted notion of tensor rank, that of monotone tensor rank.  In algebraic
models of computation, monotone computation requires that the underlying field is ordered, which we
now define.

\begin{defn}
	Let $\mathbb{F}$ be a field.  $\mathbb{F}$ is \textbf{ordered} if there is a linear order
	$<$ such that
	\begin{itemize}
		\item For all $x,y,z\in\mathbb{F}$, $x<y\implies x+z<x+y$.
		\item For all $x,y\in\mathbb{F}$ and $z\in\mathbb{F}_{>0}$, $x<y\implies xz<yz$.
	\end{itemize}
	where $\mathbb{F}_{>0}=\{x|x\in\mathbb{F},x>0\}$.
\end{defn}

Recall that every ordered field has characteristic zero, and thus is infinite.  

Over ordered fields,
computation of polynomials that only use positive coefficients can be done using only positive field
constants, but many works (such as \cite{valiant-neg}) have shown that the circuit model of computation,
the restriction to positive field constants in computation leads drastically worse efficiency as
compared to unrestricted computation.  In this section, we show that in the tensor rank model of
computation, monotone computation is also much less efficient then unrestricted computation.  We
first define the notion of monotone tensor rank.

\begin{defn}
	Let $\mathbb{F}$ be a ordered field. Consider a tensor
	$T:\prod_{i=1}^d[n_i]\to\mathbb{F}_{\ge 0}$.  Define the
	\textbf{monotone tensor rank} of $T$, denoted $\mrank(T)$, to be
	\begin{equation*}
		\mrank(T)=\min\left\{r:T=\sum_{l=1}^r
		\vec{v}_{l,1}\otimes\cdots\otimes \vec{v}_{l,d}\text{, }
		\vec{v}_{l,i}\in (\mathbb{F}_{\ge 0})^{n_i} \right\}
	\end{equation*}
\end{defn}

We now show an essentially maximal separation between monotone tensor rank and unrestricted tensor
rank, for the explicit group tensor $T_{\mathbb{Z}_n}^d$.

\begin{thm}
	\label{thm:monotone}
	Let $\mathbb{F}$ be an ordered field.  Consider the group tensor $T_{\mathbb{Z}_n}^d$.  Then
	\begin{enumerate}
		\item $\rank_\mathbb{F}(T_{\mathbb{Z}_n}^d)\le d(n-1)+1$
		\item $\mrank_\mathbb{F}(T_{\mathbb{Z}_n}^d)=n^{d-1}$
	\end{enumerate}
\end{thm}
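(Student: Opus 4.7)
The plan is to dispatch the two parts separately, with Part (2) being where the real work lies. Part (1), the claim $\rank_\mathbb{F}(T_{\mathbb{Z}_n}^d)\le d(n-1)+1$, follows immediately from Corollary~\ref{cor:interpolate-cyclic} applied to $G=\mathbb{Z}_n$; since every ordered field is infinite, the cardinality hypothesis of that corollary is automatic. For the upper bound $\mrank_\mathbb{F}(T_{\mathbb{Z}_n}^d)\le n^{d-1}$, I would exhibit the direct indicator-vector decomposition in which one sums over $(j_1,\ldots,j_{d-1})\in\llbracket n\rrbracket^{d-1}$ the simple tensors $\vec{e}_{j_1}\otimes\cdots\otimes\vec{e}_{j_{d-1}}\otimes\vec{e}_{-(j_1+\cdots+j_{d-1})}$, where $\vec{e}_j$ is a standard basis vector and the last index is reduced mod $n$. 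Each term is a $0/1$ simple tensor, hence non-negative, and for each fixed input exactly one term is nonzero precisely when the group relation is satisfied.

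The main content is the lower bound $\mrank_\mathbb{F}(T_{\mathbb{Z}_n}^d)\ge n^{d-1}$. The key observation, standard for monotone lower bounds, is a support constraint: if $T_{\mathbb{Z}_n}^d=\sum_l \vec{v}_{l,1}\otimes\cdots\otimes\vec{v}_{l,d}$ is a monotone decomposition, then at any input $x$ with $T_{\mathbb{Z}_n}^d(x)=0$ the sum of non-negative terms is $0$, so each simple tensor must vanish at $x$. Thus the support of every simple tensor lies inside $\mathrm{supp}(T_{\mathbb{Z}_n}^d)=\{(i_1,\ldots,i_d):i_1+\cdots+i_d\equiv 0\pmod{n}\}$, a set of size $n^{d-1}$. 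Moreover, wherever $T_{\mathbb{Z}_n}^d$ takes value $1$ at least one simple tensor must be strictly positive, so the collection of supports also covers $\mathrm{supp}(T_{\mathbb{Z}_n}^d)$.

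The remaining step is a purely combinatorial claim: any combinatorial rectangle $S_1\times\cdots\times S_d$ contained in $\mathrm{supp}(T_{\mathbb{Z}_n}^d)$ is a single point. I would argue this by cancellation in $\mathbb{Z}_n$: for any $a,a'\in S_1$ and any fixed $(a_2,\ldots,a_d)\in S_2\times\cdots\times S_d$ the relations $a+a_2+\cdots+a_d\equiv a'+a_2+\cdots+a_d\equiv 0$ force $a=a'$, so $|S_1|=1$, and by symmetry $|S_j|=1$ for all $j$. Combined with the covering observation, this forces at least $n^{d-1}$ simple tensors. The only mildly subtle point is the interplay between non-negativity (which forces support containment) and the algebraic structure of $\mathbb{Z}_n$ (which rules out non-trivial rectangles in the support); neither step is technically difficult on its own, but together they deliver the near-maximal $n^{d-1}$ lower bound against the $O(dn)$ unrestricted upper bound from Part (1).
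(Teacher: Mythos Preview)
Your proposal is correct and follows essentially the same approach as the paper: Part~(1) via Corollary~\ref{cor:interpolate-cyclic}, the monotone upper bound via the trivial indicator decomposition, and the monotone lower bound via the observation that in a monotone decomposition each simple tensor's support (a combinatorial rectangle) must lie inside $\mathrm{supp}(T_{\mathbb{Z}_n}^d)$ and hence, by a one-line cancellation argument, is a single point. The paper phrases the last step as ``each simple tensor covers at most one non-zero entry'' and notes it uses only the permutation-tensor property, which is exactly your $|S_j|=1$ argument in slightly different clothing.
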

\begin{proof}[Proof Sketch, see Appendix~\ref{sect:monotoneproofs} for a full
proof]

	The upper bounds follow from Corollary~\ref{cor:interpolate-cyclic}, and
	from the trivial $n^{d-1}$ upper bound for tensor rank.

	The lower bounds follows from the observation that a non-negative simple
	tensor ``covers'' non-zero entries in $T_{\mathbb{Z}_n}^d$.  It is not
	hard to show that if a simple tensor covers at least two non-zero
	entries in $T_{\mathbb{Z}_n}^d$ then it places a positive weight on a
	zero-entry of $T_{\mathbb{Z}_n}^d$.  As this cannot be canceled out in a
	monotone computation, each simple tensor must cover at most one non-zero
	entry.  As there are $n^{d-1}$ such entries, the result follows. 
\end{proof}

\section{Acknowledgements}

We would like to thank Swastik Kopparty for alerting us to the standard construction
presented in Proposition~\ref{prop:indlayers} and Madhu Sudan for pointing us to the existence of
Lemma~\ref{lem:f2-sparse-irred}.  We would also like to thank Scott Aaronson, Arnab Bhattacharyya,
Andy Drucker, Kevin Hughes, Neeraj Kayal, Satya Lokam, Guy Moshkovitz, and Jakob Nordstrom for various constructive
conversations.
\newpage

\bibliographystyle{amsurl}
\bibliography{tensor-rank}

\newpage
\appendix
\section{Basic Facts about Tensors}

We now prove some relevant facts about tensors that are needed for the rest of
the paper.

\begin{lem}
	A $\otimes_{j=1}^d \mathbb{F}^{n_j}$-tensor
	is an $\prod_{j=1}^dn_j$ dimensional $\mathbb{F}$-vector space, with standard basis
	$\{\otimes_{j=1}^d \vec{e}_{i_j,j}\}_{i_j\in[n_j]}$
	where $\{e_{i_j,j}\}_{i_j\in[n_j]}$ is the standard basis for 
	$\mathbb{F}^{n_j}$.
	\label{lem:standardbasis}
\end{lem}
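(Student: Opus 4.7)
The plan is to unpack the definitions. Since Definition~\ref{defn:simpletensor} and the preceding definition make clear that a tensor in $\otimes_{j=1}^d \mathbb{F}^{n_j}$ is simply a function $T:\prod_{j=1}^d[n_j]\to\mathbb{F}$ with pointwise addition and scalar multiplication, the space is manifestly an $\mathbb{F}$-vector space of dimension $\prod_{j=1}^d n_j$, since the set of functions on a finite set of cardinality $N$ into $\mathbb{F}$ is a standard $N$-dimensional vector space with basis given by the indicator functions of points. All that remains is to identify the purported basis $\{\otimes_{j=1}^d \vec{e}_{i_j,j}\}$ with these indicator functions.

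First I would introduce, for each tuple $(i_1,\ldots,i_d)\in\prod_j[n_j]$, the indicator function $\mathbf{1}_{(i_1,\ldots,i_d)}:\prod_j[n_j]\to\mathbb{F}$ defined by $\mathbf{1}_{(i_1,\ldots,i_d)}(k_1,\ldots,k_d)=\lib(i_1,\ldots,i_d)=(k_1,\ldots,k_d)\rib$. Any function $T:\prod_j[n_j]\to\mathbb{F}$ can be written uniquely as $T=\sum_{(i_1,\ldots,i_d)} T(i_1,\ldots,i_d)\cdot \mathbf{1}_{(i_1,\ldots,i_d)}$, which exhibits the $\prod_j n_j$ indicator functions as a spanning set, and linear independence is immediate since evaluating a linear combination at the tuple $(i_1,\ldots,i_d)$ recovers its coefficient.

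Next I would apply Definition~\ref{defn:simpletensor} directly to compute
\[
\bigl(\otimes_{j=1}^d \vec{e}_{i_j,j}\bigr)(k_1,\ldots,k_d) \;=\; \prod_{j=1}^d \vec{e}_{i_j,j}(k_j) \;=\; \prod_{j=1}^d \lib i_j = k_j\rib \;=\; \lib(i_1,\ldots,i_d)=(k_1,\ldots,k_d)\rib,
\]
so $\otimes_{j=1}^d \vec{e}_{i_j,j} = \mathbf{1}_{(i_1,\ldots,i_d)}$. The map $(i_1,\ldots,i_d)\mapsto \otimes_j \vec{e}_{i_j,j}$ is therefore a bijection from $\prod_j[n_j]$ onto the indicator basis, and the claim follows. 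There is no real obstacle here: everything reduces to substituting into the definition of a simple tensor and recognizing the product of Kronecker deltas as an indicator on a product set.
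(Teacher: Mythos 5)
Your proposal is correct and follows essentially the same route as the paper: identify the tensor product space with the space of $\mathbb{F}$-valued functions on $\prod_j[n_j]$, note its dimension is the cardinality of the domain, and compute that $\otimes_{j=1}^d\vec{e}_{i_j,j}$ is exactly the indicator function of the point $(i_1,\ldots,i_d)$. You merely spell out the spanning and independence of the indicator functions, which the paper leaves as ``not hard to see.''
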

\begin{proof}
	Recall that the tensor product space $\otimes_{j=1}^d\mathbb{F}^{n_d}$
	is the set of functions from $\prod_{j=1}^d[n_j]$ to $\mathbb{F}$. As a
	$\mathbb{F}$-valued function space, it is thus an
	$\mathbb{F}$-vector space.  That it is $\prod_{j=1}^dn_h$ dimensional follows from the
	fact that this is the cardinality of the domain.

	To see that the basis is as claimed, note that the function
	$\otimes_{j=1}^d\vec{e}_{i_j,j}$ is equal to the tensor
	$T(i'_1,\ldots,i'_d)=\prod_{j=1}^d \lib i'_j=i_j\rib$.
	It is then not hard to see that
	these tensors are a basis for the tensor product space.
\end{proof}

\begin{lem}[Multilinearity of Tensor Product]
	\label{lem:tensormultilinearity}
	Suppose $j\in[d]$, $\vec{v}_j\in \mathbb{F}^{n_j}$, and $a,b\in\mathbb{F}$.  In the tensor product space
	$\otimes_{j=1}^d\mathbb{F}^{n_j}$, for any $j_0\in [d]$ and
	$\vec{w}\in\mathbb{F}^{n_{j_0}}$ the following identity holds:
	\begin{equation*}
		\vec{v}_1\otimes\cdots\otimes(a\vec{v}_{j_0}+b\vec{w})\otimes\cdots\otimes\vec{v_d}=
		a(\vec{v}_1\otimes\cdots\otimes\vec{v}_{j_0}\otimes\cdots\otimes\vec{v_d})+
		b(\vec{v}_1\otimes\cdots\otimes\vec{w}\otimes\cdots\otimes\vec{v_j})
	\end{equation*}
\end{lem}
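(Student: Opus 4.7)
The plan is to reduce the claimed identity to pointwise equality of functions on $\prod_{j=1}^d [n_j]$, and then verify that pointwise equality is an immediate consequence of distributivity in $\mathbb{F}$ applied to the simple tensor formula from Definition~\ref{defn:simpletensor}.

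First, I would invoke Lemma~\ref{lem:standardbasis}, which identifies the tensor product space $\otimes_{j=1}^d \mathbb{F}^{n_j}$ with the $\mathbb{F}$-valued functions on $\prod_{j=1}^d [n_j]$. In particular, two elements of the tensor product space are equal if and only if they agree on every tuple $(i_1,\ldots,i_d) \in \prod_{j=1}^d [n_j]$. So it suffices to show that the left-hand side $L$ and the right-hand side $R$ of the claimed identity satisfy $L(i_1,\ldots,i_d) = R(i_1,\ldots,i_d)$ for every such tuple.

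Next, I would expand each side using Definition~\ref{defn:simpletensor}. For the left-hand side, applying the simple tensor formula to $\vec{v}_1\otimes\cdots\otimes(a\vec{v}_{j_0}+b\vec{w})\otimes\cdots\otimes\vec{v}_d$ gives
\begin{equation*}
L(i_1,\ldots,i_d) = \Bigl(\prod_{j \ne j_0} \vec{v}_j(i_j)\Bigr)\cdot\bigl(a\vec{v}_{j_0}(i_{j_0}) + b\vec{w}(i_{j_0})\bigr),
\end{equation*}
where I have used the fact that the $j_0$-th coordinate of the vector $a\vec{v}_{j_0} + b\vec{w} \in \mathbb{F}^{n_{j_0}}$ at position $i_{j_0}$ is $a\vec{v}_{j_0}(i_{j_0}) + b\vec{w}(i_{j_0})$. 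Analogously, the right-hand side evaluates to
\begin{equation*}
R(i_1,\ldots,i_d) = a\prod_{j=1}^d \vec{v}_j(i_j) + b\Bigl(\prod_{j\ne j_0}\vec{v}_j(i_j)\Bigr)\vec{w}(i_{j_0}).
\end{equation*}

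Finally, I would apply distributivity and commutativity of multiplication in $\mathbb{F}$ to the expression for $L(i_1,\ldots,i_d)$, distributing the factor $\prod_{j\ne j_0}\vec{v}_j(i_j)$ across the sum $a\vec{v}_{j_0}(i_{j_0}) + b\vec{w}(i_{j_0})$. This yields exactly the expression for $R(i_1,\ldots,i_d)$, completing the verification. There is no real obstacle here: the content of the statement is precisely that the tensor product is multilinear, which is built into the coordinatewise product definition of simple tensors, and distributivity in $\mathbb{F}$ does all the work.
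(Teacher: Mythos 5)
Your proof is correct and is simply the fully written-out version of the paper's one-line argument, which states that the identity ``follows directly from Definition~\ref{defn:simpletensor}'': both reduce to pointwise evaluation of the simple tensor formula and distributivity in $\mathbb{F}$.
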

\begin{proof}
	This follows directly from Definition~\ref{defn:simpletensor}.
\end{proof}

We now use these properties to establish a class of rank-preserving maps on tensors.

\begin{lem}
	For $j\in[d]$, consider linear maps $A_j:\mathbb{F}^{n_j}\rightarrow
	\mathbb{F}^{n'_j}$.
	
	\begin{enumerate}

		\item The $A_j$ induce a function on simple tensors
		$\otimes_{j=1}^d \vec{v}_d\mapsto
		\otimes_{j=1}^d A_j\vec{v}_j$ which uniquely extends to a
		linear map on the tensor product spaces which is denoted
		$\otimes_{j=1}^d A_j:\otimes_{j=1}^d
		\mathbb{F}^{n_j}\rightarrow\otimes_{j=1}^d \mathbb{F}^{n'_j}$.\label{lem:induce:define}
		\item If the $A_j$ are invertible, then so is $\otimes_{j=1}^d A_j$
		and its inverse is given by $\otimes_{j=1}^d A_j^{-1}$.\label{lem:induce:inverse}
		\item For $T:\prod_{j=1}^d[n_j]^d\to\mathbb{F}$, $\rank(T)\ge
		\rank\bigl((\otimes_{j=1}^d A_j)(T)\bigr)$, with
		equality if the $A_j$ are invertible.\label{lem:induce:rank}
	\end{enumerate}
	\label{lem:induce}
\end{lem}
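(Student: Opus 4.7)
The plan is to handle part~(1) first, since parts~(2) and~(3) will follow essentially formally from it together with the results already established in Lemmas~\ref{lem:standardbasis} and~\ref{lem:tensormultilinearity}.

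For part~(1), I would first invoke Lemma~\ref{lem:standardbasis} to identify the basis $\{\otimes_{j=1}^d \vec{e}_{i_j,j}\}_{i_j\in[n_j]}$ of $\otimes_{j=1}^d \mathbb{F}^{n_j}$, and then \emph{define} $\otimes_{j=1}^d A_j$ as the unique linear map sending each basis element $\otimes_{j=1}^d \vec{e}_{i_j,j}$ to $\otimes_{j=1}^d (A_j\vec{e}_{i_j,j}) \in \otimes_{j=1}^d \mathbb{F}^{n'_j}$. Existence and uniqueness of this linear extension is immediate from basic linear algebra; the real content is verifying that the extension actually satisfies the prescribed rule $\otimes_{j=1}^d \vec{v}_j \mapsto \otimes_{j=1}^d A_j \vec{v}_j$ on \emph{all} simple tensors, not just those built from basis vectors. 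To do this, expand each $\vec{v}_j = \sum_{i_j} c_{i_j,j}\vec{e}_{i_j,j}$ and apply Lemma~\ref{lem:tensormultilinearity} once in each of the $d$ tensor slots to rewrite $\otimes_{j=1}^d \vec{v}_j$ as a linear combination of standard basis simple tensors; apply the definition; and then use the multilinearity and linearity of each $A_j$ to reassemble the resulting sum as $\otimes_{j=1}^d A_j\vec{v}_j$. Both sides produce the same multilinear expansion, so they agree.

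For part~(2), consider the composition $(\otimes_{j=1}^d A_j^{-1}) \circ (\otimes_{j=1}^d A_j)$ and evaluate on a simple tensor $\otimes_{j=1}^d \vec{v}_j$: two applications of part~(1) give $\otimes_{j=1}^d (A_j^{-1}A_j\vec{v}_j) = \otimes_{j=1}^d \vec{v}_j$. Hence the composition acts as the identity on all simple tensors, and in particular on the standard basis, so by linearity it equals the identity on the whole tensor product space. The reverse composition is symmetric, so $\otimes_{j=1}^d A_j$ is invertible with the claimed inverse.

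For part~(3), take a minimum decomposition $T = \sum_{k=1}^r \otimes_{j=1}^d \vec{v}_{j,k}$. Linearity of $\otimes_{j=1}^d A_j$ together with part~(1) gives
\begin{equation*}
    \Bigl(\otimes_{j=1}^d A_j\Bigr)(T) = \sum_{k=1}^r \otimes_{j=1}^d A_j\vec{v}_{j,k},
\end{equation*}
which is a sum of $r$ simple tensors in $\otimes_{j=1}^d \mathbb{F}^{n'_j}$, so $\rank\bigl((\otimes_{j=1}^d A_j)(T)\bigr) \le r = \rank(T)$. When the $A_j$ are invertible, apply this inequality to the map $\otimes_{j=1}^d A_j^{-1}$ and the tensor $(\otimes_{j=1}^d A_j)(T)$; by part~(2) the image is $T$, and this gives the reverse inequality.

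The only step that is not purely bookkeeping is the well-definedness check in part~(1): reconciling the basis-level definition of $\otimes_{j=1}^d A_j$ with its claimed action on arbitrary simple tensors requires a careful multilinear expansion, and this is where Lemma~\ref{lem:tensormultilinearity} is essential. Everything else is either immediate from the definitions or a formal consequence of part~(1).
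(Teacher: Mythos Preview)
Your proposal is correct and follows essentially the same approach as the paper: define the map on the standard simple-tensor basis via Lemma~\ref{lem:standardbasis}, verify agreement with the claimed rule on arbitrary simple tensors via the multilinear expansion of Lemma~\ref{lem:tensormultilinearity}, and then derive parts~(2) and~(3) formally from part~(1) exactly as you describe. The only minor addition in the paper is an explicit remark that the construction in part~(1) is independent of which simple-tensor basis is chosen, but this is not needed for the argument and your version is complete as stated.
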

\begin{proof}

	(\ref{lem:induce:define}): By Lemma~\ref{lem:standardbasis} the tensor product
	space $\otimes_{j=1}^d\mathbb{F}^{n_j}$ has a basis consisting
	entire of simple tensors.  Thus by standard linear algebra, the map
	$\otimes_{j=1}^d \vec{v}_j\mapsto \otimes_{j=1}^d A_j\vec{v}_j$ on this basis extends uniquely to a linear map
	$\otimes_{j=1}^d A_j$ on the entire tensor product space.  

	It must also be shown that the map $\otimes_{j=1}^d A_j$ induced from the
	basis elements is also
	compatible with the map $\otimes_{j=1}^d\vec{v}_j\mapsto
	\otimes_{j=1}^dA_j\vec{v}_j$ defined
	on the rest of the simple tensors.  This fact follows from the linearity
	of the $A_j$ and
	the multilinearity of the tensor product, Lemma~\ref{lem:tensormultilinearity}.
	That is, we first use that each $\vec{v}_j$ can be expressed in terms of the basis
	elements $\vec{v}_j=\sum^{n_j}_{i_j=1}c_{i_j,j}\vec{e}_{i_j,j}$ and then notice that
	by multilinearity of the tensor product we have
	\begin{align*}
		\bigotimes_{j=1}^d A_j\vec{v}_j
			=& \bigotimes_{j=1}^d A_j\left(\sum^{n_j}_{i_j=1} c_{i_j,j}\vec{e}_{i_j,j}\right)
			= \bigotimes_{j=1}^d \left(\sum^{n_j}_{i_j=1} c_{i_j,j}A_j\vec{e}_{i_j,j}\right)\\
			=& \sum^{n_1}_{i_1=1}\cdots\sum^{n_d}_{i_d=1} c_{i_1,1}\cdots c_{i_d,d}\bigotimes_{j=1}^d A_j\vec{e}_{i_j,j}\\
			=& \sum^{n_1}_{i_1=1}\cdots\sum^{n_d}_{i_d=1} c_{i_1,1}\cdots c_{i_d,d}\bigotimes_{j=1}^d A_j\left(\otimes_{j=1}^d\vec{e}_{i_j,j}\right)
	\end{align*}

	We observe similarly that $\otimes_{j=1}^d \vec{v}_j
	=\sum^{n_1}_{i_1=1}\cdots\sum^{n_d}_{i_d=1} c_{i_1,1}\cdots c_{i_d,d}(\otimes_{j=1}^d
	\vec{e}_{i_j,j})$.  As the unique linear map induced above \textit{defines}
	$\bigotimes_{j=1}^d A_j(\otimes_{j=1}^d \vec{v}_j)$ as
	$\sum^{n_1}_{i_1=1}\cdots\sum^{n_d}_{i_d=1} c_{i_1,1}\cdots c_{i_d, d}\bigotimes_{j=1}^d
	A_j(\otimes_{j=1}^d \vec{e}_{i_j,j})$, this shows that $\otimes_{j=1}^d
	A_j\vec{v}_j=\bigotimes_{j=1}^d A_j(\otimes_{j=1}^d\vec{v}_j)$,
	and so the two maps agree on the simple tensors.

	It should also be noted that this argument is independent of the basis chosen, as
	long as the basis is chosen among the simple tensors.  This fact follows from the
	fact that the induced map on the entire space agrees with the map only defined on
	the simple tensors.  Thus, the map $\bigotimes_{j=1}^d A_j$ is well-defined.

	(\ref{lem:induce:inverse}):  Denote the linear maps $A:= \otimes_{j=1}^d A_j$, and
	$A^{-1}:=\otimes_{j=1}^d A_j^{-1}$.  Part~\ref{lem:induce:define} of this lemma shows that the maps $A$ and
	$A^{-1}$ compose, in either order, to be the identity on the simple tensors. As
	there is a basis among the simple tensors, by Lemma~\ref{lem:standardbasis}, this
	means that $A\circ A^{-1}$ and $A^{-1}\circ A$ are both identity maps.  Thus
	$A^{-1}$ is indeed the inverse map of $A$.

	(\ref{lem:induce:rank}): Consider a minimal simple tensor decomposition of $T$, so that
	$T=\sum^r_{l=1} \otimes_{j=1}^d\vec{v}_{j,l}$.  By
	part~\ref{lem:induce:define} of this lemma, we have a simple tensor decomposition
	$(\otimes_{j=1}^d A_j)T=\sum^r_{l=1}
	\otimes_{j=1}^d A_j\vec{v}_{j,l}$.  This establishes the
	desired rank inequality.  To establish equality when the $A_j$ are invertible it is
	enough to run the inequality in the opposite direction using the linear map
	$\otimes_{j=1}^d A_j^{-1}$ and using part~\ref{lem:induce:inverse} of
	this lemma.
\end{proof}

We now use these rank-preserving maps to establish facts about tensors and their layers.

\begin{lem}
	\label{lem:simplelayers}
	Consider
	$T=\otimes_{j=1}^d\vec{v}_j\in\bigotimes_{j=1}^d\mathbb{F}^{n_j}$,
	where $T$ is split into layers as $T=[T_1|\cdots|T_{n_d}]$.  Then
	$T_l=(\vec{v}_1\otimes\cdots\otimes\vec{v}_{d-1})\cdot\vec{v}_d(l)
	\in\bigotimes_{j=1}^{d-1}\mathbb{F}^{n_j}$.
\end{lem}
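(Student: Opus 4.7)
The plan is to unwind the relevant definitions and observe that everything matches. This is really a definition-chasing exercise, so the proof will be short and the main task is to make sure the notation lines up correctly.

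First I would apply the definition of a layer (Definition~\ref{defn:layers}), which says that for any tensor $T\in\otimes_{j=1}^d\mathbb{F}^{n_j}$, the $l$-th layer along the $d$-axis is the tensor in $\otimes_{j=1}^{d-1}\mathbb{F}^{n_j}$ defined pointwise by $T_l(i_1,\ldots,i_{d-1})=T(i_1,\ldots,i_{d-1},l)$. So it suffices to compute $T(i_1,\ldots,i_{d-1},l)$ when $T$ is the given simple tensor.

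Next I would invoke the definition of a simple tensor (Definition~\ref{defn:simpletensor}) to write $T(i_1,\ldots,i_{d-1},l)=\prod_{j=1}^{d-1}\vec{v}_j(i_j)\cdot\vec{v}_d(l)$. Since $\vec{v}_d(l)$ is a scalar independent of $(i_1,\ldots,i_{d-1})$, this factors cleanly, and the remaining product $\prod_{j=1}^{d-1}\vec{v}_j(i_j)$ is by definition the value of the simple tensor $\vec{v}_1\otimes\cdots\otimes\vec{v}_{d-1}\in\otimes_{j=1}^{d-1}\mathbb{F}^{n_j}$ at the index $(i_1,\ldots,i_{d-1})$.

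Putting these together, I would conclude that $T_l(i_1,\ldots,i_{d-1})=(\vec{v}_1\otimes\cdots\otimes\vec{v}_{d-1})(i_1,\ldots,i_{d-1})\cdot\vec{v}_d(l)$ for every choice of indices, which as an equality of tensors in $\otimes_{j=1}^{d-1}\mathbb{F}^{n_j}$ gives exactly the claimed identity $T_l=(\vec{v}_1\otimes\cdots\otimes\vec{v}_{d-1})\cdot\vec{v}_d(l)$. There is no real obstacle here; the only thing to be mildly careful about is distinguishing the scalar $\vec{v}_d(l)\in\mathbb{F}$ from the vector $\vec{v}_d\in\mathbb{F}^{n_d}$, but since the tensor product space is an $\mathbb{F}$-vector space (Lemma~\ref{lem:standardbasis}), scalar multiplication by $\vec{v}_d(l)$ is well-defined and the identity makes sense as stated.
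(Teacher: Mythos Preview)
Your proof is correct and follows essentially the same approach as the paper: unfold Definition~\ref{defn:layers} and Definition~\ref{defn:simpletensor} to get $T_l(i_1,\ldots,i_{d-1})=\vec{v}_1(i_1)\cdots\vec{v}_{d-1}(i_{d-1})\cdot\vec{v}_d(l)$, then recognize this as the simple tensor $\vec{v}_1\otimes\cdots\otimes\vec{v}_{d-1}$ scaled by $\vec{v}_d(l)$.
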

\begin{proof}
	Definition~\ref{defn:layers} and Definition~\ref{defn:simpletensor} show that
	$T_l(i_1,\ldots,i_{d-1}):=T(i_1,\ldots,i_{d-1},l)=\vec{v}_1(i_1)\cdots\vec{v}_{d-1}(i_{d-1})\cdot\vec{v}_d(l)$.
	We can then note that this is exactly the
	function $\vec{v}_1\otimes\cdots\otimes\vec{v}_{d-1}$, multiplied by the scalar
	$\vec{v}_d(l)$, as desired.
\end{proof}

\begin{lem}
	\label{lem:layermap}
	Consider the operation of taking the $l$-th layer (along the $d$-th axis).  This is
	a linear map
	$L_l:\bigotimes_{j=1}^d\mathbb{F}^{n_j}\rightarrow\bigotimes_{j=1}^{d-1}\mathbb{F}^{n_j}$.
\end{lem}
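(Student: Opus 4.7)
The plan is to verify linearity of $L_l$ directly from its pointwise definition, using the fact that the tensor product space $\bigotimes_{j=1}^d \mathbb{F}^{n_j}$ is defined (in Lemma~\ref{lem:standardbasis}) as the space of $\mathbb{F}$-valued functions on $\prod_{j=1}^d[n_j]$, equipped with the usual pointwise addition and scalar multiplication. The target space $\bigotimes_{j=1}^{d-1} \mathbb{F}^{n_j}$ is similarly a function space on $\prod_{j=1}^{d-1}[n_j]$. So linearity reduces to a trivial pointwise verification.

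Concretely, I would proceed as follows. First, I would recall from Definition~\ref{defn:layers} that for $T\in\bigotimes_{j=1}^d\mathbb{F}^{n_j}$ the map $L_l$ is given by $L_l(T)(i_1,\ldots,i_{d-1})=T(i_1,\ldots,i_{d-1},l)$. Next, fix $S,T\in\bigotimes_{j=1}^d\mathbb{F}^{n_j}$ and scalars $a,b\in\mathbb{F}$. Then for every $(i_1,\ldots,i_{d-1})\in\prod_{j=1}^{d-1}[n_j]$,
\begin{equation*}
L_l(aS+bT)(i_1,\ldots,i_{d-1}) = (aS+bT)(i_1,\ldots,i_{d-1},l) = a\cdot S(i_1,\ldots,i_{d-1},l)+b\cdot T(i_1,\ldots,i_{d-1},l),
\end{equation*}
and the right-hand side equals $a\cdot L_l(S)(i_1,\ldots,i_{d-1}) + b\cdot L_l(T)(i_1,\ldots,i_{d-1})$ by definition of $L_l$. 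Since two tensors (functions) are equal iff they agree pointwise, this gives $L_l(aS+bT)=aL_l(S)+bL_l(T)$, which is exactly linearity.

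There is essentially no obstacle here: the content of the lemma is entirely a matter of unwinding definitions, since the vector space operations on tensors are pointwise by construction. The only thing worth flagging is that one must invoke Lemma~\ref{lem:standardbasis} (or the identification of tensors with $\mathbb{F}$-valued functions in its proof) to justify the pointwise form of the vector space operations on both source and target; once this is in hand, the two displayed equations above complete the proof.
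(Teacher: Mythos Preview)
Your proof is correct and matches the paper's own argument essentially verbatim: both verify linearity pointwise by noting that taking the $l$-th layer is just fixing the last coordinate, so additivity and homogeneity are inherited directly from the pointwise vector-space operations on tensors.
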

\begin{proof}
	Given the tensor $T(\cdot,\ldots,\cdot)$.  Taking the $l$-th layer yields
	$T_l(\cdot,\ldots,\cdot):=T(\cdot,\ldots,\cdot,l)$.  Thus, the statements
	$T=S+R\implies T_l=S_l+R_l$, and $c\in\mathbb{F}, T=cS\implies T_l=cS_l$ hold
	because they are simply a restriction of the above identity.
\end{proof}

We can now prove the main lemma of this appendix, on how applying linear maps interacts with the
layers of a tensor.

\begin{lem}
	\label{lem:maplayers}
	Consider $T\in\otimes_{j=1}^d\mathbb{F}^{n_j}$.
	Expand $T$ into layers, so $T=[T_1|\cdots|T_{n_d}]$.
	
	Let $(a_{i,j})_{i,j}\in\mathbb{F}^{m\times n_d}$ be
	a matrix. Define $A:\mathbb{F}^{n_d}\rightarrow\mathbb{F}^{m}$ to be the
	linear map the matrix $(a_{i,j})_{i,j}$ induces via the standard basis. Then, 
	\begin{equation*}
		(I\otimes\cdots\otimes I\otimes	A)(T)
		=\left[\sum_{i_1=1}^{n_d}a_{1,i_1}T_{i_1}\left|\cdots\left|\sum_{i_{m}=1}^{n_d}a_{m,i_m}T_{i_m}\right.\right.\right]
	\end{equation*}
\end{lem}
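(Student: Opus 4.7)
The plan is to reduce the identity to the case of simple tensors, where it can be verified by a direct computation using Lemma~\ref{lem:simplelayers}. Both sides of the claimed equation are linear in $T$: the left-hand side by Lemma~\ref{lem:induce}(\ref{lem:induce:define}), which guarantees that $I\otimes\cdots\otimes I\otimes A$ is a linear map on $\otimes_{j=1}^d\mathbb{F}^{n_j}$; and the right-hand side because each layer $T_i$ depends linearly on $T$ by Lemma~\ref{lem:layermap}, and the combining operation $[\cdot\,|\cdots|\,\cdot]$ is linear in its arguments (immediate from Definition~\ref{defn:layers}). Since Lemma~\ref{lem:standardbasis} provides a basis of $\otimes_{j=1}^d\mathbb{F}^{n_j}$ consisting entirely of simple tensors, it is enough to verify the identity on a simple tensor $T=\vec{v}_1\otimes\cdots\otimes\vec{v}_d$.

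For such a $T$, I would first compute the left-hand side. By the defining action of $I\otimes\cdots\otimes I\otimes A$ on simple tensors (Lemma~\ref{lem:induce}(\ref{lem:induce:define})),
\[
	(I\otimes\cdots\otimes I\otimes A)(T)=\vec{v}_1\otimes\cdots\otimes\vec{v}_{d-1}\otimes(A\vec{v}_d).
\]
Then Lemma~\ref{lem:simplelayers} identifies its $k$-th layer (along the $d$-th axis) as $(\vec{v}_1\otimes\cdots\otimes\vec{v}_{d-1})\cdot (A\vec{v}_d)(k)$. Since $A$ is the linear map induced by $(a_{i,j})_{i,j}$, the coordinate $(A\vec{v}_d)(k)$ equals $\sum_{i=1}^{n_d}a_{k,i}\vec{v}_d(i)$.

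On the right-hand side, Lemma~\ref{lem:simplelayers} applied to $T$ itself gives $T_{i_k}=(\vec{v}_1\otimes\cdots\otimes\vec{v}_{d-1})\cdot\vec{v}_d(i_k)$, so the $k$-th layer of $\bigl[\sum_{i_1}a_{1,i_1}T_{i_1}\bigm|\cdots\bigm|\sum_{i_m}a_{m,i_m}T_{i_m}\bigr]$ is $\sum_{i_k=1}^{n_d}a_{k,i_k}\vec{v}_d(i_k)\cdot(\vec{v}_1\otimes\cdots\otimes\vec{v}_{d-1})$, agreeing with the left-hand side layer-by-layer. Since two tensors with identical layers along a given axis are equal (by Definition~\ref{defn:layers}), the identity holds on simple tensors, and hence on all of $\otimes_{j=1}^d\mathbb{F}^{n_j}$ by linearity.

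There is no substantive obstacle here; the lemma is essentially a compatibility statement between the tensor-product-of-maps construction and the layer decomposition, and the only step requiring any care is the bookkeeping that confirms both sides are linear in $T$ before reducing to the simple tensor case.
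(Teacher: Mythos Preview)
Your proof is correct and follows essentially the same two-step structure as the paper's: verify the identity on simple tensors and then extend by linearity (the paper phrases the extension via an explicit simple-tensor expansion rather than invoking linearity of both sides directly, but this is the same argument). Your simple-tensor computation is in fact a bit more streamlined than the paper's, which expands $\vec{v}_d$ in standard basis elements and uses multilinearity rather than appealing to Lemma~\ref{lem:simplelayers} immediately.
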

\begin{proof}
	The proof is in two parts.  The first part proves the claim for simple tensors, and the
	second part extends the claim, using the linearity shown in Lemma~\ref{lem:layermap}, to
	general case.

	We first prove the claim for simple tensors. Let
	$T=\otimes_{j=1}^d\vec{v}_j$ be a simple tensor.	Let
	$\{\vec{e}_{i,d}\}_{i\in[n_d]}$ be the standard basis for $\mathbb{F}^{n_d}$ and
	$\{\vec{e}_{i',d}\}_{i'\in[m]}$ be the standard basis for $\mathbb{F}^{m}$.
	Then by expanding out in terms of the basis elements and using multilinearity, we
	have
	\begin{align*}
		T
		&=\vec{v}_{1}\otimes\cdots\otimes\vec{v}_d\\
		&=\sum_{i=1}^{n_d}\vec{v}_{1}\otimes\cdots\otimes\vec{v}_{d-1}\otimes(\vec{v}_d(i)\vec{e}_{i,d})
	\intertext{Denote $T':=(I\otimes\cdots\otimes I\otimes A)(T)$.  So then,}
		T'
		&=\sum_{i=1}^{n_d}\vec{v}_{1}\otimes\cdots\otimes\vec{v}_{d-1}\otimes A(\vec{v}_d(i)\vec{e}_{i,d})\\
		&=\sum_{i=1}^{n_d}\vec{v}_{1}\otimes\cdots\otimes\vec{v}_{d-1}\otimes (\vec{v}_d(i)\cdot A(\vec{e}_{i,d}))\\
		&=\sum_{i=1}^{n_d}\vec{v}_{1}\otimes\cdots\otimes\vec{v}_{d-1}\otimes \left(\vec{v}_d(i)\cdot \sum_{i'=1}^{m}a_{i',i}\vec{e}_{i',d}\right)\\
		&=\sum_{i=1}^{n_d}\sum_{i'=1}^{m}\vec{v}_d(i)\cdot a_{i',i}\cdot (\vec{v}_{1}\otimes\cdots\otimes\vec{v}_{d-1}\otimes \vec{e}_{i',d})
	\intertext{By Lemma~\ref{lem:simplelayers} and Lemma~\ref{lem:layermap}, we have,}
		T'_l
		&=\sum_{i=1}^{n_d}\sum_{i'=1}^{n_d'}\vec{v}_d(i)\cdot a_{i',i}\cdot (\vec{v}_{1}\otimes\cdots\otimes\vec{v}_{d-1})\cdot \vec{e}_{i',d}(l)
	\intertext{and using that $\vec{e}_{i',d}(l)=\lib i'=l\rib$,}
		&=\sum_{i=1}^{n_d}\vec{v}_d(i)\cdot a_{l,i}\cdot (\vec{v}_{1}\otimes\cdots\otimes\vec{v}_{d-1})\\
		&=\sum_{i=1}^{n_d}a_{l,i}T_i
	\end{align*}
	which establishes the claim for simple tensors.
	
	Now let $T\in\otimes_{j=1}^d\mathbb{F}^{n_j}$ be an arbitrary
	tensor.  Consider a simple tensor expansion $T=\sum_{k=1}^r S_k$ for
	$S_k=\otimes_{j=1}^d\vec{v}_{j,k}$.
	Denote $S_{k,l}$ to be the $l$-th layer of $S_k$.  So then as the $S_k$ are simple,
	we have that $(I\otimes\cdots\otimes I\otimes
	A)(S_k)=\left[\sum_{i_1=1}^{n_d}a_{1,i_1}S_{k,i_1}\left|\cdots\left|\sum_{i_m=1}^{n_d}a_{m,i_m}S_{k,i_m}\right.\right.\right]$
	by the above analysis.  So then,
	\begin{align*}
		(I\otimes\cdots\otimes I\otimes A)(T)
			&=(I\otimes\cdots\otimes I\otimes A)\left(\sum_{k=1}^r S_k\right)\\
			&=\sum_{k=1}^r \left[\sum_{i_1=1}^{n_d}a_{1,i_1}S_{k,i_1}\left|\cdots\left|\sum_{i_m=1}^{n_d}a_{m,i_m}S_{k,i_m}\right.\right.\right]\\
		\intertext{by linearity of taking layers, Lemma~\ref{lem:layermap}, we get}
			&=\left[\sum_{k=1}^r\sum_{i_1=1}^{n_d}a_{1,i_1}S_{k,i_1}\left|\cdots\left|\sum_{k=1}^r\sum_{i_m=1}^{n_d}a_{m,i_1}S_{k,i_m}\right.\right.\right]\\
			&=\left[\sum_{i_1=1}^{n_d}a_{1,i_1}\left(\sum_{k=1}^rS_{k,i_1}\right)\left|\cdots\left|\sum_{i_m=1}^{n_d}a_{m,i_m}\left(\sum_{k=1}^rS_{k,i_m}\right)\right.\right.\right]\\
			&=\left[\sum_{i_1=1}^{n_d}a_{1,i_1}T_{i_1}\left|\cdots\left|\sum_{i_m=1}^{n_d}a_{m,i_m}T_{i_m}\right.\right.\right]
	\end{align*}
	which is the desired result.
\end{proof}

We now apply this to get a symmetry lemma.

\begin{cor}
	\label{lem:reorder}
	Consider $T\in\otimes_{j=1}^d\mathbb{F}^{n_j}$.
	Expand $T$ into layers, so $T=[T_1|\cdots|T_{n_d}]$.  For any permutation
	$\sigma:[n_d]\rightarrow[n_d]$,
	\[\rank([T_{\sigma(1)}|\cdots|T_{\sigma(n_d)}])=\rank([T_1|\cdots|T_{n_d}])\]
\end{cor}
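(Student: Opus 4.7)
The plan is to realize the reordering of layers as the action of an invertible linear map applied along the $d$-th axis, and then invoke the rank-preservation result for such maps (Lemma~\ref{lem:induce}\ref{lem:induce:rank}) together with the layer formula of Lemma~\ref{lem:maplayers}.

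Concretely, given a permutation $\sigma$ of $[n_d]$, I would define the permutation matrix $P_\sigma \in \mathbb{F}^{n_d\times n_d}$ by $a_{i,j} = \lib j=\sigma(i)\rib$, viewed as a linear map $A:\mathbb{F}^{n_d}\to\mathbb{F}^{n_d}$ via the standard basis. Plugging this $A$ into Lemma~\ref{lem:maplayers} with $m=n_d$, each inner sum $\sum_{i_l=1}^{n_d} a_{l,i_l} T_{i_l}$ collapses to a single term $T_{\sigma(l)}$, because $a_{l,i_l}$ is $1$ precisely when $i_l=\sigma(l)$ and $0$ otherwise. Thus
\[
(I\otimes\cdots\otimes I\otimes A)(T) = [T_{\sigma(1)}|\cdots|T_{\sigma(n_d)}].
\]

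Now $A=P_\sigma$ is invertible (with inverse $P_{\sigma^{-1}}$), and the identity maps on the other factors are trivially invertible. So $I\otimes\cdots\otimes I\otimes A$ is a tensor product of invertible maps, and Lemma~\ref{lem:induce}\ref{lem:induce:rank} gives equality of ranks between $T=[T_1|\cdots|T_{n_d}]$ and its image $[T_{\sigma(1)}|\cdots|T_{\sigma(n_d)}]$. This yields exactly the claimed equality.

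There is essentially no obstacle here: the content is just choosing the right matrix $A$ and verifying that Lemma~\ref{lem:maplayers} instantiated with it reproduces the permuted layer sequence. The only thing to double-check is the convention on rows/columns of $P_\sigma$ so that we get $T_{\sigma(l)}$ rather than $T_{\sigma^{-1}(l)}$, but either choice still yields a bijective relabeling of layers, so the conclusion (equality of ranks across all permutations) is unaffected.
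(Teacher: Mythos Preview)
Your proposal is correct and follows essentially the same approach as the paper: define the permutation as an invertible linear map $P$ on $\mathbb{F}^{n_d}$, use Lemma~\ref{lem:maplayers} to identify $(I\otimes\cdots\otimes I\otimes P)(T)$ with the permuted layer sequence, and invoke Lemma~\ref{lem:induce}\ref{lem:induce:rank} for rank equality. Your write-up is in fact a bit more careful than the paper's (which cites Lemma~\ref{lem:layermap} where Lemma~\ref{lem:maplayers} is meant and has a small typo in the final displayed tensor).
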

\begin{proof}
	Let $P$ be the linear transformation defined by the permutation that $\sigma$
	induces on the basis vectors of $\mathbb{F}^{n_d}$.  Then $P$ is invertible, and so
	by Lemma~\ref{lem:induce}.\ref{lem:induce:inverse} the induced transformation
	$I\otimes\cdots\otimes I\otimes P$ is also invertible and so
	$\rank((I\otimes\cdots\otimes I \otimes P)(T))=\rank(T)$ by
	Lemma~\ref{lem:induce}.\ref{lem:induce:inverse}.  Then, by Lemma~\ref{lem:layermap}
	we see that $(I\otimes\cdots\otimes I\otimes P)(T)=[T_1|\cdots|T_{n_d}]$.
\end{proof}

We also need another symmetry lemma.

\begin{lem}
	\label{lem:rotateaxes}
	For $T\in\otimes_{j=1}^d\mathbb{F}^{n_j}$ and a permutation
	$\sigma:[d]\rightarrow [d]$, define
	$T'\in\otimes_{j=1}^d\mathbb{F}^{n_{\sigma(j)}}$ by
	$T'(i_1,\ldots,i_d)=T(i_{\sigma^{-1}(1)},\ldots,i_{\sigma^{-1}(d)})$.  Then,
	$\rank(T)=\rank(T')$.
\end{lem}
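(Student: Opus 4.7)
The plan is to reduce this to the observation that permuting axes preserves the property of being a simple tensor, and then transport a minimal decomposition of $T$ across the permutation.

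First I would start with a minimal simple tensor decomposition $T=\sum_{k=1}^r \bigotimes_{j=1}^d \vec{v}_{j,k}$, where $r=\rank(T)$. The key claim is that if $S=\bigotimes_{j=1}^d \vec{v}_j$ is simple, then the axis-permuted tensor $S'$ defined by $S'(i_1,\ldots,i_d)=S(i_{\sigma^{-1}(1)},\ldots,i_{\sigma^{-1}(d)})$ equals $\bigotimes_{j=1}^d \vec{v}_{\sigma^{-1}(j)}$, which is again simple (now living in $\bigotimes_{j=1}^d \mathbb{F}^{n_{\sigma^{-1}(j)}} = \bigotimes_{j=1}^d \mathbb{F}^{n_{\sigma(j)}}$ after reindexing — more precisely, the $j$-th factor has dimension $n_{\sigma^{-1}(j)}$). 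This is immediate from Definition~\ref{defn:simpletensor}: both sides evaluate at $(i_1,\ldots,i_d)$ to $\prod_{j=1}^d \vec{v}_{\sigma^{-1}(j)}(i_{\sigma^{-1}(\sigma(j))}) = \prod_{j=1}^d \vec{v}_j(i_{\sigma(j)})$ after the appropriate change of summation variable, matching $S(i_{\sigma^{-1}(1)},\ldots,i_{\sigma^{-1}(d)})$.

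Next, I would note that the operation $T\mapsto T'$ is linear in $T$, since reindexing the arguments of a function commutes with pointwise addition and scalar multiplication. Applying this linearity to the minimal decomposition of $T$ yields $T' = \sum_{k=1}^r \bigotimes_{j=1}^d \vec{v}_{\sigma^{-1}(j),k}$, which exhibits $T'$ as a sum of $r$ simple tensors. Hence $\rank(T')\le \rank(T)$.

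Finally, for the reverse inequality I would apply the same argument to $T'$ with the permutation $\sigma^{-1}$ in place of $\sigma$: unpacking definitions, the axis-permuted version of $T'$ under $\sigma^{-1}$ is exactly $T$, so $\rank(T)\le \rank(T')$. Combining gives equality. There is no real obstacle here — the only thing to be careful about is bookkeeping of indices so that the composition of the two permutations is correctly identified as the identity, and that the ambient tensor product spaces are the expected ones (namely that $T'$ lives in $\bigotimes_{j=1}^d \mathbb{F}^{n_{\sigma^{-1}(j)}}$, so applying the construction again with $\sigma^{-1}$ returns a tensor in $\bigotimes_{j=1}^d \mathbb{F}^{n_j}$). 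Everything else is routine.
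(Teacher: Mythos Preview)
Your approach is exactly the paper's: transport a minimal simple-tensor decomposition across the axis permutation, then invoke symmetry via $\sigma^{-1}$ for the reverse inequality. The only issue is the very bookkeeping you warn about: from $S(i_{\sigma^{-1}(1)},\ldots,i_{\sigma^{-1}(d)})=\prod_j \vec{v}_j(i_{\sigma^{-1}(j)})$ the substitution $j'=\sigma^{-1}(j)$ gives $\prod_{j'} \vec{v}_{\sigma(j')}(i_{j'})$, so $S'=\bigotimes_{j=1}^d \vec{v}_{\sigma(j)}$ (not $\vec{v}_{\sigma^{-1}(j)}$), which is also what is needed for $S'$ to live in $\bigotimes_j \mathbb{F}^{n_{\sigma(j)}}$ as the statement requires. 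With that swap of $\sigma$ for $\sigma^{-1}$ in your formula for $S'$ (and correspondingly in the decomposition of $T'$), the argument is correct and matches the paper's verbatim.
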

\begin{proof}
	We show $\rank(T)\ge\rank(T')$, and the equality follows by symmetry as $\sigma$ is
	invertible.  Consider a simple tensor decomposition
	$T=\sum_{k=1}^r\otimes_{j=1}^d\vec{v}_{j,k}$. It is then easy to
	see that
	$T'=\sum_{k=1}^r\otimes\vec{v}_{\sigma(j),k}$ by
	considering the equation pointwise:
	$T'(i_1,\ldots,i_d)=T(i_{\sigma^{-1}(1)},\ldots,i_{\sigma^{-1}(d)})=\sum_{k=1}^r
	\prod_{j=1}^d\vec{v}_{j,k}(i_{\sigma^{-1}(j)})=\sum_{k=1}^r\prod_{j=1}^d\vec{v}_{\sigma(j),k}(i_j)$. The conclusion then follows by considering a
	minimal rank expansion.
\end{proof}

Finally, we need a corollary about how dropping layers from a tensor affects rank.

\begin{cor}
	\label{cor:droplayers}
	For layers $S_1,\ldots,S_{n_d},S'\in\bigotimes_{j=1}^{d-1}\mathbb{F}^{n_j}$, we have that 
	\begin{equation*}
		\rank([S_1|\cdots|S_{n_d}])\le\rank([S_1|\cdots|S_{n_d}|S'])
	\end{equation*}
	with equality if $S'$ is the zero layer.
\end{cor}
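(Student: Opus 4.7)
The plan is to derive the inequality by applying Lemma~\ref{lem:maplayers} to a projection map that removes the last layer, and then to handle the equality case by extending a minimum-rank decomposition of the smaller tensor by a zero coordinate along the $d$-th axis.

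For the inequality, let $N = n_d + 1$ and set $T = [S_1|\cdots|S_{n_d}|S']$, viewed as a tensor in $\mathbb{F}^{n_1}\otimes\cdots\otimes\mathbb{F}^{n_{d-1}}\otimes\mathbb{F}^{N}$. Define the projection matrix $A \in \mathbb{F}^{n_d \times N}$ whose $(i,j)$-entry is $\lib i=j\rib$ (i.e., $A = [I_{n_d} \mid \vec{0}]$). Then Lemma~\ref{lem:maplayers} gives
\begin{equation*}
    (I\otimes\cdots\otimes I \otimes A)(T) = \left[\sum_{i=1}^{N} a_{1,i}S_i \,\middle|\,\cdots\,\middle|\,\sum_{i=1}^N a_{n_d,i}S_i\right] = [S_1|\cdots|S_{n_d}],
\end{equation*}
since the coefficient $a_{l,i}$ is nonzero only when $i = l \le n_d$. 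Applying Lemma~\ref{lem:induce}(\ref{lem:induce:rank}) to the linear map $I\otimes\cdots\otimes I \otimes A$ yields $\rank([S_1|\cdots|S_{n_d}]) \le \rank(T)$, which is the desired inequality.

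For the equality case with $S' = 0$, the reverse inequality is needed: $\rank([S_1|\cdots|S_{n_d}|0]) \le \rank([S_1|\cdots|S_{n_d}])$. Take a minimum-rank simple tensor decomposition
\begin{equation*}
    [S_1|\cdots|S_{n_d}] = \sum_{k=1}^r \vec{v}_{1,k}\otimes\cdots\otimes\vec{v}_{d-1,k}\otimes\vec{v}_{d,k},
\end{equation*}
with $\vec{v}_{d,k} \in \mathbb{F}^{n_d}$, and for each $k$ extend $\vec{v}_{d,k}$ to $\vec{v}'_{d,k} \in \mathbb{F}^{N}$ by setting the new final coordinate to $0$. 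By Lemma~\ref{lem:simplelayers}, the layers along the $d$-axis of $\vec{v}_{1,k}\otimes\cdots\otimes\vec{v}_{d-1,k}\otimes\vec{v}'_{d,k}$ agree with those of the original simple tensor on positions $1,\ldots,n_d$ and vanish at position $N$. Summing over $k$ and using linearity of taking layers (Lemma~\ref{lem:layermap}) shows the resulting rank-$r$ expression equals $[S_1|\cdots|S_{n_d}|0]$, completing the equality.

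There is no significant obstacle; the main care is to set up the projection matrix $A$ correctly in the first half so that Lemma~\ref{lem:maplayers} outputs exactly the truncated tensor, and in the second half to verify that padding a vector with a zero coordinate does produce a simple tensor with the correct layer structure. Both are bookkeeping given the machinery already developed in this appendix.
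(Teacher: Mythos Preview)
Your proof is correct and follows essentially the same approach as the paper: both use the projection map and Lemma~\ref{lem:maplayers} together with Lemma~\ref{lem:induce}(\ref{lem:induce:rank}) for the inequality. For the equality case the paper invokes the inclusion map $\iota:\mathbb{F}^{n_d}\to\mathbb{F}^{n_d+1}$ via the same two lemmas, whereas you unpack this by hand (padding each $\vec{v}_{d,k}$ with a zero), which amounts to the same argument.
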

\begin{proof}
	($\le$): The projection map $P$ induces the map
	$(I\otimes\cdots\otimes I\otimes P)$ which takes
	$[S_1|\cdots|S_{n_d}|S']$ to $[S_1|\cdots|S_{n_d}]$ by Lemma~\ref{lem:layermap} and
	so Lemma~\ref{lem:induce}.\ref{lem:induce:rank} implies that the rank has not
	increased.  

	($\ge$): So now assume $S'$ is the zero layer.  Then again we apply
	Lemma's~\ref{lem:layermap} and Lemma~\ref{lem:induce}.\ref{lem:induce:rank} but now
	extend 
	the natural inclusion map $\iota:\mathbb{F}^{n_d}\rightarrow \mathbb{F}^{n_d+1}$ to
	a linear map $(I\otimes\cdots\otimes I\otimes \iota)$ on the tensors which takes
	$[S_1|\cdots|S_{n_d}]$ to $[S_1|\cdots|S_{n_d}|0]$, again showing that the rank has
	not increased.
\end{proof}

\section{Layer Reduction}\label{sect:layerreduction}

This section details a generalization of row-reduction, which we call layer-reduction.  We show that
layer-reduction can alter a tensor in such a way to provably reduce its rank.  By showing this
process can be repeated many times, a rank lower bound can be established.

The following lemma is the main technical part of this section.  H\aa stad
implicitly used\footnote{H{\aa}stad's usage, and proof, is reflected by Lemmas~2, 3 and 4 (and the
following discussion) of the conference version~\cite{20100105.1}.  The journal
version~\cite{20100105.2} ascribes the origin of these lemmas to Lemma~2 in the work of Hopcroft and
Kerr~\cite{hopcroft-kerr}} a version of this lemma in his proof that tensor rank is
\NP-Complete~\cite{20100105.1,20100105.2}	However, H{\aa}stad's usage requires that $S_{n_d}$
is a rank-one tensor.  This special case does not seem to directly imply our lemma, which
was independently proven.  While the special case is sufficient to lower-bound the
combinatorially-constructed tensors of Section~\ref{sect:combtensors}, the full lemma
is needed to lower-bound the rank of the algebraically-constructed tensors of
Section~\ref{sect:algtensors}.

\begin{lem}[Layer Reduction]
	\label{lem:layerreduction}
	For layers $S_1,\ldots,S_{n_d}\in\bigotimes_{j=1}^{d-1}\mathbb{F}^{n_j}$
	with $S_{n_d}$ non-zero, there exist constants $c_1,\ldots,c_{n_d-1}\in\mathbb{F}$ such that
	\begin{equation*}
		\rank([S_1|\cdots | S_{n_d}])\ge \rank([S_1+c_1S_{n_d}|\cdots | S_{n_d-1}+c_{n_d-1}S_{n_d}])+1
	\end{equation*}
\end{lem}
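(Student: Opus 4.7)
The plan is to convert a minimum simple-tensor decomposition of $T = [S_1|\cdots|S_{n_d}]$ into an explicit decomposition of the reduced tensor with exactly one fewer term. First I would fix a decomposition $T = \sum_{k=1}^r \bigotimes_{j=1}^d \vec{v}_{j,k}$ with $r = \rank(T)$, assuming without loss of generality that every summand is a nonzero simple tensor. By Lemma~\ref{lem:simplelayers} the layers expand as $S_l = \sum_{k=1}^r \vec{v}_{d,k}(l)\cdot(\vec{v}_{1,k}\otimes\cdots\otimes\vec{v}_{d-1,k})$.

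The central idea is to choose the $c_l$'s to cancel a single one of these $r$ summands from every $S_l + c_l S_{n_d}$, $l < n_d$. Because $S_{n_d}$ is nonzero, at least one index $k$ in the decomposition must satisfy $\vec{v}_{d,k}(n_d)\neq 0$; after reindexing and rescaling I may assume $k=1$ and $\vec{v}_{d,1}(n_d)=1$. Setting $c_l := -\vec{v}_{d,1}(l)$ for $l = 1,\ldots,n_d-1$ makes the $k=1$ contribution to $S_l + c_l S_{n_d}$ vanish identically, leaving
\[ S_l + c_l S_{n_d} = \sum_{k=2}^r \bigl(\vec{v}_{d,k}(l) + c_l\vec{v}_{d,k}(n_d)\bigr)\cdot(\vec{v}_{1,k}\otimes\cdots\otimes\vec{v}_{d-1,k}). \]

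The last step is to repackage this layer-by-layer identity into a single simple-tensor decomposition. Defining $\vec{w}_k \in \mathbb{F}^{n_d - 1}$ for $k = 2,\ldots,r$ by $\vec{w}_k(l) := \vec{v}_{d,k}(l) + c_l\vec{v}_{d,k}(n_d)$, the formula above assembles (via Definition~\ref{defn:layers}) into $[S_1 + c_1 S_{n_d}|\cdots|S_{n_d-1}+c_{n_d-1}S_{n_d}] = \sum_{k=2}^{r} \vec{v}_{1,k}\otimes\cdots\otimes\vec{v}_{d-1,k}\otimes\vec{w}_k$, which has $r-1$ terms. This gives $\rank([S_1+c_1S_{n_d}|\cdots|S_{n_d-1}+c_{n_d-1}S_{n_d}]) \le r - 1 = \rank(T) - 1$, which is the claim. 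The only substantive point is extracting a summand with $\vec{v}_{d,k}(n_d)\neq 0$, and this is immediate from $S_{n_d}\neq 0$; thereafter the argument is just bookkeeping with multilinearity.
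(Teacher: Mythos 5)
Your proof is correct and follows essentially the same route as the paper's: both identify a summand $k_0$ with $\vec{v}_{d,k_0}(n_d)\neq 0$, choose $c_l=-\vec{v}_{d,k_0}(l)/\vec{v}_{d,k_0}(n_d)$ to annihilate that summand, and conclude the reduced tensor has a decomposition with $r-1$ terms. The only difference is presentational: the paper packages the bookkeeping as an application of the induced linear map $I\otimes\cdots\otimes I\otimes A''$ (via Lemmas~\ref{lem:maplayers} and~\ref{lem:induce}), whereas you carry out the same coordinate computation directly.
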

\begin{proof}
	Denote $T:=[S_1|\ldots | S_{n_d}]$.  The proof is in two steps.  The first step
	defines a linear transformation $A$ on $\mathbb{F}^{n_d}$ such that the linear
	transformation $I\otimes\cdots \otimes I\otimes A$ is a higher-dimensional analogue
	of a row-reduction step in Gaussian elimination.  That is, for $T'$ the image of
	$T$, it is seen that $T'=[S_1+c_1S_{n_d}|\ldots |
	S_{n_d-1}+c_{n_d-1}S_{n_d}|S_{n_d}]$ by Lemma~\ref{lem:maplayers}.  The $c_i$ are chosen in such a way so that
	$T'$ has a minimal simple tensor expansion where some simple tensor $R$ is non-zero
	only on the $S_{n_d}$-layer.  In the second step, the $S_{n_d}$-layer is dropped and the remaining tensor
	$T''=[S_1+c_1S_{n_d}|\ldots | S_{n_d-1}+c_{n_d-1}S_{n_d}]$ no longer requires $R$ in
	its simple tensor expansion and so $\rank(T)\ge \rank(T'')+1$.
	
	Consider a minimal simple tensor expansion
	$T=\sum_{k=1}^r \otimes_{j=1}^d\vec{v}_{j,k}$.  Expanding the
	$\vec{v}_{d,k}$ in terms of basis vectors yields \[T=\sum_{k=1}^r
	(\vec{v}_{1,k}\otimes\cdots\otimes\vec{v}_{d-1,k})\otimes
	(\vec{v}_{d,k}(1)\cdot\vec{e}_{1,d}+\cdots+\vec{v}_{d,k}(n_d)\cdot\vec{e}_{n_d,d}))\]
	and in particular Lemma~\ref{lem:simplelayers} shows that $S_{n_d}=\sum_{k=1}^r
	(\vec{v}_{1,k}\otimes\cdots\otimes\vec{v}_{d-1,k})\cdot\vec{v}_{d,k}(n_d)$. As
	$S_{n_d}$ is
	non-zero there must be some $k_0$ such that $\vec{v}_{d,k_0}(n_d)\ne 0$.  Define
	$A:\mathbb{F}^{n_d}\rightarrow \mathbb{F}^{n_d}$ to be the linear transformation
	defined by its action on the standard basis
	\begin{equation*}
		A(\vec{e}_{i,d}) =
			\begin{cases}
				\vec{e}_{n_d,d}-\frac{\vec{v}_{d,k_0}(1)}{\vec{v}_{d,k_0}(n_d)}\vec{e}_{1,d}-\cdots-\frac{\vec{v}_{d,k_0}(n_d-1)}{\vec{v}_{d,k_0}(n_d)}\vec{e}_{n_d-1,d}	&	\text{if } i=n_d\\
				\vec{e}_{i,d}	&	\text{else}
			\end{cases}
	\end{equation*}
	
	Letting $I$ denote the identity transformation, consider the tensor
	$T':=(I\otimes\cdots\otimes I\otimes A)(T)\in
	\mathbb{F}^{n_1}\otimes\cdots\otimes\mathbb{F}^{n_d}$. By Lemma~\ref{lem:maplayers},
	we observe that $T'=[S_1+c_1S_{n_d}|\cdots | S_{n_d-1}+c_{n_d-1}S_{n_d}|S_{n_d}]$,
	where $c_i=-\frac{\vec{v}_{d,k_0}(j)}{\vec{v}_{d,k_0}(n_d)}$.

	By Lemma~\ref{lem:induce} we have the simple tensor expansion $T'=\sum_{k=1}^r
	\vec{v}_{1,k}\otimes\cdots\otimes \vec{v}_{d-1,k}\otimes A\vec{v}_{d,k}$.  By
	construction, $A(\vec{v}_{d,k_0})=\vec{v}_{d,k_0}(n_d)\cdot\vec{e}_{n_d,d}$.  Using
	Lemma~\ref{lem:simplelayers} we observe that the simple tensor
	$\vec{v}_{1,k_0}\otimes\cdots\otimes \vec{v}_{d-1,k_0}\otimes A\vec{v}_{d,k_0}$ has
	non-zero entries only on the $S_{n_d}$-layer.

	We now define the linear transformation $A':\mathbb{F}^{n_d}\rightarrow
	\mathbb{F}^{n_d-1}$ defined by
	\begin{equation*}
		A'(\vec{e}_{i,d}) =
			\begin{cases}
				\vec{0}		&	\text{if } i=n_d\\
				\vec{e}_{i,d}	&	\text{else}
			\end{cases}
	\end{equation*}
	This will correspond to dropping the $S_{n_d}$-layer.  We can compose this with $A$
	to get $A''=A'\circ A$, defined by 
	\begin{equation*}
		A''(\vec{e}_{i,d}) =
			\begin{cases}
				-\frac{\vec{v}_{d,k_0}(1)}{\vec{v}_{d,k_0}(n_d)}\vec{e}_{1,d}-\cdots-\frac{\vec{v}_{d,k_0}(n_d-1)}{\vec{v}_{d,k_0}(n_d)}\vec{e}_{n_d-1,d}	&	\text{if } i=n_d\\
				\vec{e}_{i,d}	&	\text{else}
			\end{cases}
	\end{equation*}
	So now we take $T''=(I\otimes\cdots\otimes I\otimes A'')(T)$.  By
	Lemma~\ref{lem:maplayers} we see that $T''=[S_1+c_1S_{n_d}|\cdots |
	S_{n_d-1}+c_{n_d-1}S_{n_d}]$.  Further, we observe now that by construction
	$A''(\vec{v}_{d,k_0})=\vec{0}$.  This leads to the simple tensor expansion,
	\begin{align*}
		T''
			&= \sum_{k=1}^r \vec{v}_{1,k}\otimes\cdots\otimes \vec{v}_{d-1,k}\otimes A''\vec{v}_{d,k}\\
			&= \vec{v}_{1,k_0}\otimes\cdots\otimes \vec{v}_{d-1,k_0}\otimes A''\vec{v}_{d,k_0}+\sum_{k=1,k\ne k_0}^r \vec{v}_{1,k}\otimes\cdots\otimes \vec{v}_{d-1,k}\otimes A''\vec{v}_{d,k}\\
			&= \vec{v}_{1,k_0}\otimes\cdots\otimes \vec{v}_{d-1,k_0}\otimes \vec{0}+\sum_{k=1,k\ne k_0}^r \vec{v}_{1,k}\otimes\cdots\otimes \vec{v}_{d-1,k}\otimes A''\vec{v}_{d,k}\\
			&= \sum_{k=1,k\ne k_0}^r \vec{v}_{1,k}\otimes\cdots\otimes \vec{v}_{d-1,k}\otimes A''\vec{v}_{d,k}
	\end{align*}
	Therefore $\rank(T'')\le r-1=\rank(T)-1$, and thus $\rank(T)\ge \rank(T'')-1$.
\end{proof}

The layer-reduction lemma will mostly be used via the following extension.

\begin{cor}[Iterative Layer-Reduction]
	\label{cor:layerreduction}
	For layers $S_1,\ldots,S_{n_d}\in
	\mathbb{F}^{n_1}\otimes\cdots\otimes\mathbb{F}^{n_{d-1}}$ with $S_1,\ldots,S_m$
	linearly independent (as vectors in the space $\mathbb{F}^{n_1\cdots n_{d-1}}$),
	there exist constants $c_{i,j}\in\mathbb{F}$, $i\in\{1,\ldots,m\}$,
	$j\in\{m+1,\ldots, n_d\}$, such that
	\begin{equation}
		\label{eq:maineq}
		\rank([S_1|\ldots | S_{n_d}])\ge
		\rank\left(\left[S_{m+1}+\sum_{i=1}^mc_{i,m+1}S_i\left|\ldots \left|
		S_{n_d}+\sum_{i=1}^mc_{i,n_d}S_i\right.\right.\right]\right)+m
	\end{equation}
\end{cor}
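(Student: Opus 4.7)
The plan is to prove this by induction on $m$, applying Lemma~\ref{lem:layerreduction} exactly $m$ times, once per layer to be eliminated. The base case $m=0$ is immediate. For the inductive step, I would first use Corollary~\ref{lem:reorder} to move $S_1$ into the last slot, producing $[S_2|\cdots|S_{n_d}|S_1]$ without changing rank. Because $S_1,\ldots,S_m$ are linearly independent, in particular $S_1\neq 0$, so Lemma~\ref{lem:layerreduction} supplies constants $c_{1,2},\ldots,c_{1,n_d}\in\mathbb{F}$ with
\begin{equation*}
\rank([S_1|\cdots|S_{n_d}]) \;\ge\; \rank([S_2 + c_{1,2}S_1\,|\,\cdots\,|\,S_{n_d} + c_{1,n_d}S_1]) + 1.
\end{equation*}

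To invoke the inductive hypothesis on the reduced tensor, I need the first $m-1$ of its layers, namely $S_j + c_{1,j}S_1$ for $j=2,\ldots,m$, to remain linearly independent. If $\sum_{j=2}^m \alpha_j(S_j + c_{1,j}S_1)=0$, this rearranges to $\bigl(\sum_{j=2}^m \alpha_j c_{1,j}\bigr)S_1 + \sum_{j=2}^m \alpha_j S_j = 0$, and the original independence of $S_1,\ldots,S_m$ forces every $\alpha_j$ to vanish. Applying induction to the reduced tensor then yields a further rank drop of $m-1$, modifying each layer with index $j\in\{m+1,\ldots,n_d\}$ by adding some linear combination of the layers $S_i + c_{1,i}S_1$ for $i=2,\ldots,m$.

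Finally I would track how the two rounds of coefficients compose. Each final layer takes the form
\begin{equation*}
(S_j + c_{1,j}S_1) + \sum_{i=2}^m c'_{i,j}(S_i + c_{1,i}S_1)
= S_j + \sum_{i=2}^m c'_{i,j}S_i + \Bigl(c_{1,j} + \sum_{i=2}^m c'_{i,j}c_{1,i}\Bigr)S_1,
\end{equation*}
which, after relabeling the constants, is exactly the claimed $S_j + \sum_{i=1}^m c_{i,j}S_i$. Chaining the two rank drops gives the total drop of $m$.

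The main obstacle is confirming that linear independence survives each reduction step, since the inductive hypothesis demands exactly this property of the leading layers. The quotient-style calculation above makes this immediate, reducing the remainder of the argument to routine bookkeeping on how the corrections from successive rounds combine into a single linear combination of the eliminated layers.
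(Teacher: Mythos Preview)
Your proof is correct and follows essentially the same inductive scheme as the paper's: both argue by induction on $m$, invoking Lemma~\ref{lem:layerreduction} once per step and checking that linear independence persists so the induction can continue. The only cosmetic difference is that the paper applies the inductive hypothesis first (eliminating $S_1,\ldots,S_{m-1}$) and then peels off the modified $S_m$, whereas you peel off $S_1$ first and then invoke the hypothesis; the bookkeeping on the composed constants is the same.
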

\begin{proof}
	The proof is by induction on $m$.

	\underline{$m=1$:} This is Lemma~\ref{lem:layerreduction}, up to reordering of the
	layers, with the observation that the singleton set $\{S_1\}$ is
	linearly-independent iff $S_1$ is non-zero. The reordering of layers is justified by
	Lemma~\ref{lem:reorder}.

	\underline{$m>1$:} By the induction hypothesis we have that 
	\begin{equation}
		\label{eq:inductionhyp}
		\rank([S_1|\ldots | S_{n_d}])\ge
		\rank\left(\left[S_{m}+\sum_{i=1}^{m-1}c_{i,m}S_i\left|\ldots \left|
		S_{n_d}+\sum_{i=1}^{m-1}c_{i,n_d}S_i\right.\right.\right]\right)+m-1
	\end{equation}
	for the appropriate set of constants $c_{i,j}$.
	As the $S_i$ are linearly independent, $S_{m}+\sum_{i=1}^{m-1}c_{i,m}S_i$ is
	non-zero and so we can eliminate this layer from
	$\left[S_{m}+\sum_{i=1}^{m-1}c_{i,m}S_i\left|\ldots \left|
	S_{n_d}+\sum_{i=1}^{m-1}c_{i,n_d}S_i\right.\right.\right]$ by
	Lemma~\ref{lem:layerreduction} and consequently have
	\begin{multline}
		\label{eq:layerinduction}
		\rank\left(\left[S_{m}+\sum_{i=1}^{m-1}c_{i,m}S_i\left|\ldots \left|
		S_{n_d}+\sum_{i=1}^{m-1}c_{i,n_d}S_i\right.\right.\right]\right)\\
		\ge
		\rank\left(\left[\left(S_{m+1}+\sum_{i=1}^{m-1}c_{i,m+1}S_i\right)+c_{m,m+1}\left(S_{m}+\sum_{i=1}^{m-1}c_{i,m}S_i\right)\right|\right.\\
		\ldots \left.\left| \left(S_{n_d}+\sum_{i=1}^{m-1}c_{i,n_d}S_i\right)+c_{m,n_d}\left(S_{m}+\sum_{i=1}^{m-1}c_{i,m}S_i\right)\right]\right)+1
	\end{multline}
	where the $c_{m,j}$ are new constants.  Now define
	\begin{equation}
		\label{eq:newconstants}
		c'_{i,j}=
			\begin{cases}
				c_{i,j}+c_{m,j}c_{i,m}	&	\text{if } i\ne m\\
				c_{m,j}			&	\text{else}
			\end{cases}
	\end{equation}
	Combining Equations~\eqref{eq:inductionhyp}, \eqref{eq:layerinduction}, and
	\eqref{eq:newconstants} yields the desired Equation~\eqref{eq:maineq}.
\end{proof}

Notice that by Lemma~\ref{lem:rotateaxes} we can in fact use Lemma~\ref{lem:layerreduction}
and Corollary~\ref{cor:layerreduction} along any axis, not just the $d$-th one.

\begin{rmk}
	\label{rmk:layerreductionbarrier}
	Lemma~\ref{lem:layerreduction} shows that the rank of
	$T:\prod_{j=1}^d[n_j]\rightarrow \mathbb{F}$ is at least $1$ more than the rank of
	some
	$T':\prod_{j=1}^d[n_j']\rightarrow \mathbb{F}$, where $n_j'=n_j$ for all $j\ne j_0$,
	and $n_{j_0}'=n_{j_0}-1$.  In using this lemma, the quantity $\sum_{j=1}^d n_j$
	decreases by one.  Therefore, we can never hope to apply this lemma more than
	$\sum_{j=1}^d n_j$ many times, and thus using this lemma alone will never produce
	lower bounds larger than this quantity.
	Corollary~\ref{cor:layerreduction} simply applies Lemma~\ref{lem:layerreduction}, so
	the same barriers apply.
\end{rmk}

\section{Proofs for Section~\ref{sect:combtensors}}\label{sect:combtensorsproofs}

\begin{proof}[Proof of Lemma~\ref{lem:addident}]
	Notice that the left hand sides of Equation~\ref{eq:addidentrankeven} and
	Equation~\ref{eq:addidentrankodd} are equal. This follows from applying
	Corollary~\ref{cor:droplayers} twice (using that this corollary extends to layers
	along any axes,
	not just the $d$-th, by applying Lemma~\ref{lem:rotateaxes}), once on the layers
	slicing the page vertically, and once on the layers slicing the page horizontally.  Thus, it is enough to show
	Equation~\ref{eq:addidentrankeven}.

	We now apply Corollary~\ref{cor:layerreduction}.  First, we use it on the layers
	slicing the page vertically and deriving that
	\begin{equation}
		\label{eq:ident1}
		\rank\left(\left[
			\begin{matrix}
				I_n & 0_n\\
				0_n & I_n
			\end{matrix}
			\left|
			\begin{matrix}
				0_n & 0_n\\
				A_1 & 0_n
			\end{matrix}
			\right|
			\cdots
			\left|
			\begin{matrix}
				0_n & 0_n\\
				A_k & 0_n
			\end{matrix}
			\right.
			\right]
			\right)
		\ge
		\rank\left(\left[
			\begin{matrix}
				I_n\\
				C
			\end{matrix}
			\left|
			\begin{matrix}
				0_n\\
				A_1
			\end{matrix}
			\right|
			\cdots
			\left|
			\begin{matrix}
				0_n\\
				A_k
			\end{matrix}
			\right.
			\right]
			\right)
			+n
	\end{equation}
	where $C$ is an $n\times n$ matrix of field elements defined by the constants
	$c_{i,j}$ of Corollary~\ref{cor:layerreduction}. Notice that the layers being
	dropped in the use of this corollary must be linearly independent.  However, as they
	are the layers of $[I_n|0_n|\cdots|0_n]$ which slice the page vertically, they have
	exactly one 1 in the first row\footnote{It is immaterial whether we call this a
	``row'' or ``column'', as no specific orientation of these tensors was chosen.}, and
	have 0 entries elsewhere.  As their non-zero entries are in different positions,
	they are linearly independent.  Similarly, we can apply the corollary again on the
	remaining layers that slice the page horizontally to see that
	\begin{equation}
		\label{eq:ident2}
		\rank\left(\left[
			\begin{matrix}
				I_n\\
				C
			\end{matrix}
			\left|
			\begin{matrix}
				0_n\\
				A_1
			\end{matrix}
			\right|
			\cdots
			\left|
			\begin{matrix}
				0_n\\
				A_k
			\end{matrix}
			\right.
			\right]
			\right)
		\ge
		\rank([C+C'|A_1|\cdots|A_k])
			+n
	\end{equation}
	where $C'$ is yet another $n\times n$ matrix of field elements produced by
	Corollary~\ref{cor:layerreduction}. We now invoke Corollary~\ref{cor:droplayers} to
	observe that
	\begin{equation}
		\label{eq:ident3}
		\rank([C+C'|A_1|\cdots|A_k])
		\ge
		\rank([A_1|\cdots|A_k])
	\end{equation}
	Combining Equations~\eqref{eq:ident1}, \eqref{eq:ident2}, and \eqref{eq:ident3}
	yields Equation~\eqref{eq:addidentrankeven} and thus the claim.
\end{proof}

\begin{proof}[Proof of Theorem~\ref{thm:mainthm}]
	\underline{(\ref{mainthm:size})}: This is clear from construction.

	\underline{(\ref{mainthm:rank})}:  We first note that $\lfloor \lg 2n\rfloor=\lfloor
	\lg n+1\rfloor=\lfloor \lg n\rfloor +1$.  We first prove the upper bound, and then the
	lower bound.

	To see that $\rank(T_n)\le 2n-2H(n)+1$ we observe that $T_n$ has exactly this many
	non-zero entries.  Denote this quantity $r_n$.  We proceed by induction on the
	recursive definition of the $S_{n,i}$.  For $n=1$,
	there is clearly exactly $2\cdot 1-2H(1)+1=1$ non-zero entry.  For $2n>1$, $r_{2n}=r_n+2n$
	which by induction yields $r_{2n}=(2n-2H(n)+1)+2n$.  Observing that $H(n)=H(2n)$, we see
	that $r_{2n}=2(2n)-2H(2n)+1$.  For $2n+1>1$, $r_{2n+1}=r_n+2n$, which by induction yields
	$r_{2n+1}=(2n-2H(n)+1)+2n$.  Noticing that $H(2n+1)=H(n)+1$ we have that
	$r_{2n+1}=2n-2(H(2n+1)-1)+1+2n=4n+2-2H(2n+1)+1=2(2n+1)-2H(2n+1)+1$. Thus, the
	induction hypothesis shows that $r_n=2n-2H(n)+1$ for all $n$, and thus
	upper-bounding the rank by this quantity.
	
	For the rank lower bound, we use Lemma~\ref{lem:addident} and induction on the
	recursive definition of the $S_{n,i}$.  Clearly,
	$\rank(T_1)\ge 1$.  Then for $2n>1$, $\rank(T_{2n})\ge\rank(T_n)+2n$, and for
	$2n+1>1$, $\rank(T_{2n+1})\ge\rank(T_n)+2n$.  These are exactly the same recurrences
	from the proceeding paragraph, and so they have the same solution: $\rank(T_n)\ge
	2n-2H(n)+1$.

	Combining these two bounds shows that $\rank(T_n)=2n-2H(n)+1$.

	\underline{(\ref{mainthm:explicit})}: This is clear from the equations defining the
	$S_{n,i}$.
\end{proof}

\begin{proof}[Proof of Corollary~\ref{cor:maincor}]
	\underline{(\ref{maincor:size})}: This is clear from construction.

	\underline{(\ref{maincor:rank})}: Observe that in the construction of $T'_n$, the
	matrices $S'_{n,i}$ for $i> \lfloor \lg (n-1)\rfloor +1$ are linearly independent.
	Thus, applying Corollary~\ref{cor:layerreduction}, we see that
	\begin{equation*}
		\rank(T'_n)\ge
		\rank([\tilde{S}'_{n,1}|\cdots|\tilde{S}'_{n,\lfloor\lg(n-1)\rfloor+1}])+n-(\lfloor\lg(n-1)\rfloor+1)
	\end{equation*}
	where
	\begin{equation*}
		\tilde{S}'_{n,i}=
			\begin{bmatrix}
				S_{n-1,i-1} & \vec{c}_i\\
				0	& 0
			\end{bmatrix}
	\end{equation*}
	for some arbitrary vectors $\vec{c}_i\in\mathbb{F}^{n-1}$.  It follows from
	Corollary~\ref{cor:droplayers} that we can drop the bottom row and last column of
	each of the $\tilde{S}_{n,i}$ without increasing the rank, so that 
	\begin{equation*}
		\rank([\tilde{S}'_{n,1}|\cdots|\tilde{S}'_{n,\lfloor\lg(n-1)\rfloor+1}])
			\ge
			\rank([S_{n-1,0}|\cdots|S_{n-1,\lfloor\lg(n-1)\rfloor}])
	\end{equation*}
	where the $S_{n-1,i-1}$ are as defined in Theorem~\ref{thm:mainthm}, and as such,
	$\rank([S_{n-1,0}|\cdots|S_{n-1,\lfloor\lg(n-1)\rfloor}])=2(n-1)+2H(n-1)+1$.
	Combining these inequalities yields the rank lower bound for $T'_n$.

	\underline{(\ref{maincor:explicit})}: This is clear from the equations defining
	$T'_n$, and using the explicitness of the $S_{n-1,i-1}$ as seen from
	Theorem~\ref{thm:mainthm}.
\end{proof}

\section{Algebraically-defined Tensors}\label{sect:algtensors}

The results of this section will be field-specific, and so we no longer work over an
arbitrary field.

\begin{lem}
	\label{lem:tensorelim}
	Let $\mathbb{F}_q$ be the field of $q$ elements. Consider $n\times n$ matrices
	$M_1,\ldots,M_k$ over $\mathbb{F}_q$ such that all non-zero linear combinations have
	full-rank. Then the tensor
	$T=[M_1|\cdots|M_k]$ has tensor rank at least
	$\frac{q^k-1}{q^k-q^{k-1}}n$.
\end{lem}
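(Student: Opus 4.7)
The plan is to fix a minimum-rank decomposition $T=\sum_{l=1}^r \vec{u}_l\otimes \vec{v}_l\otimes \vec{w}_l$ with $\vec{w}_l\in\mathbb{F}_q^k$, and argue that the hypothesis forces many of the $\vec{w}_l$ to be ``non-degenerate'' with respect to every non-zero direction $\vec{c}\in\mathbb{F}_q^k$. First I would discard any term with $\vec{u}_l=0$, $\vec{v}_l=0$, or $\vec{w}_l=0$ (such terms contribute nothing), so WLOG every $\vec{w}_l$ is non-zero. By Lemma~\ref{lem:maplayers} (or by direct unpacking), reading off the $i$-th layer gives $M_i=\sum_{l=1}^r (\vec{w}_l)_i\,\vec{u}_l\vec{v}_l^T$, so for any coefficient vector $\vec{c}=(c_1,\dots,c_k)\in\mathbb{F}_q^k$ we have
\begin{equation*}
\sum_{i=1}^k c_i M_i = \sum_{l=1}^r \langle \vec{c},\vec{w}_l\rangle \,\vec{u}_l\vec{v}_l^T.
\end{equation*}

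The key observation is that this right-hand side is a sum of rank-one matrices, so its rank is at most the number of indices $l$ with $\langle\vec{c},\vec{w}_l\rangle\ne 0$. Since every non-zero linear combination is assumed to have full rank $n$, this means that for every non-zero $\vec{c}\in\mathbb{F}_q^k$ at most $r-n$ of the $\vec{w}_l$ satisfy $\langle\vec{c},\vec{w}_l\rangle=0$. I would then double-count the set
\begin{equation*}
\mathcal{P}=\{(\vec{c},l)\,:\,\vec{c}\in\mathbb{F}_q^k\setminus\{0\},\ 1\le l\le r,\ \langle\vec{c},\vec{w}_l\rangle=0\}.
\end{equation*}
Summing over $\vec{c}$ first gives $|\mathcal{P}|\le(q^k-1)(r-n)$. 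Summing over $l$ first, the hyperplane $\{\vec{c}:\langle\vec{c},\vec{w}_l\rangle=0\}$ has $q^{k-1}$ elements since $\vec{w}_l\ne 0$, contributing $q^{k-1}-1$ non-zero $\vec{c}$'s, so $|\mathcal{P}|=r(q^{k-1}-1)$.

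Combining the two estimates yields $r(q^{k-1}-1)\le (q^k-1)(r-n)$, and solving for $r$ gives exactly the bound
\begin{equation*}
r\ge\frac{q^k-1}{q^k-q^{k-1}}\,n.
\end{equation*}
The argument is essentially a one-shot double count once the right quantity is identified; the only point that requires care is the sanity reduction that ensures each $\vec{w}_l$ is non-zero (so that the hyperplane count gives $q^{k-1}$ rather than $q^k$), and verifying that a rank-one matrix $\vec{u}_l\vec{v}_l^T$ can genuinely be assumed non-zero without affecting rank. Beyond this, the hardest conceptual step is simply noticing that the full-rank hypothesis should be translated into a lower bound on the support of the coefficients $\langle\vec{c},\vec{w}_l\rangle$ rather than into a more complicated statement about the $\vec{u}_l,\vec{v}_l$.
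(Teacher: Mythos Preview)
Your proof is correct and follows essentially the same approach as the paper: both fix a minimal decomposition, use the identity $\sum_i c_i M_i=\sum_l\langle\vec{c},\vec{w}_l\rangle\,\vec{u}_l\vec{v}_l^T$, and exploit that full rank forces at least $n$ non-zero inner products for every non-zero $\vec{c}$. The only cosmetic difference is that the paper phrases the averaging as a probabilistic/expectation argument (choosing a single $\vec{c}$ achieving at most the average of $|S_{\vec{c}}|$), while you carry out the equivalent double count explicitly; the resulting inequality is identical.
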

\begin{proof}
	The proof is via the probabilistic method, using randomness to perform an analogue
	of gate elimination.  For
	non-zero $\vec{c}\in\mathbb{F}_q^k$, the summation $\vec{c}\cdot\vec{M}$ will
	nullify terms in a simple tensor expansion with some probability.  This will in
	expectation reduce the rank.  We then invoke the hypothesis that the result is
	full-rank, to conclude the bound on the original rank.

	Consider a minimal simple tensor decomposition $T=\sum_{i=1}^r \vec{u}_i\otimes
	\vec{v}_i\otimes\vec{w}_i$.  For $\vec{c}\in\mathbb{F}_q^k$, consider
	(notation-abused) dot-product $\langle\vec{c},\vec{M}\rangle$, which can also be
	written as the matrix $\sum_{i=1}^k c_i M_i$.  By Lemma~\ref{lem:maplayers} it can
	be seen that this is the image of $T$ under the linear transformation $I\otimes
	I\otimes A$, where $A$ is the linear transformation that sends the basis element
	$\vec{e}_i$ to $c_i \vec{e}_1$.  Consequently, we have that $\langle\vec{c},
	\vec{M}\rangle=(I\otimes I \otimes A)(T)=\sum_{i=1}^r
	\vec{u}_i\otimes\vec{v}_i\otimes (A\vec{w}_i)$.  Noticing that
	$A\vec{w}_i=\langle\vec{c},\vec{w}_i\rangle$, and that we can then treat this as a
	matrix instead of one-layer tensor, we see that
	$\langle\vec{c},\vec{M}\rangle=\sum_{i=1}^r
	\langle\vec{c},\vec{w}_i\rangle\vec{u}_i\otimes\vec{v}_i$.
	
	Minimality implies that $\vec{u}_i\ne 0$ for all $i$. So for a fixed $i$, the set of
	$\vec{c}$ such that $\langle\vec{c},\vec{w}_i\rangle=0$ is a 1-dimensional subspace
	by the Rank-Nullity theorem.  Using that the field size is $q$, this shows that
	\begin{equation*}
		\Pr_{\vec{c}\in_u\mathbb{F}_q^k\setminus\{\vec{0}\}} 
		[\langle\vec{c},\vec{w}_i\rangle\ne 0]= \frac{q^k-q^{k-1}}{q^k-1}
	\end{equation*}
	Now define $S_{\vec{c}}:=\{i|\langle\vec{c},\vec{w}_i\rangle\ne 0\}$.    By
	linearity of expectation,
	$\mathbb{E}_{\vec{c}\in_u\mathbb{F}_q^k\setminus\{\vec{0}\}}
	[|S_{\vec{c}}|]=\frac{q^k-q^{k-1}}{q^k-1}r$.  Thus, there exits a non-zero $\vec{c}_0$
	such that $|S_{\vec{c}_0}|\le \frac{q^k-q^{k-1}}{q^k-1}r$.  Therefore, we can
	write the matrix $\langle\vec{c},\vec{M}\rangle$ as 
	$\langle\vec{c},\vec{M}\rangle=\sum_{i=1}^r
	\langle\vec{c},\vec{w}_i\rangle\vec{u}_i\otimes\vec{v}_i=\sum_{i\in S}
	\langle\vec{c},\vec{w}_i\rangle\vec{u}_i\otimes\vec{v}_i$.  The hypothesis on the
	$M_i$ says that $\langle \vec{c},\vec{M}\rangle$ is of full-rank, and therefore we
	have that $n\le\rank(\langle\vec{c},\vec{M}\rangle)\le |S|\le
	\frac{q^k-q^{k-1}}{q^k-1}r$.  As $\rank(T)=r$, we have that $\rank(T)\ge
	\frac{q^k-1}{q^k-q^{k-1}}n$.
\end{proof}

\begin{cor}
	\label{cor:alglb}
	Let $\mathbb{F}_q$ be the field of $q$ elements. Consider $n\times n$ matrices
	$M_1,\ldots,M_n$ over $\mathbb{F}_q$ such that all non-zero linear combinations have
	full-rank. Then the tensor
	$T=[M_1|\cdots|M_n]$ has tensor rank at least  $\frac{2q-1}{q-1}n-\lceil\log_q
	n\rceil-\frac{q}{q-1}=\frac{2q-1}{q-1}n-\Theta(\log_q n)$.
\end{cor}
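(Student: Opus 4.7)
The plan is to combine Lemma~\ref{lem:tensorelim} with the Iterative Layer Reduction (Corollary~\ref{cor:layerreduction}). Lemma~\ref{lem:tensorelim} gives a strong bound when the number of layers $k$ is small (the ratio $\frac{q^k-1}{q^k-q^{k-1}}$ tends to $\frac{q}{q-1}$ as $k$ grows, but is close to this limit already for $k \approx \log_q n$), so the strategy is to first eliminate $n-k$ layers via Corollary~\ref{cor:layerreduction}, gaining $n-k$ in rank, and then apply Lemma~\ref{lem:tensorelim} to the resulting $k$-layer tensor. Balancing the two contributions will lead to the optimal choice $k=\lceil\log_q n\rceil$.

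Concretely, I would first observe that the hypothesis on the $M_i$ implies they are linearly independent (any non-trivial dependence would produce a non-zero linear combination of rank less than $n$). Thus, after reordering the layers (using Lemma~\ref{lem:reorder}, i.e.\ the symmetry lemma underlying Corollary~\ref{cor:layerreduction}), we may apply iterative layer reduction to the first $n-k$ layers $M_1,\ldots,M_{n-k}$ to obtain constants $c_{j,i}\in\mathbb{F}_q$ so that, writing $M_i':=M_i+\sum_{j=1}^{n-k}c_{j,i}M_j$ for $i>n-k$,
\begin{equation*}
\rank(T)\ \ge\ \rank\bigl([M_{n-k+1}'|\cdots|M_n']\bigr)+(n-k).
\end{equation*}
The key verification is that the $M_i'$ inherit the ``all non-zero linear combinations are full-rank'' property: any non-zero combination $\sum_{i>n-k}d_iM_i'$ expands to $\sum_{i>n-k}d_iM_i+\sum_{j\le n-k}e_jM_j$ for some $e_j\in\mathbb{F}_q$, and since the coefficients of $M_{n-k+1},\ldots,M_n$ are exactly the $d_i$ (not all zero), this is a non-zero linear combination of the original $M_i$, hence full-rank by hypothesis.

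Now I would apply Lemma~\ref{lem:tensorelim} to the $k$-layer tensor $[M_{n-k+1}'|\cdots|M_n']$ to get
\begin{equation*}
\rank\bigl([M_{n-k+1}'|\cdots|M_n']\bigr)\ \ge\ \frac{q^k-1}{q^k-q^{k-1}}\,n\ =\ \frac{q}{q-1}n\ -\ \frac{n}{q^{k-1}(q-1)},
\end{equation*}
using the algebraic identity $\frac{q^k-1}{q^{k-1}(q-1)}=\frac{q}{q-1}-\frac{1}{q^{k-1}(q-1)}$. Combining with the layer reduction estimate gives
\begin{equation*}
\rank(T)\ \ge\ \frac{2q-1}{q-1}n\ -\ k\ -\ \frac{n}{q^{k-1}(q-1)}.
\end{equation*}
Finally, choosing $k=\lceil\log_q n\rceil$ makes $q^{k-1}\ge n/q$, so the last term is at most $\frac{q}{q-1}$, yielding the claimed bound $\frac{2q-1}{q-1}n-\lceil\log_q n\rceil-\frac{q}{q-1}$.

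The main subtlety—and the step I would be most careful about—is verifying that the full-rank property of non-zero linear combinations is preserved after the layer reduction step, since this is what allows Lemma~\ref{lem:tensorelim} to be reapplied to the reduced tensor. Everything else is routine algebraic manipulation and the choice of the optimal $k$.
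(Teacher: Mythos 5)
Your proposal is correct and follows essentially the same route as the paper: apply Corollary~\ref{cor:layerreduction} to eliminate $n-k$ of the linearly independent layers, verify that the modified remaining layers still have the property that every non-zero linear combination is full-rank (so Lemma~\ref{lem:tensorelim} applies), and then optimize with $k=\lceil\log_q n\rceil$. The arithmetic leading to $\frac{2q-1}{q-1}n-\lceil\log_q n\rceil-\frac{q}{q-1}$ matches the paper's as well, so there is nothing to add.
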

\begin{proof}
	Let $k\le n$ be a parameter, to be optimized over later.

	Notice that the hypothesis show that the matrices $M_i$ are linearly independent and
	so Corollary~\ref{cor:layerreduction} shows that 
	\begin{equation*}
		\rank([M_1|\cdots|M_n])\ge\rank([M_1+M'_1|\cdots|M_k+M'_k])+(n-k)
	\end{equation*}
	where the $M'_i$ are linear combinations of the $M_{k+1},\ldots,M_n$.  Thus, any
	non-zero linear combination of the $(M_i+M'_i)$ is necessarily a non-zero linear
	combination of the $M_i$.  In particular, this shows that any non-zero linear
	combination of the $(M_i+M'_i)$ has full-rank. Thus, by Lemma~\ref{lem:tensorelim}
	\begin{equation*}
		\rank([M_1+M'_1|\cdots|M_k+M'_k])\ge \frac{q^k-1}{q^k-q^{k-1}}n
	\end{equation*}
	and so
	\begin{equation*}
		\rank([M_1|\cdots|M_n])\ge n-k+\frac{q^k-1}{q^k-q^{k-1}}n =: f(k)
	\end{equation*}

	One can observe that $k=\log_q\left(\frac{n\ln q}{1-1/q}\right)$ maximizes $f$, but
	asymptotically it is sufficient to take $k=\lceil \log_q n\rceil$.  Then
	\begin{align*}
		f(k)
			&=n-k+\frac{1-q^{-k}}{1-1/q}n\\
			&\ge n-\lceil\log_q n\rceil+\frac{1-1/n}{1-1/q}n\\
			&\ge n-\lceil\log_q n\rceil+\frac{1}{1-1/q}n-\frac{1/n}{1-1/q}n\\
			&\ge n-\lceil\log_q n\rceil+\frac{q}{q-1}n-\frac{q}{q-1}\\
			&\ge \frac{2q-1}{q-1}n-\lceil\log_q n\rceil-\frac{q}{q-1}
	\end{align*}
	As $f(k)$ lower-bounds the rank by the above, this establishes the claim.
\end{proof}

The above lemma and its corollary establish a property implying tensor rank lower bounds.
We now turn to constructing tensors that have this property.  Clearly we seek explicit
tensors, and by this we mean that each entry of the tensor is efficiently computable.

We first observe that the property can be easily constructed given explicit field extensions
of the base field $\mathbb{F}$.

\begin{prop}
	Let $\mathbb{F}$ be a field and $f\in\mathbb{F}[x]$ be an irreducible
	polynomial of degree $n$.  Then there exists $n\times n$ $\mathbb{F}$-matrices
	$M_1,\ldots,M_n$, such that all non-zero $\mathbb{F}$-linear combinations of the $M_i$ have
	full-rank. Furthermore, the entries of each matrix are computable in
	algebraic circuits of size $O(\polylog(n)\poly(\|f\|_0))$, where $\|f\|_0$ is the
	number of non-zero coefficients of $f$.
	\label{prop:indlayers}
\end{prop}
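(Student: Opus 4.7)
The natural candidate is the regular representation of the degree-$n$ field extension $\mathbb{K} := \mathbb{F}[x]/\langle f \rangle$. Since $f$ is irreducible, $\mathbb{K}$ is a field, and it is an $n$-dimensional $\mathbb{F}$-vector space with basis $\{1, \alpha, \alpha^2, \ldots, \alpha^{n-1}\}$, where $\alpha$ denotes the residue class of $x$. I will define $M_i$ to be the matrix of the $\mathbb{F}$-linear endomorphism ``multiplication by $\alpha^{i-1}$'' with respect to this basis; concretely, $(M_i)_{k,j}$ equals the coefficient of $\alpha^{k-1}$ in the reduction $\alpha^{i+j-2} \bmod f$.

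\textbf{Full-rank property.} For any non-zero $(c_1, \ldots, c_n) \in \mathbb{F}^n$, set $p(x) := \sum_{i=1}^n c_i x^{i-1}$, which is a non-zero polynomial of degree strictly less than $n = \deg f$. By construction the linear combination $\sum_{i=1}^n c_i M_i$ is exactly the matrix of multiplication by $p(\alpha) \in \mathbb{K}$. Because $f$ is the minimal polynomial of $\alpha$ over $\mathbb{F}$ and $\deg p < \deg f$, we have $p(\alpha) \neq 0$. Since $\mathbb{K}$ is a field, every non-zero element is a unit, so multiplication by $p(\alpha)$ is an $\mathbb{F}$-linear automorphism of $\mathbb{K}$; hence its matrix $\sum_i c_i M_i$ has full rank $n$.

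\textbf{Efficient computation of entries.} Computing $(M_i)_{k,j}$ reduces to extracting the $(k{-}1)$-st coefficient of $x^{i+j-2} \bmod f$, a modular exponentiation problem with exponent at most $2n-3$. I would use repeated squaring in $\mathbb{F}[x]/\langle f \rangle$: in $O(\log n)$ stages one multiplies two polynomials of degree $<n$ and then reduces modulo $f$. Here the sparsity of $f$ is used — reducing a single monomial $x^{n+r}$ (for $0 \leq r < n$) via the identity $x^n \equiv -(1/f_n)\sum_{j : f_j \neq 0, j<n} f_j x^j \pmod f$ introduces only $\|f\|_0 - 1$ new terms per application, and a careful accounting of how many such substitutions occur across all $O(\log n)$ squarings yields the stated $\mathrm{polylog}(n)\cdot \mathrm{poly}(\|f\|_0)$ circuit size.

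\textbf{Main obstacle.} The full-rank claim is essentially immediate from basic Galois/field theory once the construction is written down — the conceptual content lives entirely in the fact that $\mathbb{F}[x]/\langle f\rangle$ is a field. The one piece that takes genuine work is the efficiency bookkeeping: tracking how sparsity of $f$ propagates through repeated squaring and the attendant reductions, so as to extract a single coefficient in the claimed $\mathrm{polylog}(n)\cdot\mathrm{poly}(\|f\|_0)$ size. That is the only step that requires care, and it is routine rather than conceptually difficult.
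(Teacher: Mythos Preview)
Your construction and full-rank argument are exactly the paper's: $M_i$ is multiplication-by-$\alpha^{i-1}$ in the regular representation of $\mathbb{K}=\mathbb{F}[x]/\langle f\rangle$, and any nonzero $\mathbb{F}$-combination represents multiplication by a nonzero element of the field $\mathbb{K}$, hence is invertible.

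The gap is in your efficiency step. You correctly observe that the exponent is at most $2n-3$, but then reach for repeated squaring in $\mathbb{F}[x]/\langle f\rangle$. That cannot give the stated bound: after a single square-and-reduce the intermediate residue can have $\Theta(n)$ nonzero coefficients regardless of how sparse $f$ is (for instance, for the trinomial $f=x^n-x^{n-1}-1$ one gets $x^{n+r}\equiv x^{n-1}+x^r+\cdots+x+1$), so the next squaring already costs $\Omega(n)$, and no ``careful accounting'' of monomial substitutions recovers $\mathrm{polylog}(n)$. Sparsity of $f$ controls the cost of one reduction of one monomial, not the density of the running product. The paper never squares: since $i+j-2<2n$, it reduces the single monomial $x^{i+j-2}$ modulo $f$ directly and reads off one coefficient, appealing (tersely) to a lookup into the coefficients of $f$. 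Whether that argument truly delivers $\mathrm{polylog}(n)\cdot\mathrm{poly}(\|f\|_0)$ for \emph{every} sparse irreducible $f$ is itself debatable, but for the structured trinomials $x^{2\cdot 3^l}+x^{3^l}+1$ actually used downstream it certainly does, whereas your squaring route does not.
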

\begin{proof}
	Let $f(x)=a_nx^n+\cdots+a_1x+a_0$.  Recall that $\mathbb{K}=\mathbb{F}[x]/(f)$ is a
	field, and because $\deg f=n$, $\mathbb{K}$ is a $n$-dimensional
	$\mathbb{F}$-vector space, where we choose $1,x,\ldots,x^{n-1}$ as the
	basis. This gives an $\mathbb{F}$-algebra
	isomorphism $\mu$ between $\mathbb{K}$ and a sub-ring $M$ of the $m\times m$
	$\mathbb{F}$-matrices, where $M$ is defined as the image of $\mu$.  The
	map $\mu$ is defined by associating
	$\alpha\in\mathbb{K}$ with the matrix inducing the linear map
	$\mu(\alpha):\mathbb{F}^m\to\mathbb{F}^m$, where $\mu(\alpha)$ is the multiplication
	map of $\alpha$.  That is, using that $\mathbb{K}=\mathbb{F}^m$ we can see
	that the map $\beta\mapsto \alpha\beta$ for $\beta\in\mathbb{K}$ is an
	$\mathbb{F}$-linear map, and thus defines $\mu(\beta)$ over $\mathbb{F}^m$.

	That the map is injective follows from the fact that $\mu(\alpha)$ must
	map $1\in\mathbb{K}$ to $\alpha\in\mathbb{K}$, so $\alpha$ is
	recoverable from $\mu(\alpha)$ (and surjectivity follows be definition
	of $M$).  To see the required homomorphism properties is also not
	difficult.  As $(\alpha+\gamma)\beta=\alpha\beta+\gamma\beta$ for any
	$\alpha,\beta,\gamma\in\mathbb{K}$, this shows that
	$\mu(\alpha+\gamma)=\mu(\alpha)+\mu(\gamma)$ as linear maps, and thus as
	matrices.  Similarly, as $(\alpha\gamma)\beta=\alpha(\gamma\beta)$ for
	any $\alpha,\beta,\gamma\in\mathbb{K}$ it must be that
	$\mu(\alpha\gamma)=\mu(\alpha)\mu(\gamma)$.  That this map interacts
	linearly in $\mathbb{F}$ implies that it is an $\mathbb{F}$-algebra
	homomorphism, as desired.

	In particular, this means that $\alpha\in\mathbb{K}$ is invertible iff
	the matrix $\mu(\alpha)\in M\subseteq\mathbb{F}^{n\times n}$ is
	invertible.  As $\mathbb{K}$ is a field, the only non-invertible matrix
	in $M$ is $\mu(0)$.  The $\mathbb{F}$-algebra homomorphism means that
	for $a_i\in\mathbb{F}$ and $\alpha_i\in\mathbb{K}$, the linear
	combination $\sum a_i\mu(\alpha_i)$ equals $\mu(\sum a_i \alpha_i)$ and
	so the matrix $\sum a_i\mu(\alpha_i)$ is invertible iff $\sum
	a_i \alpha_i \ne 0$.  Thus, as $1,x,\ldots,x^{n-1}$ are $\mathbb{F}$-linearly
	independent in $\mathbb{K}$, it follows that the matrix
	$\mu(1),\mu(x),\ldots,\mu(x^{n-1})$ have that all non-zero $\mathbb{F}$-linear
	combinations are invertible, as desired.

	We now study how to compute $\mu(x^i)$.  Observe that acting as a linear
	map on $\mathbb{F}^n$, $\mu(x^i)$ sends $x^{i+j} \pmod{f}$.  To read off
	the $x^k$ component can be done with a lookup table to the coefficients
	of $f$, and thus in $O(\poly(\|f\|_0))$ size circuits.
\end{proof}

To make the above construction explicit, we need to show that the irreducible polynomial $f$
can be found efficiently.  We now cite the following result of Shoup~\cite{shoup}.  It says that we can find
irreducible polynomials in finite fields in polynomial-time provided that the field size is
fixed.

\begin{thm}[\cite{shoup}, Theorem 4.1]
	\label{thm:findirredpoly}
	For any prime or prime power $q$, an irreducible polynomial of degree $n$ in $\mathbb{F}_q[x]$ can be
	found in time $O(\poly(nq))$.
\end{thm}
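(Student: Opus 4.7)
The plan has three main components: a fast deterministic irreducibility test, a reduction to equal-degree factorization of a canonical polynomial, and a deterministic splitting procedure whose cost carries the factor of $q$ in the running time.

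First, I would establish a deterministic irreducibility test running in time $\text{poly}(n\log q)$. A monic polynomial $f\in\mathbb{F}_q[x]$ of degree $n$ is irreducible if and only if $x^{q^n}\equiv x\pmod{f}$ and $\gcd(f,\, x^{q^{n/p}}-x)=1$ for every prime divisor $p$ of $n$. The first condition certifies that every root of $f$ lies in $\mathbb{F}_{q^n}$, and the second rules out roots in proper subfields. Each power $x^{q^m}\bmod f$ is computed by $O(m\log q)$ applications of the $q$-Frobenius modulo $f$, costing $\text{poly}(n\log q)$ field operations overall.

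Second, I would construct a canonical polynomial $P_n\in\mathbb{F}_q[x]$ whose irreducible factors are exactly the monic degree-$n$ irreducibles over $\mathbb{F}_q$. Since $x^{q^n}-x=\prod_{d\mid n}\prod_{\pi\text{ irred.},\,\deg\pi=d}\pi(x)$, one recovers $P_n$ by dividing $x^{q^n}-x$ by the analogous products over proper divisors of $n$, which are obtainable via iterated GCDs of the form $\gcd(x^{q^n}-x,\,x^{q^{n/p}}-x)$. A standard count shows $\deg P_n\ge q^n-O(q^{n/2})$, so $P_n$ is non-constant for every $n\ge 1$, and it is the product of degree-$n$ irreducibles only.

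Third, and most delicately, I would extract a single irreducible factor from $P_n$ by a deterministic equal-degree factorization. The randomized Cantor--Zassenhaus procedure splits $P_n$ by computing $\gcd\bigl(P_n,\,h^{(q^n-1)/2}-1\bigr)$ for a random $h$ in $\mathbb{F}_q[x]/(P_n)$, since under the CRT isomorphism $\mathbb{F}_q[x]/(P_n)\cong \prod \mathbb{F}_{q^n}$ the image of $h^{(q^n-1)/2}$ lies coordinatewise in $\{-1,0,+1\}$ and is nontrivially non-constant with overwhelming probability. To derandomize within the allotted budget, I would enumerate $h$ over a structured family parameterized by $\mathbb{F}_q$ (for example $h(x)=x+\alpha$ with $\alpha\in\mathbb{F}_q$, possibly combined with a preliminary change of variable), and argue via a counting/resultant argument that some member of the family must produce a nontrivial split. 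Iterating the splitting at most $O(n\log q)$ times isolates a candidate degree-$n$ factor, which is certified by the test of Step 1. In characteristic $2$ the squaring map $h\mapsto h^{(q^n-1)/2}$ degenerates, so one substitutes the additive trace polynomial $\sum_{i=0}^{n-1}h^{2^i}$.

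The main obstacle is the derandomization in Step 3: naively enumerating $h$ over all of $\mathbb{F}_q[x]/(P_n)$ requires $q^{\deg P_n}$ trials, which is far too expensive, so the crux of the argument is to exhibit a structured family of size $\text{poly}(nq)$ that provably contains a splitting element. Given such a family, all three steps run in time $\text{poly}(nq)$ and combine to produce the desired irreducible polynomial.
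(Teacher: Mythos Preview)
The paper does not prove this theorem at all; it is quoted as a black-box result from Shoup and used without proof. So there is no ``paper's own proof'' to compare against, and any correct argument would be acceptable here.

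That said, your proposal has a genuine gap. In Step~2 you form the polynomial $P_n$ whose irreducible factors are \emph{all} monic degree-$n$ irreducibles over $\mathbb{F}_q$, and you yourself note that $\deg P_n \ge q^n - O(q^{n/2})$. This degree is exponential in $n$: merely writing down $P_n$ (or any polynomial obtained from $x^{q^n}-x$ by divisions and gcds of the sort you describe, all of which have degree $\Theta(q^{n/p})$ or larger) already costs time $q^{\Omega(n)}$, far exceeding the $\poly(nq)$ budget. Consequently Step~3 is also infeasible as stated, since every splitting round manipulates a polynomial of exponential degree; and even with perfect halving you would need $\Theta(n\log q)$ rounds on an object of size $q^n$, not on one of size $\poly(nq)$. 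The modular-exponentiation trick from Step~1 lets you compute $x^{q^n}\bmod f$ cheaply when $\deg f$ is small, but it does not let you materialize $x^{q^n}-x$ or $P_n$ themselves.

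Shoup's actual argument avoids ever touching a polynomial of exponential degree. Roughly, he builds $\mathbb{F}_{q^n}$ as a tower of extensions of prime (or prime-power) degree, and for each prime degree $r$ he uses explicit algebraic constructions (radicals $x^r-\alpha$ when $r\mid q-1$, cyclotomic techniques otherwise, Artin--Schreier when $r=p$). The factor of $q$ in the running time enters through deterministic searches inside $\mathbb{F}_q$, e.g.\ scanning for an $r$-th non-residue, rather than through equal-degree factorization of $P_n$. Your Step~1 is fine and is indeed a standard ingredient; Steps~2--3 would need to be replaced by something along these lines.
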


Combining Corollary~\ref{cor:alglb}, Proposition~\ref{prop:indlayers}, and
Theorem~\ref{thm:findirredpoly}, we arrive at the following result.

\begin{cor}
	For any \textit{fixed} prime or prime power $q$, over the field $\mathbb{F}_q$ there is a family of tensors $T_n$ of size $[n]^3$ such that
	\begin{enumerate}
		\item $\rank(T_n)\ge \frac{2q-1}{q-1}n-\Theta(\log_q n)$
		\item On inputs $n$ and $(i,j,k)\in[n]^3$, $T_n(i,j,k)$ is computable in $O(\poly(nq))$
	\end{enumerate}
\end{cor}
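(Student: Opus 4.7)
The plan is to assemble the corollary as an immediate consequence of three results already established in the paper: Theorem~\ref{thm:findirredpoly} for finding an irreducible polynomial, Proposition~\ref{prop:indlayers} for turning an irreducible polynomial into matrices with the desired linear-combination property, and Corollary~\ref{cor:alglb} for converting that property into a tensor rank lower bound. Since $q$ is fixed, I do not have to track its dependence carefully; it can be absorbed into constants.

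First I would invoke Theorem~\ref{thm:findirredpoly} to produce an irreducible polynomial $f\in\mathbb{F}_q[x]$ of degree $n$ in time $O(\poly(nq))$; since $q$ is fixed, this is $O(\poly(n))$. Next, I would feed $f$ into Proposition~\ref{prop:indlayers} to obtain $n\times n$ matrices $M_1,\ldots,M_n$ over $\mathbb{F}_q$ such that every non-zero $\mathbb{F}_q$-linear combination $\sum c_i M_i$ is invertible. I would then define the order-3 tensor $T_n := [M_1|\cdots|M_n]$, which is of size $[n]\times[n]\times[n]$ by construction.

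For the rank lower bound, I would apply Corollary~\ref{cor:alglb} directly: its hypothesis (non-zero linear combinations are full rank) is exactly what Proposition~\ref{prop:indlayers} provides, and the conclusion is the stated bound $\rank(T_n)\ge \frac{2q-1}{q-1}n - \lceil \log_q n\rceil - \frac{q}{q-1} = \frac{2q-1}{q-1}n - \Theta(\log_q n)$. For the explicitness clause, on input $(n,i,j,k)$ one first computes $f$ in time $O(\poly(nq))$ via Theorem~\ref{thm:findirredpoly}, then reads off the $(i,j)$-entry of $M_k$; Proposition~\ref{prop:indlayers} ensures each matrix entry can be computed in size $O(\polylog(n)\poly(\|f\|_0))$, and since $\|f\|_0 \le n+1$, this is also $O(\poly(n))\subseteq O(\poly(nq))$. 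Combining the two steps gives total time $O(\poly(nq))$, as required.

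There is essentially no hard step here; the corollary is a packaging of the prior results, and the only thing to verify is that the cost of finding $f$ dominates the per-entry evaluation cost so that the overall bound remains $O(\poly(nq))$. If one wants to avoid recomputing $f$ on each query, the standard convention of uniform explicitness allows $f$ to be computed once as part of a preprocessing step, which matches the phrasing in the paper's definition of ``explicit'' and ``uniformly explicit'' tensors.
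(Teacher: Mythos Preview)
Your proposal is correct and follows exactly the approach the paper indicates: the paper states just before the corollary that it is obtained by ``combining Corollary~\ref{cor:alglb}, Proposition~\ref{prop:indlayers}, and Theorem~\ref{thm:findirredpoly},'' and you have correctly spelled out how these three pieces fit together, including the observation that $\|f\|_0\le n+1$ so that the per-entry cost from Proposition~\ref{prop:indlayers} is absorbed into $O(\poly(nq))$.
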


Note that this is strictly worse than Corollary~\ref{cor:maincor} in two respects.  First,
while this result asymptotically matches the lower bound of Corollary~\ref{cor:maincor} over
$\mathbb{F}_2$, the above result is only valid over finite fields, and as the field size grows, the lower bound
approaches $2n-o(n)$.  This seems inherent in the approach. 

Further, the given construction is less explicit as computing even a single entry of the
tensor might require examining all $n$ of the coefficients in the irreducible polynomial
$f$, preventing a $O(\polylog(n))$ runtime.  One method of circumventing this problem is to
use \textit{sparse irreducible polynomials}.  In particular, we use the following well-known
construction.

\begin{lem}[\cite{van-lint}, Theorem 1.1.28]
	\label{lem:f2irredpoly}
	Over $\mathbb{F}_2[x]$, the polynomial
	\begin{equation*}
		f(x)=x^{2\cdot 3^l}+x^{3^l}+1
	\end{equation*}
	is irreducible for any $l\ge 0$.
	\label{lem:f2-sparse-irred}
\end{lem}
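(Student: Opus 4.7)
The plan is to identify $f(x)$ with the $3^{l+1}$-st cyclotomic polynomial $\Phi_{3^{l+1}}(x)$ over $\mathbb{F}_2$, and then invoke the standard irreducibility criterion for cyclotomic polynomials over finite fields. For the identification, I would substitute $y = x^{3^l}$ to write $f(x) = y^2 + y + 1 = \Phi_3(x^{3^l})$. Either by the classical identity $\Phi_n(x^p) = \Phi_{np}(x)$ for a prime $p$ dividing $n$ (applied inductively starting from $n = 3$), or by a direct root-counting argument in $\overline{\mathbb{F}_2}$, this equals $\Phi_{3^{l+1}}(x)$. The root-counting version: any root $\alpha$ of $f$ satisfies that $\alpha^{3^l}$ is a primitive cube root of unity, so $\alpha^{3^{l+1}} = 1$ but $\alpha^{3^l} \ne 1$, making $\alpha$ a primitive $3^{l+1}$-st root of unity; since $\deg f = 2 \cdot 3^l = \varphi(3^{l+1})$ and $\gcd(3, 2) = 1$ ensures these roots are simple, the roots of $f$ are exactly the primitive $3^{l+1}$-st roots of unity.

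Next, I would invoke the classical fact that for $\gcd(n, q) = 1$, the polynomial $\Phi_n(x)$ factors over $\mathbb{F}_q$ into irreducible pieces all of degree $\mathrm{ord}_n(q)$, the multiplicative order of $q$ modulo $n$. This comes from the observation that a primitive $n$-th root of unity $\zeta \in \overline{\mathbb{F}_q}$ lies in $\mathbb{F}_{q^d}$ precisely when $q^d \equiv 1 \pmod n$, so the Frobenius orbit of $\zeta$ has size $\mathrm{ord}_n(q)$. In particular, $\Phi_n$ is irreducible over $\mathbb{F}_q$ if and only if $q$ is a primitive root modulo $n$. Applied to $q = 2$ and $n = 3^{l+1}$, the problem reduces to showing that $2$ is a primitive root modulo $3^{l+1}$ for every $l \ge 0$.

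The crux is this last step, which follows from the standard primitive-root lifting theorem for odd primes: if $g$ is a primitive root modulo $p$ and $g^{p-1} \not\equiv 1 \pmod{p^2}$, then $g$ is a primitive root modulo $p^k$ for all $k \ge 1$. Here $g = 2$ and $p = 3$: the element $2$ has order $2 = \varphi(3)$ modulo $3$, and the direct computation $2, 4, 8, 7, 5, 1 \pmod 9$ shows that $2$ has order $6 = \varphi(9)$ modulo $9$, verifying the hypothesis. None of the steps is genuinely difficult; the only mild obstacle is choosing which classical number-theoretic machinery to cite cleanly, since the lemma is essentially the combination of the explicit form $\Phi_{3^{l+1}}(x) = x^{2 \cdot 3^l} + x^{3^l} + 1$ with the well-known fact that $2$ is a primitive root modulo every power of $3$.
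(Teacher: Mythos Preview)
Your argument is correct and is the standard proof of this classical fact: identify $f$ with the cyclotomic polynomial $\Phi_{3^{l+1}}$, reduce the irreducibility question over $\mathbb{F}_2$ to showing that $2$ is a primitive root modulo $3^{l+1}$, and verify the latter via the usual lifting criterion for odd prime powers. All steps are valid as written.

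As for comparison with the paper: the paper does not prove this lemma at all. It is stated with a citation to an external reference (van Lint, Theorem~1.1.28) and used as a black box. So there is no ``paper's own proof'' to compare against; your proposal supplies exactly the argument one would expect to find in the cited source.
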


Observe that this allows for much faster arithmetic in the extension field, and this leads
to the following result when applying the above results.

\begin{cor}
	Over the field $\mathbb{F}_2$, there is a family of tensors $T_n$ of size
	$[n]^3$ defined for $n=2\cdot 3^l$, such that
	\begin{enumerate}
		\item $\rank(T_n)\ge 3n-\Theta(\lg n)$
		\item On inputs $n$ and $(i,j,k)\in[n]^3$, $T_n(i,j,k)$ is computable in $O(\polylog(n))$
	\end{enumerate}
\end{cor}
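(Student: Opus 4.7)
The plan is to specialize Proposition~\ref{prop:indlayers} and Corollary~\ref{cor:alglb} to the field $\mathbb{F}_2$ with the sparse irreducible polynomial supplied by Lemma~\ref{lem:f2irredpoly}, so that both the rank lower bound and the per-entry computation time become as strong as possible.

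First, fix $l \ge 0$ and set $n = 2 \cdot 3^l$. Take $f(x) = x^{2 \cdot 3^l} + x^{3^l} + 1 \in \mathbb{F}_2[x]$, which is irreducible of degree $n$ by Lemma~\ref{lem:f2-sparse-irred}. Feed $f$ into Proposition~\ref{prop:indlayers} to obtain $n \times n$ matrices $M_1, \ldots, M_n$ over $\mathbb{F}_2$ (namely the multiplication-by-$x^{i-1}$ maps in $\mathbb{F}_2[x]/(f)$ expressed in the basis $1, x, \ldots, x^{n-1}$) such that every non-zero $\mathbb{F}_2$-linear combination is invertible. Then define $T_n := [M_1 \svert \cdots \svert M_n]$, which is the desired tensor of size $[n]^3$.

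Second, for the rank lower bound, I would simply invoke Corollary~\ref{cor:alglb} with $q = 2$: the non-zero-combinations-are-full-rank hypothesis for the $M_i$ is exactly what Proposition~\ref{prop:indlayers} gives, and the conclusion reads
\begin{equation*}
  \rank(T_n) \ge \frac{2 \cdot 2 - 1}{2 - 1}\, n - \lceil \log_2 n \rceil - \frac{2}{2 - 1} = 3n - \Theta(\lg n),
\end{equation*}
which is the claim.

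Third, for explicitness, Proposition~\ref{prop:indlayers} bounds the circuit size computing any entry of an $M_i$ by $O(\polylog(n) \cdot \poly(\|f\|_0))$. Because our chosen $f$ has exactly three non-zero coefficients, $\|f\|_0 = 3$ is a constant independent of $l$, so the bound collapses to $O(\polylog(n))$. Given $(i,j,k) \in [n]^3$, one computes $T_n(i,j,k) = (M_k)_{i,j}$ in this time, yielding the second conclusion. The only nuance here is verifying that arithmetic modulo the sparse $f$ really is $O(\polylog(n))$ once one unwinds the matrix entry as a coefficient of $x^{(i-1)+(k-1)} \bmod f$; this reduces by standard repeated squaring to $O(\lg n)$ multiplications, each of which shifts and adds only three powers of $x$. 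I expect no real obstacle here beyond routine bookkeeping; the main substantive work was already done in the cited results, and this corollary is essentially their composition at the specific point $q = 2$ with a sparse defining polynomial.
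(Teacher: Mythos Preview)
Your proposal is correct and is exactly the argument the paper intends: the corollary is stated without proof as the immediate combination of Proposition~\ref{prop:indlayers}, Corollary~\ref{cor:alglb} (at $q=2$), and the sparse irreducible trinomial of Lemma~\ref{lem:f2irredpoly}. The only quibble is that repeated squaring is unnecessary for the explicitness claim---since the exponent $(j-1)+(k-1)<2n$, a constant number of reductions by the relation $x^n\equiv x^{n/2}+1$ suffices to read off any coefficient in $O(\polylog n)$---but this does not affect correctness.
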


Thus, this algebraic construction is also explicit, at least for some values of
$n$.  Also, this corollary is not limited to $\mathbb{F}_2$.  Other
constructions~\cite{sparseirredpolys} are known over some other fields.
However, unlike the results of Section~\ref{sect:combtensors}, it is not
clear if better lower bounds exist for the tensors in this section.  Indeed, we
do not at present know non-trivial upper bounds for the tensors given here.

\section{Higher-Order Tensors}\label{sect:hightensors}

In this section we investigate order-$d$ tensors, particularly when $d$ is odd.  As
Raz~\cite{raz} shows, we can always ``reshape'' a lower-order tensor into a higher-order
tensor without decreasing rank.  Raz mentions this for reshaping an order-$d$ tensor into an
order-$2$ tensor (a matrix) and thus shows that there are explicit order-$d$ tensors with
rank $n^{\lfloor d/2\rfloor}$.  We use our results for order-$3$ tensors to derive a better
bound in the case when $d$ is odd.

We first state our reshaping lemma, keeping in mind that we again now work over an arbitrary
field.

\begin{lem}
	\label{lem:reshape}
	Let $T$ be an order-$3$ tensor of size $[n^d]\times [n^d]\times[n]$.  Then define
	the order-$(2d+1)$ tensor $T'$ of size $[n]^{2d+1}$ by
	\begin{equation*}
		T'(i_1,\ldots,i_{2d+1})=T\left(1+\sum_{j=0}^{d-1} (i_{j+1}-1) n^j,1+\sum_{j=0}^{d-1}
		(i_{j+(d+1)}-1)n^j,i_{2d+1}\right)
	\end{equation*}
	Then $T'$ has rank at least $\rank(T)$.

	Further, if $T(\cdot,\cdot,\cdot)$ is computable in time $f(n)$, then
	$T'(\cdot,\ldots,\cdot)$ is computable in time $O(\poly(d)\polylog(n)+f(n))$.
\end{lem}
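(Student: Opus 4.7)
The plan is to recognize that the index function $(i_1,\ldots,i_d) \mapsto 1+\sum_{j=0}^{d-1}(i_{j+1}-1)n^j$ is precisely the standard bijection $[n]^d \leftrightarrow [n^d]$ coming from identifying the tensor product space $(\mathbb{F}^n)^{\otimes d}$ with $\mathbb{F}^{n^d}$ via the standard basis, as provided by Lemma~\ref{lem:standardbasis}. Under this identification, a simple tensor $\vec{v}_1 \otimes \cdots \otimes \vec{v}_d$ in $(\mathbb{F}^n)^{\otimes d}$ flattens to a single vector in $\mathbb{F}^{n^d}$. The lemma is really asserting that this flattening is a rank-non-increasing operation applied to the first $d$ axes of $T'$ and (independently) to the next $d$ axes.

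Concretely, to prove $\rank(T') \ge \rank(T)$, I would take a minimal simple tensor decomposition $T' = \sum_{k=1}^r \bigotimes_{j=1}^{2d+1} \vec{v}_{j,k}$ with $r = \rank(T')$. For each $k$, let $\vec{u}_k \in \mathbb{F}^{n^d}$ be the vector whose coordinate at index $1+\sum_{j=0}^{d-1}(i_{j+1}-1)n^j$ equals $\prod_{j=1}^d \vec{v}_{j,k}(i_j)$; that is, $\vec{u}_k$ is the flattening of $\bigotimes_{j=1}^d \vec{v}_{j,k}$. Define $\vec{w}_k$ analogously from $\vec{v}_{d+1,k},\ldots,\vec{v}_{2d,k}$. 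Substituting into the defining relation between $T$ and $T'$ and unfolding the product gives
\[
T(I,J,i_{2d+1}) \;=\; \sum_{k=1}^r \vec{u}_k(I)\,\vec{w}_k(J)\,\vec{v}_{2d+1,k}(i_{2d+1}),
\]
which is a simple tensor decomposition of $T$ of length $r$. Hence $\rank(T) \le r = \rank(T')$.

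For the explicitness part, computing $T'(i_1,\ldots,i_{2d+1})$ reduces to computing the two base-$n$ encodings (each requiring $O(d)$ additions and multiplications on integers of bit-length $O(d\log n)$, hence $O(\poly(d)\polylog(n))$ time) followed by one evaluation of $T$. The stated bound $O(\poly(d)\polylog(n)+f(n))$ follows directly.

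There is no real obstacle here: the argument is pure bookkeeping once one notes that flattening respects tensor products. The only point that warrants care is the correctness of the index bijection, which is immediate from writing integers in $[n^d]$ in base $n$, so I would verify this identification at the start and then simply run the substitution above.
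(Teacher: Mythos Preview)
Your proposal is correct and essentially identical to the paper's own proof: both recognize the index map as the base-$n$ bijection $[n]^d\leftrightarrow[n^d]$, take a minimal decomposition of $T'$, flatten each group of $d$ vectors into a single vector in $\mathbb{F}^{n^d}$, and read off a decomposition of $T$ of the same length. The explicitness argument is likewise the same.
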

\begin{proof}
	First observe that the map $(i_1,\ldots,i_d)\mapsto 1+\sum_{j=0}^{d-1} (i_{j+1}-1)
	n^j$ is a bijection from $[n]^d$ to $[n^d]$ as this is simply the base-$n$
	expansion.  That this is map is computable in time $O(\poly(d)\polylog(n))$ establishes the
	claim about efficiency.

	Now consider a rank $r'$ decomposition of $T'$
	\begin{equation*}
		T'=\sum_{l=1}^{r'} \vec{v}_{l,1}\otimes\cdots\otimes \vec{v}_{l,2d+1}
	\end{equation*}
	We now define $\vec{u}_{l,1}\in\mathbb{F}^{n^d}$.  Via the bijection from above, we
	can write
	\begin{equation*}
		\vec{u}_{l,1}\left(1+\sum_{j=0}^{d-1}(i_{j+1}-1)n^j\right)=\vec{v}_{l,1}(i_1)\cdots\vec{v}_{l,d}(i_d)
	\end{equation*}
	and similarly we define $\vec{u}_{l,2}\in\mathbb{F}^{n^d}$ by
	\begin{equation*}
		\vec{u}_{l,2}\left(1+\sum_{j=0}^{d-1}(i_{j+(d+1)}-1)n^j\right)=\vec{v}_{l,d+1}(i_{d+1})\cdots\vec{v}_{l,2d}(i_{2d})
	\end{equation*}, and we take $\vec{u}_{l,3}=\vec{v}_{l,2d+1}\in\mathbb{F}^n$.  Thus
	we see that 
	\begin{equation*}
		T=\sum_{l=1}^{r'}\vec{u}_{l,1}\otimes\vec{u}_{l,2}\otimes\vec{u}_{l,3}
	\end{equation*}
	by examining the equation pointwise, and thus $r'\ge\rank(T)$.  The conclusion thus follows when taking $r'=\rank(T')$.
\end{proof}

The above lemma shows that rank lower bounds for low-order tensors extend (weakly) to rank lower
bounds of higher-order tensors.  We now apply this lemma to the tensor rank lower bounds of
Section~\ref{sect:combtensors}.  It is possible to do similarly with the results of 
Section~\ref{sect:algtensors}, but a weaker conclusion would result as those lower bounds
are weaker.

\begin{cor}
	\label{cor:maincor-highdim}
	For every $d\ge 1$, there is a family of $\{0,1\}$-tensors $T_n$ of size $[n]^{2d+1}$ such that
	$\rank(T_n)=2n^d+n-\Theta(d\lg n)$.  Further, given $n$ and $i_1,\ldots,i_{2d+1}$,
	$T_n(i_1,\ldots,i_{2d+1})$ is computable in polynomial time (that is, in time
	$O(\poly(d)\polylog(n))$).
\end{cor}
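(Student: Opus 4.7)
The plan is to build an order-3 tensor of size $[n^d] \times [n^d] \times [n]$ with rank exactly $2n^d + n - \Theta(d\lg n)$, and then invoke Lemma~\ref{lem:reshape} to lift it to the desired order-$(2d+1)$ tensor of size $[n]^{2d+1}$. Set $N := n^d$ throughout.

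First I would construct the order-3 tensor $T$ of size $[N] \times [N] \times [n]$ by mimicking Corollary~\ref{cor:maincor} verbatim, but with the ``large'' parameter replaced by $N$ while the third-axis size stays at $n$. That is, take the tensor from Theorem~\ref{thm:mainthm} at parameter $N-1$ (of size $[N-1]^2 \times [\lfloor\lg(N-1)\rfloor + 1]$ and rank $2(N-1) - 2H(N-1) + 1$), pad with zeros to size $[N]^2 \times [\lfloor\lg(N-1)\rfloor + 1]$, and append $n - \lfloor\lg(N-1)\rfloor - 1$ additional single-entry ``indicator'' layers along the third axis, exactly as the $S'_{n,i}$ are defined in Corollary~\ref{cor:maincor}. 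This requires $n \ge \lfloor\lg(n^d-1)\rfloor + 1$, which holds for $n$ large enough relative to $d$. By re-running the proof of Corollary~\ref{cor:maincor} with the substitution $n \to N$, iterated layer reduction (Corollary~\ref{cor:layerreduction}) yields $\rank(T) \ge 2N + n - \Theta(d\lg n)$, and the matching upper bound is immediate because $T$ is a $\{0,1\}$-tensor with exactly $2N + n - \Theta(d\lg n)$ non-zero entries, each contributing one simple indicator tensor to a decomposition.

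Next I would apply Lemma~\ref{lem:reshape} to $T$ to obtain the order-$(2d+1)$ tensor $T_n : [n]^{2d+1} \to \mathbb{F}$. The lemma immediately gives $\rank(T_n) \ge \rank(T) = 2n^d + n - \Theta(d\lg n)$. For the matching upper bound, the key observation is that the optimal decomposition of $T$ above is an \emph{indicator} decomposition, writing $T = \sum_l \vec e_{i_l} \otimes \vec e_{j_l} \otimes \vec e_{k_l}$ over its non-zero positions. Under the base-$n$ expansion bijection $[n^d] \leftrightarrow [n]^d$ used by Lemma~\ref{lem:reshape}, each $\vec e_i \in \mathbb{F}^{n^d}$ pulls back to the simple tensor $\bigotimes_{m=1}^d \vec e_{i_m} \in \bigotimes^d \mathbb{F}^n$, so every term in the decomposition of $T$ remains simple after reshape. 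Thus $\rank(T_n) \le \rank(T)$, and both bounds match.

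Explicitness of $T_n$ follows from explicitness of $T$ (each entry is computable in $O(\polylog N) = O(\poly(d)\polylog n)$ time by Theorem~\ref{thm:mainthm}) together with the base-$n$ index transformation of Lemma~\ref{lem:reshape}, which contributes only $O(\poly(d)\polylog n)$ more. The main subtlety --- and the only place where real work is needed beyond citing Corollary~\ref{cor:maincor} and Lemma~\ref{lem:reshape} --- is the matching upper bound after reshape, since Lemma~\ref{lem:reshape} on its own gives only a one-sided inequality. What makes this transfer work is precisely the 0/1 structure of the construction: because the rank of the order-3 tensor equals its number of non-zero entries, its optimal decomposition is by standard-basis indicators, and such indicators remain simple in the finer tensor product space.
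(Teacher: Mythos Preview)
Your proposal is correct and follows essentially the same route as the paper: build an $[n^d]\times[n^d]\times[n]$ tensor by running the Corollary~\ref{cor:maincor} construction with the first two axes at size $N=n^d$ and the third axis truncated to $n$ (the paper phrases this as ``the proof of Corollary~\ref{cor:maincor} extends to give tensors of size $[n]\times[n]\times[f(n)]$ with rank $2n+f(n)-\Theta(\lg n)$''), then apply Lemma~\ref{lem:reshape}. You are in fact more careful than the paper on one point: the paper's proof simply cites Lemma~\ref{lem:reshape} for ``the claim on the rank,'' even though that lemma only gives the lower bound $\rank(T_n)\ge\rank(T)$; your observation that the optimal decomposition of $T$ is by standard-basis indicators, and that each $\vec e_i\in\mathbb{F}^{n^d}$ factors as $\bigotimes_m \vec e_{i_m}$ under the base-$n$ bijection, is exactly what is needed to recover the matching upper bound after reshape, and the paper leaves this implicit.
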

\begin{proof}
	We first observe that the proof of Corollary~\ref{cor:maincor} extends to give a
	family of tensors $T_n$ with size
	$[n]\times [n]\times[f(n)]$, where $\rank(T_n)=2n+f(n)-\Theta(\lg n)$, where
	$\omega(\lg n)\le f(n)\le n$.  Further,
	these tensors have their entries computed in $O(\polylog(n))$ time.  
	
	Thus, this
	leads to tensors of size $[n^d]\times[n^d]\times [n]$ of rank $2n^d+n-\Theta(d\lg
	n)$.  By Lemma~\ref{lem:reshape} we can then see that these tensors can be reshaped
	into the desired tensors, establishing the claim on the rank as well as the
	explicitness.
\end{proof}

\section{Proofs for Section~\ref{sect:permtensors}}\label{sect:permtensorsproofs}

We give proofs of various claims from Section~\ref{sect:permtensors}, and
examine the tightness of some of the results.

\begin{proof}[Proof of Proposition~\ref{prop:interpolate}]

	To apply Ben-Or's interpolation  idea to tensors, we first note the connection between tensors and polynomials.
	Consider the space of polynomials
	$\mathcal{P}:=\mathbb{F}[\{X^{(1)}_i\}_{i=0}^{n-1},\ldots,\{X^{(d)}_i\}_{i=0}^{n-1}]$, that is, polynomials on
	the variables $X^{(j)}_i$ that are set-multilinear with respect to the sets
	$\{X^{(j)}_i\}_{i=0}^{n-1}$.  One can call such a polynomial \textit{simple} if it can be
	written as 
	\begin{equation*}
		\prod_{j=1}^d \left(a_{j,0}X^{(j)}_0+\cdots+a_{j,n-1}X^{(j)}_{n-1}\right)
	\end{equation*}
	One can then define the $\rank(p)$, for $p\in\mathcal{P}$, as the least
	number of simple polynomials needed to sum to $p$.  One can observe that
	$\mathcal{P}$ is a $[n]^d$ tensor product space, and the notions of rank
	coincide.  In this language, we seek to upper-bound the rank of the
	polynomial
	\begin{equation*}
		T(\{X^{(j)}_i\}_{i,j})=\sum_{m=0}^{d(n-1)}c_m\sum_{i_1+\cdots+i_d=m}X_{i_1}^{(1)}\cdots X_{i_d}^{(d)}
	\end{equation*}

	To implement Ben-Or's method, define the auxiliary polynomial $P$ by
	\begin{equation*}
		P(\alpha,\{X^{(i)}_j\}_{i,j}):=\prod_{i=1}^d \left(X^{(i)}_0+\alpha X^{(i)}_1+\alpha^2 X^{(i)}_2+\cdots+\alpha^{n-1} X^{(i)}_{n-1}\right)
	\end{equation*}
	For fixed $\alpha$, this polynomial is simple. When $\alpha$ is considered a variable, this
	polynomial has degree $d(n-1)$ in $\alpha$.  Further, the coefficient of $\alpha^m$ is
	\begin{equation*}
		C_{\alpha^m}(P(\alpha,\{X^{(j)}_i\}_{i,j}))=\sum_{i_1+\cdots+i_d=m} X^{(1)}_{i_1}\cdots X^{(d)}_{i_d}
	\end{equation*}
	which corresponds exactly to tensors of the desired form.  We can now
	interpret the auxiliary polynomials as polynomials in
	$\mathbb{F}(\{X^{(j)}_i\}_{i,j})[\alpha]$, the polynomial ring
	in the variable $\alpha$ over the field of rational functions in the
	$X^{(j)}_i$.  As $|\mathbb{F}|> d(n-1)$, we can consider the
	evaluations
	$\{P(\alpha_l,\{X{(j)}_i\}_{i,j})\}_{l=0}^{d(n-1)}\in\mathbb{F}[\{X^{(j)}_i\}_{i,j}]$ for
	distinct $\alpha_l\in\mathbb{F}$. Polynomial interpolation means that the coefficients
	$C_{\alpha^m}(P)$ are recoverable from linear combinations of the evaluations of $P$.
	As the $\alpha_l\in\mathbb{F}$, the (linear) evaluation map from the coefficients of $P$ to the
	evaluations is defined by a $\mathbb{F}$-matrix.  Therefore, the inverse of this map is also
	defined by an $\mathbb{F}$-matrix. Specifically, there are coefficients
	$a_{m,l}\in\mathbb{F}$ such that
	\begin{equation*}
		C_{\alpha^m}(P(\alpha,\{X^{(j)}_i\}_{i,j}))=\sum_{l=0}^{d(n-1)} a_{m,l} P(\alpha_l,\{X^{(j)}_i\}_{i,j})
	\end{equation*}
	Therefore,
	\begin{align*}
		T(\{X^{(j)}_i\}_{i,j})
		&=\sum_{m=0}^{d(n-1)}c_m \sum_{l=0}^{d(n-1)} a_{m,l} P(\alpha_l,\{X^{(j)}_i\}_{i,j})\\
		&=\sum_{l=0}^{d(n-1)} \left(\sum_{m=0}^{d(n-1)} c_m a_{m,l}\right) P(\alpha_l,\{X^{(j)}_i\}_{i,j})
	\end{align*}
	Thus, $T$ is in the span of $d(n-1)+1$ simple polynomials.  By moving the coefficients on
	the simple polynomials inside the product, this shows that $T$ is expressible as the sum of
	$d(n-1)+1$ simple polynomials.  Using the above connection with tensors, this shows that the
	rank is at most $d(n-1)+1$.
\end{proof}

We now prove that Proposition~\ref{prop:interpolate} is essentially tight.
The proof uses Corollary~\ref{cor:layerreduction}.

\begin{prop}
	\label{prop:interpolationistight}
	Let $\mathbb{F}$ be a field.  Let $T$ be a tensor $T:\llbracket n \rrbracket^d\to\mathbb{F}$
	such that
	\begin{equation*}
		T(i_1,\ldots,i_d)=	\begin{cases}
						1	&	\text{if }i_1+\cdots+i_d=n\\
						0	&	\text{if }i_1+\cdots+i_d>n\\
						\text{unconstrained}	&\text{else}
					\end{cases}
	\end{equation*}
	then $\rank_\mathbb{F}(T)\ge (d-1)(n-1)+1$.
\end{prop}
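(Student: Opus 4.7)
I would prove the bound by iteratively applying Corollary~\ref{cor:layerreduction}. Each application will eliminate $n-1$ linearly independent layers along some axis, contributing $n-1$ to the rank bound while reducing the tensor's order by one; the final step (or an extra argument about a leftover vector) will supply the missing $+1$.

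The key preparatory claim is that, along any axis, the layers $T_1, T_2, \ldots, T_{n-1}$ (omitting $T_0$) are linearly independent. To see this, I would consider a hypothetical vanishing combination $\sum_{i=1}^{n-1} c_i T_i = 0$ and evaluate it at the corner points $(n-j, 0, \ldots, 0)$ of the complementary axes. At such a point $T_j$ takes the forced value $1$ (since the total coordinate sum equals $n$), $T_i$ with $i > j$ takes the forced value $0$ (total sum exceeds $n$), and $T_i$ with $i < j$ is unconstrained; induction on $j$ then forces $c_1 = c_2 = \cdots = c_{n-1} = 0$.

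With this in hand, Corollary~\ref{cor:layerreduction} applied with $m = n-1$ along axis $d$ yields $\rank(T) \geq (n-1) + \rank(M)$, where $M = T_0^{(d)} + \sum_{i=1}^{n-1} c_i T_i^{(d)}$ is a $(d-1)$-tensor. A case analysis on the sum $\sigma$ of the remaining coordinates shows $M$ still satisfies the proposition's hypotheses: the value is $1$ when $\sigma = n$ (since $T_0^{(d)} = 1$ and the others vanish there), $0$ when $\sigma > n$ (all contributing layers vanish), and a sum of unconstrained contributions when $\sigma < n$. Iterating this along axes $d-1, d-2, \ldots, 2$ accumulates a bound $\rank(T) \geq (d-1)(n-1) + \rank(V)$, where $V$ is a one-dimensional tensor (a vector of length $n$) obtained as a specific linear combination of entries of $T$.

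The main obstacle will be extracting the final $+1$, i.e., showing $V \neq 0$. Because no index $i_d \in \llbracket n \rrbracket$ realises a forced sum of $n$ on its own, the inherited ``same-form'' invariant makes every entry of $V$ formally unconstrained, so the invariant alone is not enough. The argument must track the specific coefficients $\alpha(\vec{i})$ threaded through the cascade of reductions and show that in $V(i_d) = \sum_{\vec{i}} \alpha(\vec{i})\,T(\vec{i}, i_d)$ at least one component picks up a forced-$1$ contribution (say from entries with $|\vec{i}| = 1$ when $i_d = n-1$) whose coefficient cannot be annihilated by adversarial choices of the unconstrained values of $T$. An alternative route I would consider is to stop after $d-2$ reductions and argue that the resulting matrix $M'$ has $n$ (rather than only $n-1$) linearly independent layers, permitting one final use of Corollary~\ref{cor:layerreduction} with $m = n$ to close the gap; either route reduces to the same delicate task of controlling the coefficients produced by the iterated reductions.
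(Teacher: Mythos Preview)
Your approach is essentially identical to the paper's: both proceed by induction on $d$, establish linear independence of the layers $T_1,\ldots,T_{n-1}$ by evaluating at corner points $(n-i,0,\ldots,0)$, apply Corollary~\ref{cor:layerreduction} with $m=n-1$, and verify that the surviving combination $T_0+\sum_i a_iT_i$ inherits the same structural hypothesis as an order-$(d-1)$ tensor. The only divergence is your handling of the final ``$+1$''. The paper treats this as the base case $d=1$ of the induction and simply asserts the resulting vector is nonzero; it does \emph{not} track coefficients through the cascade, stop early at a matrix, or do any of the bookkeeping you propose.

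You are right, however, to flag this step as delicate. As the proposition is literally stated, the $d=1$ hypothesis is vacuous---no single index in $\llbracket n\rrbracket$ sums to $n$---so the invariant you (and the paper) carry does not by itself force the base-case vector to be nonzero, and indeed for $d=2$, $n=2$ the tensor $T=\bigl(\begin{smallmatrix}0&0\\0&1\end{smallmatrix}\bigr)$ satisfies the hypothesis but has rank $1<(d-1)(n-1)+1$. The clean resolution is not coefficient tracking but an off-by-one correction: with the threshold $n-1$ in place of $n$ (equivalently, indices ranging over $[n]$ rather than $\llbracket n\rrbracket$), the forced entry survives down to $d=1$ as $T(n-1)=1$, the corner-point argument shows all $n$ layers independent, and the paper's bare induction goes through without any of the extra machinery you outline.
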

\begin{proof}
	The proof is by induction on $d$, using Corollary~\ref{cor:layerreduction} to achieve a lower bound.
	
	\underline{$d=1$:} As $T\ne 0$, its rank must be at least 1, so the result follows.

	\underline{$d>1$:} Decompose $T$ into layers along the $d$-th axis, so that
	$T=[T_0|\cdots|T_{n-1}]$.  Observe that the hypothesis on $T$ implies that for any linear
	combination $S=\sum_{k=0}^{n-1} c_i T_i$ it must be that $S\ne 0$.  For if not, one may consider
	the smallest $i$ such that $c_i\ne 0$.  Then $S(n-i,0,\cdots,0)=c_i$ by the hypothesis on
	$T$ and the construction of $S$, which is a
	contradiction as $c_i\ne 0$. 

	Thus, Corollary~\ref{cor:layerreduction} implies that $\rank_\mathbb{F}(T)\ge
	\rank_\mathbb{F}(T_0+\sum_{i=1}^{n-1} a_i T_i)+(n-1)$ for some $a_i\in\mathbb{F}$. However,
	observing that $T'=T_0+\sum_{i=1}^{n-1} a_i T_i$ is an order-$(d-1)$ tensor fitting the
	hypothesis of the induction we see that $\rank_\mathbb{F}(T')\ge (d-2)(n-1)+1$.  Combining
	the above equations finishes the induction.
\end{proof}

The above proposition shows that Proposition~\ref{prop:interpolate} is nearly tight, because
together they show that defining $T:\llbracket n\rrbracket^d\to\mathbb{F}$ by
$T(i_1,\ldots,i_d)=\lib i_1+\cdots+i_d=n\rib$, we see that $(d-1)(n-1)+1\le\rank_\mathbb{F}(T)\le
d(n-1)+1$.

Proposition~\ref{prop:interpolate} was done by interpolating a univariate polynomial.  By
interpolating multivariate polynomials one may obtain an upper bound for the
rank over group tensors arising from the direct product of cyclic groups.
However, the same result is derivable in a more modular fashion, which we now
present.  We start with the folklore fact that tensoring two tensors multiplies their
rank bounds.

\begin{lem}
	Let $\mathbb{F}$ be a field. Let $T:[n]^d\to \mathbb{F}$ and
	$S:[m]^d\to\mathbb{F}$ be two tensors.  Define
	$(T\otimes S):([n]\times [m])^d\to\mathbb{F}$ by
	\begin{equation*}
		(T\otimes S)((i_1,i'_1),\ldots,(i_d,i'_d))=T(i_1,\ldots,i_d)\cdot S(i'_1,\ldots,i'_d)
	\end{equation*}
	Then $\rank_\mathbb{F}(T\otimes S)\le
	\rank_\mathbb{F}(T)\rank_\mathbb{F}(S)$.
	\label{lem:tensoring-rank}
\end{lem}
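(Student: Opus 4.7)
The plan is to directly produce a simple-tensor decomposition of $T\otimes S$ whose length is the product of the ranks of $T$ and $S$, by taking minimal decompositions of each factor and combining them termwise.

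First I would fix minimal decompositions $T=\sum_{k=1}^{r}\bigotimes_{j=1}^{d}\vec{u}_{j,k}$ with $\vec{u}_{j,k}\in\mathbb{F}^{n}$ and $r=\rank_{\mathbb{F}}(T)$, and $S=\sum_{l=1}^{s}\bigotimes_{j=1}^{d}\vec{v}_{j,l}$ with $\vec{v}_{j,l}\in\mathbb{F}^{m}$ and $s=\rank_{\mathbb{F}}(S)$. For each pair $(j,k,l)$, I would then define $\vec{w}_{j,k,l}\in\mathbb{F}^{nm}$, indexed by pairs $(i,i')\in[n]\times[m]$, by $\vec{w}_{j,k,l}(i,i'):=\vec{u}_{j,k}(i)\cdot\vec{v}_{j,l}(i')$. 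This is the natural Kronecker product of the two vectors, viewed as an element of $\mathbb{F}^{n}\otimes\mathbb{F}^{m}\cong\mathbb{F}^{nm}$ via the obvious indexing bijection (which is exactly the same indexing bijection used to view $T\otimes S$ as a tensor on $([n]\times[m])^{d}$).

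Next I would verify the key pointwise identity
\begin{equation*}
(T\otimes S)\bigl((i_{1},i'_{1}),\ldots,(i_{d},i'_{d})\bigr)=\sum_{k=1}^{r}\sum_{l=1}^{s}\prod_{j=1}^{d}\vec{w}_{j,k,l}(i_{j},i'_{j}),
\end{equation*}
which follows by expanding the right-hand side using the definition of $\vec{w}_{j,k,l}$, separating the $\vec{u}$ and $\vec{v}$ factors across the product in $j$, and recognizing the two resulting sums as $T(i_{1},\ldots,i_{d})$ and $S(i'_{1},\ldots,i'_{d})$. That identity is precisely the statement that $T\otimes S=\sum_{k=1}^{r}\sum_{l=1}^{s}\bigotimes_{j=1}^{d}\vec{w}_{j,k,l}$ as tensors in $\bigotimes_{j=1}^{d}\mathbb{F}^{nm}$, exhibiting a decomposition into $rs$ simple tensors and yielding $\rank_{\mathbb{F}}(T\otimes S)\le rs=\rank_{\mathbb{F}}(T)\rank_{\mathbb{F}}(S)$.

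There is essentially no obstacle beyond bookkeeping: the only thing to be careful about is that the identification $\mathbb{F}^{n}\otimes\mathbb{F}^{m}\cong\mathbb{F}^{nm}$ used in the definition of $\vec{w}_{j,k,l}$ is the same one implicit in reading $(T\otimes S)$ as a tensor with coordinate-pair indices $(i_{j},i'_{j})$, so the products line up. I would probably note this correspondence explicitly in one line, and the rest of the proof is a direct computation.
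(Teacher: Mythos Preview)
Your proposal is correct and is essentially identical to the paper's proof: both take minimal decompositions of $T$ and $S$, form the Kronecker products $\vec{w}_{j,k,l}=\vec{u}_{j,k}\otimes\vec{v}_{j,l}$ of the corresponding factor vectors, and verify pointwise that $T\otimes S=\sum_{k,l}\bigotimes_{j}\vec{w}_{j,k,l}$. The paper's write-up is terser (and slightly sloppier in its indexing), but the argument is the same.
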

\begin{proof}
	Suppose $T=\sum_{l=1}^r\otimes_{j=1}^d \vec{a}_{j,l}$ and
	$S=\sum_{l'=1}^{r'}\otimes_{j'=1}^d \vec{a}'_{j',l'}$.  Then
	$T(i_1,\ldots,i_d)=\sum_{l}\prod_j \vec{a}_{j,l}(i_j)$ and
	$S(i'_1,\ldots,i'_d)=\sum_{l'} \prod_{j'}\vec{a}'_{j',l'}(i'_{j'})$.
	Thus, $(T\otimes
	S)((i_1,i'_1),\ldots,(i_d,i'_d))$ equals $ \sum_{l,l'}\prod_{j,j'}
	\vec{a}_{j,l}(i_j) \vec{a}'_{j',l'}(i'_{j'}) =\sum_{l,l'}\prod_{(j,j')}
	(\vec{a}_{j,l}\otimes \vec{a}'_{j',l'})_{i_j,i'_{j'}}$.  Thus, as for
	fixed $l,l'$ the tensor $\prod_{(j,j')}
	(\vec{a}_{j,l}\otimes \vec{a}'_{j',l'})_{i_j,i'_{j'}}$ is simple (as a
	$([n]\times[m])^d\to\mathbb{F}$ tensor), this
	shows the claim.
\end{proof}

We now apply this to the direct product construction of groups.

\begin{cor}
	Consider integers $n_1,\ldots,n_m\in\mathbb{Z}_{\ge2}$ and consider the
	finite abelian group $G=\mathbb{Z}_{n_1}\times\cdots\times\mathbb{Z}_{n_m}$.
	Let $\mathbb{F}$ be a field with at least $\max_i(d(n_i-1)+1)$ elements. Then,
	$\rank(T_{G}^d)\le \prod_i(d(n_i-1)+1)$.
	\label{cor:interpolate-prod-of-cycle}
\end{cor}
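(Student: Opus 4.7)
The plan is to combine the cyclic-group upper bound of Corollary~\ref{cor:interpolate-cyclic} with the submultiplicativity of rank under tensoring (Lemma~\ref{lem:tensoring-rank}). The key observation is that group tensors respect direct products of groups, so $T_G^d$ factors as a tensor product of the corresponding cyclic group tensors.

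First I would verify this factorization. Writing each $g_k \in G$ as a tuple $(g_k^{(1)},\ldots,g_k^{(m)})$ with $g_k^{(i)} \in \mathbb{Z}_{n_i}$, the condition $g_1 \cdots g_d = 1_G$ holds coordinate-wise, i.e., iff $g_1^{(i)} + \cdots + g_d^{(i)} = 0$ in $\mathbb{Z}_{n_i}$ for every $i \in [m]$. Hence the indicator function factors as
\begin{equation*}
T_G^d(g_1,\ldots,g_d) \;=\; \prod_{i=1}^m T_{\mathbb{Z}_{n_i}}^d\bigl(g_1^{(i)},\ldots,g_d^{(i)}\bigr),
\end{equation*}
which, under the natural identification of $G^d$ with $\prod_i (\mathbb{Z}_{n_i})^d$, is exactly the iterated tensor product $T_{\mathbb{Z}_{n_1}}^d \otimes \cdots \otimes T_{\mathbb{Z}_{n_m}}^d$ in the sense of Lemma~\ref{lem:tensoring-rank}.

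Next, the hypothesis $|\mathbb{F}| \ge \max_i(d(n_i-1)+1)$ lets me apply Corollary~\ref{cor:interpolate-cyclic} to each cyclic factor, yielding $\rank_\mathbb{F}(T_{\mathbb{Z}_{n_i}}^d) \le d(n_i-1)+1$ for every $i$. Iterating Lemma~\ref{lem:tensoring-rank} by a trivial induction on $m$ then gives
\begin{equation*}
\rank_\mathbb{F}(T_G^d) \;\le\; \prod_{i=1}^m \rank_\mathbb{F}\bigl(T_{\mathbb{Z}_{n_i}}^d\bigr) \;\le\; \prod_{i=1}^m \bigl(d(n_i-1)+1\bigr),
\end{equation*}
which is the desired bound.

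There is no real obstacle here; the only step that requires care is the bookkeeping around the identification of the index sets when iterating the construction of Lemma~\ref{lem:tensoring-rank}, but this is purely notational since that lemma builds its output tensor on precisely the product index set we need. The field-size hypothesis is slightly wasteful (each cyclic factor only needs $d(n_i-1)+1$ elements, but we demand the maximum), which is consistent with the fact that Corollary~\ref{cor:interpolate-abel-group} uses this proposition only as a stepping stone before invoking the field-transfer lemma to handle small fields.
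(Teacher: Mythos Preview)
Your proof is correct and follows exactly the paper's approach: observe that $T_{G}^d = T_{\mathbb{Z}_{n_1}}^d \otimes \cdots \otimes T_{\mathbb{Z}_{n_m}}^d$, then combine Corollary~\ref{cor:interpolate-cyclic} with Lemma~\ref{lem:tensoring-rank}. If anything, you have written out more detail than the paper's own proof, which simply asserts the factorization and appeals to the two cited results.
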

\begin{proof}
	First observe the relevant definitions imply that $T_{\mathbb{Z}_n}^d\otimes
	T_{\mathbb{Z}_m}^d=T_{\mathbb{Z}_n\times\mathbb{Z}_m}^d$.  Thus, the
	claim follows directly from Corollary~\ref{cor:interpolate-cyclic} and
	Lemma~\ref{lem:tensoring-rank}.
\end{proof}

We now recall the Structure Theorem of Finite Abelian Groups.

\begin{thm}[Structure Theorem of Finite Abelian Groups (see, e.g.\
\cite{artin})]
	Let $G$ be a finite abelian group.  Then there are (not necessarily
	distinct) prime powers $n_1,\ldots,n_m\in\mathbb{Z}_{\ge 2}$ such that
	$G=\mathbb{Z}_{n_1}\times\cdots\times\mathbb{Z}_{n_m}$.
\end{thm}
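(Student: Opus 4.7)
The plan is to prove the Structure Theorem in two stages: a primary decomposition reducing to abelian $p$-groups, followed by a cyclic decomposition within each $p$-group. For the first stage, writing $|G|=p_1^{a_1}\cdots p_k^{a_k}$, I would define the $p_i$-primary part $G_{p_i}=\{g\in G : p_i^{a_i}g = 0\}$ and show $G=G_{p_1}\oplus\cdots\oplus G_{p_k}$ internally. Setting $q_i = |G|/p_i^{a_i}$, the integers $q_1,\ldots,q_k$ share no common prime factor, so Bezout yields $r_i\in\mathbb{Z}$ with $\sum_i r_iq_i = 1$, giving the decomposition $g=\sum_i (r_iq_i)g$ with each summand in the corresponding $G_{p_i}$ by Lagrange. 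Directness of the sum follows because the orders of nonzero elements in distinct primary components are coprime, forcing pairwise trivial intersection.

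The heart of the proof is the second stage: every finite abelian $p$-group $H$ is a direct sum of cyclic $p$-groups. I would induct on $|H|$ via the following claim: \emph{if $g\in H$ has maximal order $p^e$, then $\langle g\rangle$ is a direct summand of $H$.} Given the claim, writing $H=\langle g\rangle\oplus K$ and applying the induction hypothesis to the smaller abelian $p$-group $K$ produces the desired cyclic decomposition; each cyclic factor $\mathbb{Z}_{p^{e_j}}$ is itself of prime-power order, matching the statement of the theorem.

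To prove the summand claim, I would use an inner induction on $|H|$. If $H\ne\langle g\rangle$, pick $h\in H\setminus\langle g\rangle$ of minimal order $p^f$. Minimality forces $ph\in\langle g\rangle$, say $ph=cg$; maximality of $p^e$ combined with $p^{f-1}(ph)=0$ and $f\le e$ yields $p\mid c$, so setting $h' = h - (c/p)g$ gives an element of exact order $p$ lying outside $\langle g\rangle$. Passing to the quotient $H/\langle h'\rangle$, one checks that the image of $g$ still has order $p^e$ (any relation $p^j g\in\langle h'\rangle$ with $j<e$ would force $h'\in\langle g\rangle$, a contradiction), so the inductive hypothesis produces a complement $\bar K$ there; lifting back to $K\supseteq \langle h'\rangle$ gives $H=\langle g\rangle\oplus K$. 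The main obstacle is precisely this ``order is preserved in the quotient'' verification, which is the only nontrivial check needed to close the induction; everything else is routine bookkeeping, and combining the two stages yields the prime-power cyclic decomposition as claimed.
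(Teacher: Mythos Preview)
Your proof sketch is correct and follows one of the standard textbook routes to the structure theorem (primary decomposition via Bezout, then splitting off a maximal-order cyclic in each $p$-group by induction, using an order-$p$ element to pass to a smaller quotient). The one step you flagged as the main obstacle---that the image of $g$ in $H/\langle h'\rangle$ retains order $p^e$---is handled exactly as you indicate: if $p^j g = m h'$ with $j<e$, then either $p\nmid m$ forces $h'\in\langle g\rangle$, or $p\mid m$ forces $p^j g=0$; both are contradictions. The lift of the complement also works: with $K$ the full preimage of $\bar K$, any $x\in\langle g\rangle\cap K$ has trivial image in the quotient, hence lies in $\langle h'\rangle$, and then the same dichotomy on $m$ gives $x=0$.

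As for comparison with the paper: the paper does not prove this theorem at all. It is quoted as a classical result with a citation to a standard algebra text and is used only as a black box to feed into Corollary~\ref{cor:interpolate-prod-of-cycle}. So there is nothing to compare against; you have supplied a complete (and correct) argument where the paper deliberately outsourced one.
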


This theorem shows that Corollary~\ref{cor:interpolate-prod-of-cycle} extends to general
groups.  One can get better bounds if more information is known about the group,
of if results such as Theorem~\ref{thm:repthyrank} apply, but the next result
shows that even without such information group tensors from finite abelian
groups have ``low'' rank.

\begin{proof}[Proof of Corollary~\ref{cor:interpolate-abel-group}]
	Observe that using the Structure Theorem of Finite Abelian groups, we
	can apply Corollary~\ref{cor:interpolate-prod-of-cycle} to
	$G=\prod_{i=1}^m
	\mathbb{Z}_{n_i}$, and
	using that $d(n_i-1)+1\le dn$ (as $d\ge 2$) shows that
	$\rank_\mathbb{F}(T_G^d)\le
	d^m\prod n_i$.  As $m\le \lg|G|$ and $|G|=\prod n_i$, the result
	follows.
\end{proof}

\begin{proof}[Proof of Lemma~\ref{lem:changefields}]
	Define $m:=\dim_\mathbb{F}\mathbb{K}$. Thus, we can identify
	$\mathbb{K}=\mathbb{F}^m$ as vector spaces, where we choose that
	$1\in\mathbb{K}$ is the first element in the $\mathbb{F}$-basis for
	$\mathbb{K}$.  This gives an $\mathbb{F}$-algebra
	isomorphism $\mu$ between $\mathbb{K}$ and a sub-ring $M$ of the $m\times m$
	$\mathbb{F}$-matrices, where $M$ is defined as the image of $\mu$.  The
	map $\mu$ is defined by associating
	$x\in\mathbb{K}$ with the matrix inducing the linear map
	$\mu(x):\mathbb{F}^m\to\mathbb{F}^m$, where $\mu(x)$ is the multiplication
	map of $x$.  That is, using that $\mathbb{K}=\mathbb{F}^m$ we can see
	that the map $y\mapsto xy$ for $y\in\mathbb{K}$ is an
	$\mathbb{F}$-linear map, and thus defines $\mu(x)$ over $\mathbb{F}^m$.

	That the map is injective follows from the fact that $\mu(x)$ must map
	$1\in\mathbb{K}$ to $x\in\mathbb{K}$, so $x$ is recoverable from
	$\mu(x)$ (and surjectivity follows be definition of $M$).  To see the
	required homomorphism properties is also not difficult.  As
	$(x+z)y=xy+zy$ for any $x,y,z\in\mathbb{K}$, this shows that
	$\mu(x+z)=\mu(x)+\mu(z)$ as linear maps, and thus as matrices.
	Similarly, as $(xz)y=x(zy)$ for any $x,y,z\in\mathbb{K}$ it must be that
	$\mu(xz)=\mu(x)\mu(z)$.  That this map interacts linearly in
	$\mathbb{F}$ implies that it is an $\mathbb{F}$-algebra homomorphism, as
	desired.

	Now consider a tensor $T:[n]^d\to\mathbb{F}$ with simple tensor
	decomposition $T=\sum_{l=1}^{\rank_\mathbb{K}(T)} \otimes_{j=1}^d \vec{a}_{j,l}$ over
	$\mathbb{K}$.  First observe that if we define the map
	$\pi:\mathbb{K}\to\mathbb{F}$ defined by
	\begin{equation*}
		\pi(x)=\begin{cases}
			x	&\text{if }x\in\mathbb{F}\\
			0	&\text{else}
			\end{cases}
	\end{equation*}
	then $T=\sum_{l=1}^{\rank_\mathbb{K}(T)} \pi(\otimes_{j=1}^d \vec{a}_{j,l})$.  Thus for each $l$,
	$\pi(\otimes_{j=1}^d \vec{a}_{j,l})$ is a tensor
	$T_l:[n]^d\to\mathbb{F}$.
	
	We now show
	that $\rank_\mathbb{F} (T_l)\le m^{d-1}$.  First observe that for
	$x\in\mathbb{F}$, $\mu(x)$ is a diagonal matrix.  In particular, because
	we chose $1\in\mathbb{K}$ to the first element in the $\mathbb{F}$-basis for $\mathbb{K}$,
	for $x\in\mathbb{F}$, $\pi(x)$ is equal to the
	$(1,1)$-th entry in $\mu(x)$.  Thus, it follows that
	$T_l(i_1,\ldots,i_d)=\left(\mu(\vec{a}_{1,l}(i_1))\cdots
	\mu(\vec{a}_{d,l}(i_d))\right)_{1,1}$.  
	By expanding out the matrix multiplication
	we can see that $T_l$ is
	expressible as
	\begin{equation*}
		T_l(i_1,\ldots,i_d)= \sum_{k_1=1}^m\cdots\sum_{k_{d-1}=1}^m
		\mu(\vec{a}_{1,l}(i_1))_{1,k_1}\cdot
		\mu(\vec{a}_{2,l}(i_2))_{k_1,k_2}\cdots
		\mu(\vec{a}_{d-1,l}(i_{d-1}))_{k_{d-2},k_{d-1}}\cdot
		\mu(\vec{a}_{d,l}(i_d))_{k_{d-1},1}
	\end{equation*}
	and just as in Theorem~\ref{thm:repthyrank} we see that for fixed $k_j$
	the summands are simple $\mathbb{F}$-tensors, and thus
	$\rank_{\mathbb{F}}(T_l)\le m^{d-1}$.
	
	Using the observation that $T=\sum_{l=1}^{\rank_\mathbb{K}(T)} T_l$
	and the above bound for the $\mathbb{F}$-rank of $T_l$, we then see that
	$\rank_\mathbb{F}(T)\le
	(\dim_\mathbb{F}\mathbb{K})^{d-1}\rank_\mathbb{K}(T)$, as desired.
\end{proof}

\section{Proofs of Section~\ref{sect:monotone}}\label{sect:monotoneproofs}

\begin{proof}[Proof of Theorem~\ref{thm:monotone}]
	\underline{$\mrank_\mathbb{F}(T_{\mathbb{Z}_n}^d)\ge n^{d-1}$:} We remark that the following
	lower bound will only rely on the fact that $T_{\mathbb{Z}_n}^d$ is a permutation tensor,
	and no other properties.  
	
	In monotone
	computation, there is no cancellation of terms. Thus, in a monotone simple tensor
	decomposition $T=\sum_{l=1}^r T_l$, one can see that the partial sums $T_{\le
	m}=\sum_{l=1}^m T_l$ successively cover more and more of the non-zero entries of $T$.  We
	will show that in any monotone decomposition of $T_{\mathbb{Z}_n}^d$, at most one non-zero
	entry can be covered by any $T_l$, which implies that the monotone rank is at least the
	number of non-zero entries, which is $n^{d-1}$.

	We now prove that in any monotone simple tensor decomposition
	$T_{\mathbb{Z}_n}^d=\sum_{l=1}^r\otimes_{j=1}^d \vec{a}_{j,l}$, each simple tensor
	$T_l:=\otimes_{j=1}^d \vec{a}_{j,l}$ can cover at most one non-zero entry of
	$T_{\mathbb{Z}_n}^d$.  Suppose not, for contradiction.  Then there is a simple tensor $T_l$
	that covers at least two non-zero entries $(i_1,\ldots,i_d)$ and $(i'_1,\ldots,i'_d)$ of
	$T_{\mathbb{Z}_n}^d$.  However, these tuples must differ in at least one index, we we assume
	without loss of generality to be index 1, so that $i_1\ne i'_1$.  Consequently, we must have that
	$\vec{a}_{1,l}(i_1),\vec{a}_{1,l}(i'_1)> 0$ (as all field constants are positive in monotone
	computation).  As
	$\vec{a}_{j,l}(i_j)> 0$ for $j>1$ (as
	$T_l(i_1,i_2,\ldots,i_d)=1$), it must be that $T_l(i'_1,i_2,\ldots,i_d)=\vec{a}_{1,l}(i'_1)
	\prod_{j>1}^d\vec{a}_{j,l}(i_j)> 0$.  However, this is a contradiction.  For now this
	positive number at $T_l(i'_1,i_2,\ldots,i_d)$ cannot be canceled out by other simple tensors
	in a monotone computation and we must have $T_{\mathbb{Z}_n}^d(i'_1,i_2,\ldots,i_d)=0$ by
	the fact that this is a permutation tensor.  Thus, it must be that each simple tensor in
	this monotone computation can only cover a single non-zero entry of $T_{\mathbb{Z}_n}^d$,
	which implies the lower bound by the above argument.
\end{proof}

\end{document}